\newenvironment{spmatrix}[1]
 {\def\mysubscript{#1}\mathop\bgroup\begin{pmatrix}}
 {\end{pmatrix}\egroup_{\textstyle\mathstrut\mysubscript}}
\newcommand{\np}{${\bf{NP}}$\xspace}
\newcommand{\greynode}{{\color{gray} \bullet}}
\newcommand{\npda}{${\bf{2NPDA}}$\xspace}
\newcommand{\myie}{\emph{i.e.}\xspace}
\newcommand{\myeg}{\emph{e.g.}\xspace}
\newcommand{\mywrt}{\emph{w.r.t.}\xspace}
\newcommand{\codeIn}[1]{{\small\tt{#1}}}
\newcommand{\codeInM}[1]{{{#1}}}
\newcolumntype{\expand}{}
\long\@namedef{NC@rewrite@\string\expand}{\expandafter\NC@find}
  \def\problem@arg{#1}%
  \def\problem@framed{framed}%
  \def\problem@lined{lined}%
  \def\problem@doublelined{doublelined}%
    \def\problem@hline{}%
      \def\problem@hline{\hline\hline}%
      \def\problem@hline{\hline}%
    \def\problem@tablelayout{|>{\bfseries}lX|c}%
    \def\problem@title{\multicolumn{2}{|l|}{%
        \raisebox{-\fboxsep}{\textsc{\Large #2}}%
      }}%
    \def\problem@tablelayout{>{\bfseries}lXc}%
    \def\problem@title{\multicolumn{2}{l}{%
        \raisebox{-\fboxsep}{\textsc{\Large #2}}%
      }}%
\newtheorem{conjecture}{Conjecture}
\newtheorem{theorem}{Theorem}
\newtheorem{corollary}{Corollary}
\newtheorem{definition}{Definition}
\newtheorem{example}{Example}
\newtheorem{lemma}{Lemma}
\begin{document}

%% Title information
\title{Conditional Lower Bound for Inclusion-Based Points-to Analysis}         %% [Short Title] is optional;
                                        %% when present, will be used in
                                        %% header instead of Full Title.
%\titlenote{with title note}             %% \titlenote is optional;
                                        %% can be repeated if necessary;
                                        %% contents suppressed with 'anonymous'
%\subtitle{From a Programming Perspective}                     %% \subtitle is optional
%\subtitlenote{with subtitle note}       %% \subtitlenote is optional;
                                        %% can be repeated if necessary;
                                        %% contents suppressed with 'anonymous'

%% Author information
%% Contents and number of authors suppressed with 'anonymous'.
%% Each author should be introduced by \author, followed by
%% \authornote (optional), \orcid (optional), \affiliation, and
%% \email.
%% An author may have multiple affiliations and/or emails; repeat the
%% appropriate command.
%% Many elements are not rendered, but should be provided for metadata
%% extraction tools.

%% Author with single affiliation.

\author{Qirun Zhang\thanks{Georgia
    Institute of Technology, School of Computer Science, {\tt qrzhang@gatech.edu}. }} 
%% Paper note
%% The \thanks command may be used to create a "paper note" ---
%% similar to a title note or an author note, but not explicitly
%% associated with a particular element.  It will appear immediately
%% above the permission/copyright statement.
%\thanks{with paper note}                %% \thanks is optional
                                        %% can be repeated if necesary
                                        %% contents suppressed with 'anonymous'

\maketitle
%% Abstract
%% Note: \begin{abstract}...\end{abstract} environment must come
%% before \maketitle command
\begin{abstract}
  %% Context: State the problem; say why it's an interesting problem

  %% Gap: Explain why existing work does not adequately solve it

  %% Innovation: State what you’ve done that is new, and explain how it helps fill the gap
Inclusion-based (\myie, Andersen-style) points-to analysis  is a fundamental static
analysis problem. The seminal work of Andersen gave a worst-case cubic $O(n^3)$
time points-to analysis algorithm for C, where $n$ is proportional to the number of
variables in the input program. 
From an algorithmic perspective, an algorithm is \emph{truly subcubic} if
it runs in $O(n^{3-\delta})$ time
 for some $\delta > 0$.
Despite decades of extensive effort on improving
points-to analysis, 
the cubic bound remains unbeaten.
The best combinatorial analysis
algorithms have a ``slightly subcubic'' $O(n^3 /
\text{log } n)$ complexity which improves Andersen's original algorithm by only
a log factor.
It is an interesting open problem whether inclusion-based points-to analysis can
be solved in truly subcubic time.

In this paper, we prove that a truly subcubic $O(n^{3-\delta})$ time
combinatorial algorithm for inclusion-based points-to
analysis is unlikely:  a truly subcubic combinatorial points-to analysis algorithm  implies a truly subcubic combinatorial algorithm for Boolean Matrix
Multiplication (BMM). BMM is a well-studied problem, 
and no truly subcubic combinatorial BMM algorithm 
has been known.
The fastest
combinatorial BMM algorithms  run in time
$O(n^3/ \text{log}^4 n)$.

%% Dyck language generates a set of well-balanced parentheses, and
%% Dyck-reachability plays a pivotal role in our proof.

Our conditional lower bound result  includes a
simplified proof of the BMM-hardness of Dyck-reachability. The reduction is
interesting in its own right. First, it is slightly stronger than the existing
BMM-hardness results of Dyck-reachability because our reduction only requires
one type of parenthesis in Dyck-reachability ($D_1$-reachability). Second, we
formally attribute the ``cubic bottleneck'' of points-to analysis to
the need to solve $D_1$-reachability, which captures the  semantics of properly-balanced pointer
references/dereferences. This new perspective enables a more general
reduction that applies to programs with arbitrary types of pointer statements.
Last, our reduction based on $D_1$-reachability shows that demand-driven points-to
analysis is as hard as the exhaustive counterpart. The hardness result 
generalizes to a wide variety of demand-driven interprocedural program analysis
problems.

\end{abstract}

%% End of generated code

%% Keywords
%% comma separated list
%\keywords{keyword1, keyword2, keyword3}  %% \keywords is optional

%% \maketitle
%% Note: \maketitle command must come after title commands, author
%% commands, abstract environment, Computing Classification System
%% environment and commands, and keywords command.

%% see If the Current Clique Algorithms are Optimal,
%% so is Valiant’s Parser

%% see Conditional Hardness for Sensitivity Problems, when write intro.

%% use in PL
%% Conditionally Optimal Algorithms for Generalized
%% Büchi Games

\section{Introduction}\label{sec:intro}
Points-to analysis is a fundamental static analysis problem. Points-to
information is a prerequisite for many practical program
analyses. Points-to analysis is computationally hard. 
It is well-known that computing the precise points-to information is undecidable~\cite{Ramalingam94the}. Even for the simplest (\myie,
context- and flow-insensitive) variant, the precise analysis problem is known
to be \np-hard~\cite{Horwitz97precise}. 
Any practical points-to analysis must approximate the exact solution. In the
literature, two predominant frameworks for computing  sound points-to information  are
equality-based (\myie, Steensgaard-style)~\cite{Steensgaard96points} and 
inclusion-based (\myie, Andersen-style) points-to analyses~\cite{andersen1994program}.

%\footnote{Unless otherwise noted, we restrict the  discussion to  flow- and
%context-insensitive points-to analysis.} 

%It computes a more precise sound approximation compared
%with equality-based (\myie, Steensgaard-style) analysis.  

Inclusion-based points-to analysis~\cite{andersen1994program} is more precise than equality-based
analysis.  A study by \citet{BlackshearCSS11the} shows that ``The
precision gap between Andersen's and precise flow-insensitive analysis is non-existent in practice.''
An inclusion-based points-to analysis collects a set of inclusion constraints
from the input program and constructs a constraint graph.
It obtains the points-to information by computing a fixed point solution over the
corresponding constraint graph~\cite{FahndrichFSA98partial,HardekopfL07the,SridharanF09the}.
All existing points-to analysis algorithms are \emph{combinatorial} in the sense
that they are discrete and graph-theoretic.
However, computing the closure of the constraint graph is quite expensive.
Despite decades of
research, the fastest algorithm for inclusion-based
points-to analysis  exhibits an ``slightly subcubic'' $O(n^3 / \text{log }n)$ time
complexity~\cite{SridharanF09the,Chaudhuri08subcubic}. 
In practice, many studies have observed a quadratic scaling behavior for inclusion-based
pointer analysis~\cite{SridharanF09the}. 
It is open whether inclusion-based points-to analysis problem admits \emph{truly
subcubic} algorithms, \myie, algorithms with running time $O(n^{3-\delta})$ for some constant
$\delta >0$. 
%To our knowledge, no existing work formally establishes any lower bound
%  for the inclusion-based points-to analysis problem. 

In this paper, we prove a conditional lower
bound for the inclusion-based points-to analysis problem, which shows that a truly subcubic
combinatorial algorithm is unlikely to exist. 
Our hardness result is based on
the popular Boolean Matrix Multiplication (BMM) conjecture:
\begin{conjecture}[Boolean Matrix Multiplication~\cite{AbboudW14popular,WilliamsW18subcubic}]\label{conj:bmm}
For all
$\delta >0$, there exists no combinatorial algorithm that computes the product of two
$n\times n$ Boolean matrices in  time $O(n^{3-\delta})$.
\end{conjecture}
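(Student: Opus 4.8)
The plan here is shaped by an essential feature of the statement: it is a \emph{conjecture}, asserting a lower bound against an entire—and deliberately informally specified—class of algorithms. It is therefore not something that can be established outright, and indeed no unconditional proof is known. I would accordingly treat the task in two parts: first ask what a genuine proof would have to look like, and then explain why the conjecture is nonetheless a safe hypothesis on which to build the paper's hardness result.

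A genuine proof would first have to pin down the notion of a \emph{combinatorial} algorithm, which is intentionally left loose in the literature: it is meant to exclude the algebraic fast-matrix-multiplication techniques (Strassen, Coppersmith--Winograd, and their successors) that achieve $O(n^{\omega})$ time with $\omega < 3$ but rely on algebraic cancellation, enormous hidden constants, and ring structure. One natural route is to fix a restricted model—say, algorithms built from set operations and word-parallel bit tricks, or a suitable bounded-fan-in circuit or word-RAM model that forbids such cancellation—and then prove an $\Omega(n^{3-o(1)})$ lower bound for Boolean matrix product within it. The key steps would be: (i) fix the model precisely; (ii) isolate a combinatorial invariant that any correct algorithm must pay for, for example a communication- or information-theoretic quantity, or an adversary argument counting the output entries whose witnesses must be examined; and (iii) show that this invariant forces $n^{3-o(1)}$ work.

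The main obstacle is step (iii), and it is a fundamental one. Proving superlinear time lower bounds for explicit problems in any sufficiently general model of computation is a notorious open problem, and near-cubic lower bounds for Boolean matrix product are not known even in restricted combinatorial models; this is precisely why the statement remains a conjecture rather than a theorem. What makes it a reasonable assumption is the weight of indirect evidence: decades of effort have not broken the cubic barrier combinatorially, the best combinatorial algorithms (in the Four Russians style) shave only polylogarithmic factors, reaching $O(n^3/\log^4 n)$, and Boolean matrix product sits at the center of a large class of problems that are mutually equivalent under subcubic reductions. Consequently, rather than attempting to prove Conjecture~\ref{conj:bmm}, I would adopt it as a hypothesis and use it as the source of hardness—exactly the role it plays in the conditional lower bound that the remainder of the paper establishes.
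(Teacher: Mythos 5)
You correctly recognize that this statement is a conjecture the paper adopts as an unproven hypothesis: the paper offers no proof, merely citing the literature and using it as the assumption underlying its conditional lower bounds (Theorems~\ref{thm:bmm1} and~\ref{thm:bmm2}). Your treatment—explaining why no proof is available and adopting the conjecture as the source of hardness—matches the paper's approach exactly.
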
The conjecture states 
that in the
RAM model with $O(\log n)$ bit words, any combinatorial  BMM algorithm  requires
$n^{3-o(1)}$ time~\cite{ww18onsome}. 
The BMM conjecture has been utilized to prove  fine-grained lower bounds
of many problems in theoretical computer science~\cite{HenzingerKNS15unifying,AbboudW14popular,WilliamsW18subcubic} and program
analysis/verification~\cite{ChatterjeeCP18opt,ChatterjeeDHL16model,Lee02fast}. 
Note that BMM can be solved in truly subcubic time using the heavy machinery of fast matrix multiplication (FMM)
 originated by Strassen~\cite{strassen1969gaussian}.
The current fastest algorithms run in $O(n^{2.373})$
time~\cite{Gall14apowers,Williams12mult}.
However, those algorithms based on
FMM are \emph{algebraic} which rely on the ring structure of matrices over the
field.  Algebraic approaches have enjoyed little success in
practice~\cite{AbboudBW18if,Yu15an}.
In particular, we are unaware of any practical program analyses that are based on FMM.
On the other hand,
inclusion-based points-to analysis has been
%does not use subtraction over a finite
%field. Instead, it has been 
formulated as  a combinatorial problem of resolving inclusion constraints (via a
sequence of set-union and table-lookup operations). All existing pointer
analysis algorithms are combinatorial  in nature. 
Moreover,  combinatorial
algorithms are practical and
avoid FMM or other
``Strassen-style''
methods.\footnote{In many combinatorial structures such as Boolean semiring, there is no
inverse under addition. In practice, algebraic methods have large
hidden constants and generally considered  impractical. See more detailed
discussions in the work of \citet{BallardDHS12graph}, \citet{BansalW09regular} and \citet{HenzingerKNS15unifying}.}
Thus, lower bounds for combinatorial algorithms are of particular interest from
both theoretical and practical perspectives. 

Our proof of the BMM-hardness of inclusion-based points-to analysis involves two
steps. In the first step, we reduce BMM to Dyck-Reachability with
one type of parentheses ($D_1$-reachability).
Dyck-reachability is a graph reachability problem where the edges in the input
graph are labeled with $k$ types of open and close 
parentheses~\cite{Reps97program,ChatterjeeCP18opt,qirun13fast}. The goal is to compute all reachable nodes that
can be joined by paths with properly-matched parentheses.
Dyck-reachability has been utilized to express many static analysis problems, such
as interprocedural data flow analysis~\cite{RepsHS95precise}, program
slicing~\cite{RepsHSR94speeding}, shape analysis~\cite{Reps95shape}, and type-based flow analysis~\cite{PratikakisFH06exist,RehofF01type}. It has also been widely used in practical analysis tools such as Soot~\cite{Bodden12ainter}.
In the second step of our proof, we reduce $D_1$-reachability to inclusion-based
points-to analysis. In particular, we introduce a novel Pointer Expression Graph (PEG)
representation for C-style programs and formulate the points-to analysis as a
$\mathit{Pt}$-reachability problem on PEGs.
The key insight in our reduction is to leverage the pointer reference and dereference in
$\mathit{Pt}$-reachability formulation to express the properly-matched parentheses in $D_1$-reachability.
Our two-step reduction yields two conditional lower bounds on both
$D_1$-reachability and inclusion-based points-to analysis:

\begin{theorem}[BMM-hardness of Dyck-reachability]\label{thm:bmm1}
For any fixed $\delta >0$, if there is a combinatorial algorithm that solves $D_1$-reachability  in $O(n^{3-\delta})$ time, then there is a
combinatorial algorithm that solves Boolean Matrix Multiplication in truly
subcubic time.
\end{theorem}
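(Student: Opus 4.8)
The plan is to reduce Boolean Matrix Multiplication directly to all-pairs $D_1$-reachability using a single, linear-size, three-layer graph, so that a hypothetical truly subcubic combinatorial $D_1$-reachability algorithm can be called as a black box to multiply two Boolean matrices in truly subcubic time. Given $n \times n$ Boolean matrices $A$ and $B$ with product $C$, where $C[i,j] = \bigvee_{k \in [n]}(A[i,k] \wedge B[k,j])$, I would construct an edge-labeled directed graph $G$ on the $3n$ vertices $\{a_i\}_{i\in[n]} \cup \{b_k\}_{k\in[n]} \cup \{c_j\}_{j\in[n]}$, one layer per matrix dimension. The first factor is encoded on the first bipartite layer: for every pair with $A[i,k]=1$ I add the edge $a_i \to b_k$ and label it with an open parenthesis. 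The second factor is encoded symmetrically on the second layer: for every pair with $B[k,j]=1$ I add the edge $b_k \to c_j$ and label it with the matching close parenthesis. No other edges are added, and crucially every label is drawn from a \emph{single} parenthesis type, so the resulting instance is a genuine $D_1$-reachability instance.

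Next I would establish correctness by characterizing the Dyck paths of $G$. Because $G$ is layered and acyclic with all edges directed forward, every path from a source $a_i$ to a sink $c_j$ has exactly the form $a_i \to b_k \to c_j$ for some middle index $k$, and it spells the two-symbol word \texttt{()}, the shortest nonempty word of $D_1$; conversely neither edge in isolation spells a balanced word, so no shorter or longer source-to-sink Dyck path exists. Such a path is present precisely when both $A[i,k]=1$ and $B[k,j]=1$ hold for a common $k$, which is pinned down by the shared middle vertex $b_k$. Hence $a_i$ is $D_1$-reachable to $c_j$ if and only if $\exists k\,(A[i,k]=1 \wedge B[k,j]=1)$, i.e. if and only if $C[i,j]=1$, so the all-pairs $D_1$-reachability relation restricted to source/sink pairs is exactly $C$. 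The essential point is that the existential ``join'' over $k$ in the matrix product is realized by the topology of the middle layer, while the parentheses only have to certify a single open-then-close match, which is why one parenthesis type is enough and the larger alphabets used in prior Dyck-hardness reductions are unnecessary.

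Finally I would carry out the routine complexity bookkeeping. The graph $G$ has $N = 3n$ vertices and at most $2n^2$ edges, it is built from $A$ and $B$ in $O(n^2)$ time, and reading $C$ off the computed reachability relation is again $O(n^2)$; all of these steps are combinatorial. Therefore a combinatorial $D_1$-reachability algorithm running in $O(N^{3-\delta})$ time in the number of vertices yields a combinatorial BMM algorithm running in $O((3n)^{3-\delta}) + O(n^2) = O(n^{3-\delta})$ time, which is truly subcubic and proves the theorem. I expect the delicate part to be not the gadget itself but the tension it must resolve: the encoding has to make parenthesis matching coincide with the index matching $\exists k$ of the product while enlarging the instance by only a constant factor, since the reported complexity of $D_1$-reachability is measured in the number of vertices. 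A naive attempt to push the index $k$ into the label alphabet or into the nesting depth would either reintroduce many parenthesis types or inflate the vertex count to $\Theta(n^2)$ (for instance $\sum_k \Theta(k)$ vertices for unary depth gadgets), which would degrade the bound to $O(n^{6-2\delta})$ and break subcubicity; the insight that avoids this is that $\exists k$ is exactly a depth-one match that can be delegated to a single shared middle vertex.
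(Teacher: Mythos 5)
Your proposal is correct and is essentially identical to the paper's own reduction (Algorithm~2 and Theorem~3): the same three-layer graph on $3n$ vertices with ``$[_1$''-edges for the nonzeros of $\mathbf{A}$ and ``$]_1$''-edges for the nonzeros of $\mathbf{B}$, the same observation that the layered acyclic structure forces every source-to-sink $D_1$-path to have length exactly two, and the same linear-size complexity accounting. Nothing substantive differs from the paper's argument.
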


\begin{theorem}[BMM-hardness of inclusion-based points-to analysis]\label{thm:bmm2}
For any fixed $\delta >0$, if there is a combinatorial algorithm that solves
inclusion-based points-to analysis in $O(n^{3-\delta})$ time, then there is a
combinatorial algorithm that solves Boolean Matrix Multiplication in truly
subcubic time.
\end{theorem}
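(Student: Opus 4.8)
The plan is to derive Theorem~\ref{thm:bmm2} by composing two reductions. Since Theorem~\ref{thm:bmm1} already reduces BMM to $D_1$-reachability, it suffices to exhibit a cheap (linear-size, combinatorial) reduction from $D_1$-reachability to inclusion-based points-to analysis. Then, assuming a combinatorial points-to algorithm running in $O(n^{3-\delta})$ time, I would run it on the image of an arbitrary $D_1$-reachability instance to solve $D_1$-reachability in truly subcubic time, and feed that back through Theorem~\ref{thm:bmm1} to obtain a truly subcubic combinatorial BMM algorithm. The whole argument is a chain of implications, so the only genuinely new content is this second reduction.

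For the reduction itself, I would encode a $D_1$-reachability instance with graph $G$ as a C-style program, represented as a Pointer Expression Graph (PEG). Each vertex $v$ of $G$ becomes a program variable $x_v$, and I translate the three edge kinds of $G$ into pointer statements whose Andersen constraints mirror the Dyck semantics: an $\epsilon$-edge becomes a copy assignment inducing the simple constraint $\mathit{Pt}(x_v)\subseteq\mathit{Pt}(x_u)$, an open-parenthesis edge becomes a reference (address-of, \codeInM{\&}) statement, and a close-parenthesis edge becomes a dereference (\codeInM{*}) statement. The guiding intuition, emphasized in the introduction, is that taking an address pushes one pointer level while dereferencing pops one level, so a properly balanced reference/dereference word behaves exactly like a properly matched parenthesis word, \myie, a word of $D_1$. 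The formulation as $\mathit{Pt}$-reachability on the PEG makes this precise: the derived points-to fact relating $x_s$ and $x_t$ should hold if and only if $t$ is $D_1$-reachable from $s$ in $G$.

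The correctness argument splits into two inclusions, each proved by induction. In the forward direction I would show that every balanced $s$--$t$ path in $G$ induces a sequence of Andersen constraint-resolution steps (load/store rules firing on already-discovered points-to facts) that ultimately adds the target points-to fact, with the nesting of references and dereferences along the path corresponding to the order in which the load/store constraints create the transitive copy edges. In the soundness direction I would show the converse: any points-to fact produced by the least fixed point of the constraints can be traced back to a balanced $s$--$t$ path, so no spurious facts arise. A useful simplification, and the reason the construction needs only $D_1$ rather than the usual $D_2$, is that with a single parenthesis type there is no bracket-matching constraint to enforce, only balancing and proper nesting, so the dereference rule cannot cross-match the wrong reference.

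The main obstacle I anticipate is exactly this soundness direction: controlling the dynamic, mutually recursive closure induced by the load and store rules so that it generates \emph{only} balanced-path points-to facts and never short-circuits an unbalanced prefix into a spurious edge. Getting the directions of the reference/dereference edges and the copy edges consistent, so that partial (unbalanced) paths do not leak into $\mathit{Pt}$, is the delicate part, and I would handle it with an invariant that tags each intermediate points-to fact with the bracket depth of the path that created it. The remaining bookkeeping---checking that the PEG has size linear in $|G|$, that its construction is combinatorial, and that an $O(n^{3-\delta})$ points-to solve therefore yields an $O(m^{3-\delta'})$ solve for $D_1$-reachability---is routine and preserves truly subcubic running time, completing the chain to BMM.
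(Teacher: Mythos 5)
Your top-level architecture is the paper's: compose Theorem~\ref{thm:bmm1} with a linear-size reduction from $D_1$-reachability to points-to analysis, encode brackets as matched reference/dereference statements, and argue correctness through a $\mathit{Pt}$-reachability formulation on a pointer expression graph. But there are two genuine gaps, both located exactly where the paper's technical work lies. First, your target query is underspecified and, as stated, fails. Under your encoding, a balanced $s$--$t$ path only induces a \emph{subset} constraint $\mathit{pt}(x_s)\subseteq \mathit{pt}(x_t)$ (or its reverse, depending on orientation); since nothing seeds any points-to set, the least fixed point contains no points-to fact ``relating $x_s$ and $x_t$'' at all. And if you instead intend the query $\mathit{loc}(x_t)\in\mathit{pt}(x_s)$, reflexivity breaks the equivalence: $\epsilon\in D_1$ makes every node $D_1$-reachable from itself, yet $\mathit{loc}(x_s)\notin\mathit{pt}(x_s)$ without an \textsc{Address-of} statement, and inserting \codeIn{x = \&x} would corrupt the bracket encoding. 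The paper resolves both issues with one gadget you lack: a fresh address-taken variable $v^\prime\in\mathit{Var}_g$ per node together with the mapping statements of Algorithm~\ref{algo:d1topt} (lines~\ref{algo:pte1}--\ref{algo:pte2}), which force $\mathit{loc}(v^\prime)\in\mathit{pt}(v)$ so the query becomes $\mathit{loc}(v^\prime)\in\mathit{pt}(u)$; this simultaneously seeds the sets and absorbs the empty-path case.

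Second, your plan for the soundness direction --- tagging each intermediate fact with the bracket depth of the path that created it --- does not address the actual failure modes. Andersen resolution on the constructed program is equivalent to $\mathit{Pt}$-reachability on a \emph{bidirected} PEG (Lemma~\ref{lem:peg2pt}), whose grammar (Figure~\ref{fig:gpa1}) contains the inverse nonterminals $\overline{S}$ and $\overline{\mathit{Pt}}$; consequently spurious facts threaten to arise from \emph{reversing} paths $x\rightarrow y\rightarrow x$, which are locally depth-balanced, and from transitive composition of $S$-paths between white auxiliary nodes, which can splice the tail of one bracket gadget onto the head of another while keeping depth balanced throughout. Excluding these is precisely the content of the paper's Lemma~\ref{lem:reversing} and Isolation Lemma~\ref{lem:iso} (proved via the \textsc{Follow}-set analysis of Table~\ref{tab:followset}) and of the Non-transitivity Lemma~\ref{lem:nontran} (proved via the colored grammar of Figure~\ref{fig:gpa3}); all three arguments exploit structural properties of the specific gadgets in Figure~\ref{fig:g2peg} (node shapes, colors, path lengths $4k$ as in Lemma~\ref{lem:samecolor}), not a depth count. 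A depth invariant is necessary but nowhere near sufficient, so the ``only if'' half of your claimed equivalence remains unproven. (Your orientation convention --- open parenthesis as address-of, close as dereference --- is merely the mirror of the paper's and is harmless: it corresponds to reading rule~(\ref{rule:bb3}) in place of rule~(\ref{rule:bb2}), flipping which endpoint is queried.)
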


Our key insight on $D_1$-reachability-based reduction yields several interesting
implications. 
We formally summarize  the results as two corollaries.
In particular, Corollary~\ref{cor:uni} considers any \emph{non-trivial} C-style programs
with pointers and Corollary~\ref{cor:dd} applies to programs with pointer
dereferences.\footnote{Non-trivial C-style programs always contain  address-of
  statements 
  ``\codeIn{a = \&b}'' and at least one of the three types of statements ``\codeIn{a =
  b}'', ``\codeIn{*a = b}'', and ``\codeIn{a = *b}''. The address-of statements ``initialize'' the points-to
  sets of all variables.  Without such statements, all points-to sets in the
  program  are empty sets. For programs with only address-of statements, all
  points-to sets can be obtained via a simple linear-time scan.  Both types of programs are trivial for  pointer analysis.}

\begin{corollary}[Universality]\label{cor:uni}
Inclusion-based points-to analysis is BMM-hard for non-trivial programs under
 arbitrary combinations of  statement types
``\codeIn{a =
  b}'', ``\codeIn{a = \&b}'', ``\codeIn{*a = b}'', and ``\codeIn{a = *b}''.
\end{corollary}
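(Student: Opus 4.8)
The plan is to prove the corollary by a direct reduction from Boolean Matrix Multiplication to inclusion-based points-to analysis for each \emph{minimal} non-trivial statement set, and then to invoke monotonicity to cover all larger combinations. By the footnote's definition, a non-trivial program contains address-of statements \codeIn{a = \&b} together with at least one of \codeIn{a = b}, \codeIn{*a = b}, and \codeIn{a = *b}; hence it suffices to establish BMM-hardness for the three minimal sets $\{\codeIn{a = \&b}, \codeIn{a = b}\}$, $\{\codeIn{a = \&b}, \codeIn{*a = b}\}$, and $\{\codeIn{a = \&b}, \codeIn{a = *b}\}$. Allowing additional statement types can only enlarge the class of admissible instances, so hardness for a minimal set immediately lifts to every super-combination. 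The unifying observation is that in all three cases one can realize the copy semantics $\mathit{pts}(a)\supseteq\mathit{pts}(b)$, and copy together with address-of already suffices to encode BMM.

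First I would settle the base case $\{\codeIn{a = \&b}, \codeIn{a = b}\}$ with a tripartite construction. Given $n\times n$ Boolean matrices $A$ and $B$, introduce variables $x_i,y_j,z_k,c_k$ for $1\le i,j,k\le n$; emit the address-of statements \codeIn{z_k = \&c_k}, the copies \codeIn{x_i = y_j} for every $A[i][j]=1$, and the copies \codeIn{y_j = z_k} for every $B[j][k]=1$. Propagating points-to sets along the copy edges gives $c_k\in\mathit{pts}(x_i)$ if and only if there is a $j$ with $A[i][j]=1$ and $B[j][k]=1$, that is, exactly when $(AB)[i][k]=1$. This instance uses only $4n=O(n)$ variables and is built combinatorially, so a truly subcubic combinatorial points-to algorithm would read off $AB$ in $O(n^{3-\delta})$ time, contradicting Conjecture~\ref{conj:bmm}. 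This is precisely the transitive-closure core underlying Theorem~\ref{thm:bmm2}.

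Next I would handle the store-only and load-only cases by simulating each copy, which is where the single real obstacle lies. A copy \codeIn{a = b} is realized with a store by taking a pointer $p_a$ with \codeIn{p_a = \&a} and emitting \codeIn{*p_a = b}, since $\mathit{pts}(p_a)=\{a\}$ forces $\mathit{pts}(a)\supseteq\mathit{pts}(b)$; symmetrically it is realized with a load by taking $q_b$ with \codeIn{q_b = \&b} and emitting \codeIn{a = *q_b}. The danger is the variable count: the construction above contains $\Theta(n^2)$ copies, so introducing a fresh auxiliary pointer \emph{per copy} would inflate the number of variables to $\Theta(n^2)$ and make a subcubic points-to run take $\Theta(n^{6-2\delta})$ time, destroying the reduction. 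The fix, and the step I expect to require the most care, is to \emph{share} auxiliary pointers: for the store case introduce one $p_a$ per distinct left-hand target $a$ (there are only $O(n)$ such targets, namely the $x_i$ and $y_j$) and reuse it across all stores into $a$; for the load case introduce one $q_b$ per distinct right-hand source $b$ (again $O(n)$, namely the $y_j$ and $z_k$). One must then verify that every auxiliary pointer retains its singleton points-to set $\mathit{pts}(p_a)=\{a\}$ (respectively $\mathit{pts}(q_b)=\{b\}$), which holds because its only defining statement is the address-of, so no store or load ever enlarges it and no spurious points-to facts are introduced.

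Putting the pieces together, each minimal non-trivial statement set admits an $O(n)$-variable, combinatorially constructed points-to instance whose least solution encodes $AB$, and monotonicity extends this to arbitrary combinations of the four statement types. Hence a truly subcubic combinatorial algorithm for inclusion-based points-to analysis on any non-trivial program would yield a truly subcubic combinatorial BMM algorithm, establishing the corollary.
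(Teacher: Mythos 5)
Your proof is correct, but it takes a genuinely different route from the paper's. The paper proves Corollary~\ref{cor:uni} in Section~\ref{sec:gen} by enumerating all seven non-trivial statement combinations and re-running its $D_1$-reachability-to-points-to reduction for each one: it delegates the pure-\textsc{Assignment} case to \citet{SridharanF09the}, designs per-case edge and node gadgets in the PEG (Figures~\ref{fig:g2peg} and~\ref{fig:pegext}), and checks case by case that the Isolation Lemma (Lemma~\ref{lem:iso}) and the Non-transitivity Lemma (Lemma~\ref{lem:nontran}) survive each modified construction. You instead reduce directly from BMM with a transitive-closure-style tripartite gadget, simulate copy semantics in the store-only and load-only fragments through singleton auxiliary pointers ($p_a = \&a$; $*p_a = b$, resp.\ $q_b = \&b$; $a = *q_b$), and collapse the seven combinations to three minimal ones by monotonicity --- a step the paper never invokes, since it exhibits a construction per combination. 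Your sharing of auxiliaries is the one genuinely delicate point and you resolve it correctly: each auxiliary's points-to set stays a singleton because no statement ever writes into it (stores only enlarge $\mathit{pt}(v)$ for $v$ in the points-to set of a dereferenced pointer, and those sets contain only the $x_i$ and $y_j$), so the least solution encodes exactly $\mathbf{A}\times\mathbf{B}$ with $O(n)$ variables and $O(n^2)$ statements. What your route buys: it is elementary --- no CFL-reachability, PEGs, colored grammars, or \textsc{Follow}-set analysis --- and it keeps the variable count, the parameter in which the paper states Andersen's cubic bound, linear in the matrix dimension even for dense matrices, whereas the paper's chain through $D_1$-reachability introduces a temporary variable per input-graph edge (Algorithm~\ref{algo:d1topt}); in spirit your argument is closer to the independent proof of \citet{2006.01491anders} than to this paper. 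What it loses: because your hard instances are transitive closure plus local pointer gadgets, they say nothing about single-source-single-target queries, and the paper deliberately routes Corollary~\ref{cor:uni} through $D_1$-reachability precisely so that the same per-combination constructions also feed Corollary~\ref{cor:dd} (demand-driven hardness), which --- as the paper itself stresses --- no transitive-closure-based reduction can yield, since $s$-$t$ graph reachability is solvable in linear time. One cosmetic caveat: if ``programs under a combination'' is read as programs that use \emph{every} statement type in the combination, your monotonicity step needs a trivial padding with disconnected dummy statements, whereas the paper's per-case constructions use each combination's statement types intrinsically.
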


\begin{corollary}[BMM-hardness of demand-driven analysis]\label{cor:dd}
In the presence of pointer dereferences, the demand-driven inclusion-based
points-to analysis is
BMM-hard.
\end{corollary}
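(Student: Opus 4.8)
The plan is to show that a single demand query can be made as hard as the whole analysis, by choosing the query so that its one-bit answer encodes a decision problem that is itself BMM-hard. The natural route is to reuse the two-step reduction behind Theorem~\ref{thm:bmm2}: a demand-driven query asks for a single points-to fact, whether a designated variable $p$ may point to a designated variable $q$, which under the PEG construction is exactly a single-pair $\mathit{Pt}$-reachability question, and $\mathit{Pt}$-reachability in turn comes from $D_1$-reachability. So it suffices to exhibit a single-pair $D_1$-reachability instance that is BMM-hard and thread it through the dereference encoding. The subtlety is that we cannot simply read the product $C$ off such a query: a single entry, or even a whole row, of $C$ is computable in $O(m^2)$ time and is therefore not BMM-hard, so the one bit returned by the query must encode a jointly-hard decision rather than a separable inner product.

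For this reason I would reduce not from ``compute $C$'' but from a one-bit problem that is subcubically equivalent to BMM. By the equivalences of~\cite{WilliamsW18subcubic}, a truly subcubic combinatorial algorithm for triangle detection would yield one for BMM, and triangle detection has a single-bit answer, which is exactly what a single-pair reachability query can capture. I would encode a graph $H$ on $m$ vertices as a layered $D_1$-reachability instance of $O(m)$ nodes, with a super-source $s$ and super-sink $t$, so that a properly-matched $s$--$t$ path exists if and only if $H$ contains a triangle: open parentheses are emitted while traversing the first leg $a\to b$ of a candidate triangle and the matching close parentheses while traversing $b\to c$ and the closing edge $c\to a$, so that with a single parenthesis type balance alone forces the two legs to agree on their endpoints. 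Routing every candidate through the common pair $(s,t)$ takes the disjunction over all triangles for free.

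Next I would push this single-pair instance through the $D_1$-reachability-to-$\mathit{Pt}$-reachability reduction that underlies Theorem~\ref{thm:bmm2}, using the reference/dereference encoding of parentheses: an open parenthesis becomes a pointer dereference and its matching close the corresponding reference, so a balanced $D_1$-path becomes a $\mathit{Pt}$-path. This is the step that consumes load/store statements, and it is precisely why the hypothesis ``in the presence of pointer dereferences'' cannot be dropped: address-of and copy statements alone emit no parentheses to match. The outcome is a non-trivial C-style program together with one query ``does $p$ point to $q$?'' whose answer is YES iff $H$ has a triangle, so a truly subcubic algorithm for this single demand query would solve triangle detection, and hence BMM, in truly subcubic time, contradicting Conjecture~\ref{conj:bmm}.

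The main obstacle I anticipate is the single-pair, one-parenthesis Dyck gadget of the second paragraph: with only one bracket type I must make balance alone enforce that the forward and backward legs select the same vertices of the candidate triangle, while keeping the node count linear in $m$ so that subcubic-in-$n$ genuinely means subcubic-in-$m$. Fixing the nesting depth and the super-source/super-sink wiring so that no spurious balanced path appears between unrelated layers is where the care is needed; once the one-bit reachability gadget is correct, threading it through the existing PEG reduction is routine.
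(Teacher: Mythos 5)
Your overall route is exactly the paper's: Section~\ref{sec:hdd} likewise refuses to query an entry of $\mathbf{C}$ (for the same reason---a single entry or row is cheap), invokes the subcubic equivalence of BMM and Triangle Detection~\cite{WilliamsW18subcubic}, reduces Triangle Detection to single-pair $s$-$t$ $D_1$-reachability (Lemma~\ref{lem:bmmstd1}), and then threads the result through the PEG reduction of Section~\ref{sec1:pt2}, so that the demand query becomes one fact $\mathit{loc}(v^\prime)\in\mathit{pt}(u)$; your explanation of why the dereference hypothesis cannot be dropped also matches the paper's closing remark that dereference-free programs admit a linear-time demand query via the Sridharan--Fink view. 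The one step you defer, however, is the entire technical content of the corollary, and the bracket placement you sketch is problematic. If opens are emitted ``while traversing the first leg $a\to b$'' and matched on the later legs, then with a single parenthesis type the only way balance can force the legs to agree on their endpoints is to encode vertex identities in unary on those legs, which costs up to $\Theta(n)$ auxiliary nodes per edge gadget, hence $\Theta(mn)$ overall---exactly the size blow-up you flag as the obstacle but do not resolve, and one that would void the transfer of a subcubic bound. Algorithm~\ref{algo:std1} avoids this by making every edge gadget perfectly balanced (each edge $(u,v)$ of $G$ becomes $u_j\xrightarrow{[_1}t_i\xrightarrow{]_1}v_{j+1}$ between four layers $u_0,\ldots,u_3$, contributing nothing to the counter) and moving \emph{all} unmatched brackets onto two shared chains: $[_1$-edges threading $s$ and every layer-$0$ node in a fixed order, and $]_1$-edges threading layer $3$ into $t$. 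The counter value on arrival at $u_0$ is then the position of $u$ in that ordering, and the suffix cancels it iff the path exits at the copy $u_3$ of the \emph{same} vertex, which in the acyclic four-layer graph happens iff $G$ has a triangle through $u$; the whole instance has $4n+6m+2$ nodes and $2n+12m$ edges, so the reduction is genuinely linear. With that gadget supplied, the rest of your argument coincides with the paper's proof.
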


The two corollaries  offer a comprehensive  view on the cubic
bottleneck of inclusion-based points-to analysis.
Prior to our work, 
it is
  folklore that the complexity of inclusion-based points-to analysis is
  related to computing the ``dynamic transitive closure'' (DTR).
Unfortunately, the ``dynamic'' aspect of DTR, which informally refers to the process of adding inclusion-constraint edge to
the graph during constraint resolution, has not been rigorously
defined. 
The work by
\citet{SridharanF09the} establishes a relation between points-to analysis
and  transitive closure. However, the reduction is  restrictive  because it
applies to  programs with no pointer
dereferences (\myie, programs consist of only two types of statements ``\codeIn{a =
  b}'' and ``\codeIn{a = \&b}''). 
Our universality corollary (Corollary~\ref{cor:uni}), which generalizes to all
types of
program statements, is more
principled. Specifically, 
the $D_1$-reachability used in our reduction formally captures the properly-matched pointer references/dereferences.
Moreover, this insight also enables Corollary~\ref{cor:dd} which shows that demand-driven pointer analysis is as
hard as the exhaustive counterpart.
Note 
the traditional transitive-closure-based reduction  can not yield
Corollary~\ref{cor:dd}   as the single-source-single-target graph
reachability can be trivially solved in $O(n)$ time via a depth-first search.

To sum up, the significance of our results is threefold. 

\begin{itemize}
\item Theorem~\ref{thm:bmm1} gives a simplified proof to establish the BMM-hardness of
  Dyck-reachability. It is slightly stronger than the existing hardness results
  because previous proofs ~\cite{ChatterjeeCP18opt,HeintzeM97on}
 require Dyck languages of $k \geq 2$ types of parentheses.
\item Theorem~\ref{thm:bmm2} formally establishes a cubic lower bound for
  inclusion-based points-to analysis conditioned on the popular BMM conjecture.
\item The proof of Theorem~\ref{thm:bmm2} is based on the reduction from
  $D_1$-reachability. The key insight of depicting the well-balanced pointer
  references/dereferences with $D_1$-reachability yields
  two interesting corollaries:
\begin{itemize}
\item Corollary~\ref{cor:uni} permits all types of 
  constraints in inclusion-based  points-to analysis and makes no assumption of
  the input programs.  Previous transitive-closure-based reduction~\cite{SridharanF09the} is 
  restricted to programs with no pointer dereferences, which cannot be
  generalized to programs with pointer dereferences. 
\item Corollary~\ref{cor:dd} generalizes to interprocedural program analysis. It demonstrates that the bottleneck of demand-driven interprocedural
analysis is due to matching the well-balanced properties such as procedure
calls/returns and pointer references/dereferences, as opposed to computing the
transitive closure.
\end{itemize}
\end{itemize}

The rest of the paper is organized as follows. Section~\ref{sec1:pred}
introduces problem definitions. Section~\ref{sec1:over} presents an overview of our
reductions.
Sections~\ref{sec1:pt1} and \ref{sec1:pt2} prove the reduction correctness. Section~\ref{sec:imp} discusses the implications. Section~\ref{sec1:rw} surveys related work and
Section~\ref{sec1:conc} concludes.

\section{Preliminaries} \label{sec1:pred}

This section formally defines  inclusion-based points-to analysis
(Section~\ref{sec:pdpt}) and Dyck-reachability (Section~\ref{sec:pdcfl})
involved in our conditional hardness result.

\subsection{Inclusion-Based Points-to Analysis}\label{sec:pdpt}

Our work focuses on flow-insensitive inclusion-based (\myie, Andersen-style)
points-to analysis~\cite{SridharanF09the,HardekopfL07the}. The
control flow between  assignment statements in $P$ is irrelevant. Given a program $P$, a points-to analysis  determines the set of
variables that a pointer variable  might point to during program
execution. 

We consider a simple C-style language that contains the assignment statements of
the form ``$e_1 = e_2$'', where the expressions $e_1$
and $e_2$ are defined by the following context-free grammar:
\begin{align*}
e_1 &\rightarrow \texttt{\small ID} \mid \texttt{*}e_1\\
e_2 &\rightarrow \texttt{\small\&ID} \mid e_1. 
\end{align*}
The assignment statements permit arbitrary pointer dereferences (\myeg, \codeIn{****x
= **y}). Typically, all assignment statements in
the input program $P$ are normalized. 
The normalization procedure replaces the statements that involve multiple levels of
dereferencing by a sequence of statements that contain only one level of
dereferencing~\cite{Horwitz97precise}. After the normalization, each 
statement has one of the four forms: ``\codeIn{ID = \&ID}'', ``\codeIn{ID =
ID}'', ``\codeIn{ID = *ID}'' or ``\codeIn{*ID = ID}''.

Inclusion-based points-to analysis \textsc{PA}$\langle P, V\rangle$ is a set-constraint problem~\cite{FahndrichFSA98partial,HardekopfL07the}. 
It generates four types of inclusion constrains \mywrt the normalized
statements in program $P$. For each pointer variable $v \in V$, the goal of
points-to analysis is to compute a
\emph{points-to} set $\mathit{pt}(v)$ which contains all variables  that $v$ may
point to during execution. Table~\ref{tab:ptcons} gives the four constraints as well as their
corresponding statements and meanings. 

\begin{table}[t]
\begin{center}
\caption{Constraints for points-to analysis.}\label{tab:ptcons}
\begin{tabular}{ r l l l}
\hline
Type & Statement & Input Constraint & Inclusion  Constraint\\
\hline
\textsc{Address-of} & \codeIn{a = \&b} &  $\{b\} \subseteq a$ & $\mathit{loc}(b)
\in \mathit{pt}(a)$ \\
\textsc{Assignment} & \codeIn{a = b} &  $b \subseteq a$ & $\mathit{pt}(b)
\subseteq \mathit{pt}(a)$ \\
\textsc{Assign-star} & \codeIn{a = *b} & $*b \subseteq a$  & $\forall v \in \mathit{pt}(b):\mathit{pt}(v)
\subseteq \mathit{pt}(a)$ \\
\textsc{Star-assign} & \codeIn{*a = b} &  $b \subseteq *a$ & $\forall v \in \mathit{pt}(a):\mathit{pt}(b)
\subseteq \mathit{pt}(v)$ \\
\hline 
\end{tabular}

\end{center}
\end{table}

\begin{definition}[Inclusion-based points-to analysis]\label{def:pt}
Given a normalized program $P$ and a collection of input constraints based on Table~\ref{tab:ptcons},  the inclusion-based points-to analysis 
problem  is to solve the inclusion constraints and to determine if $p\in P$ can
point to $q \in P$ (\myie,
determine if $\mathit{loc}(q)
\in \mathit{pt}(p)$) for all $p, q \in P$.
\end{definition}

\begin{algorithm}[!t]
\footnotesize
\SetKwData{Null}{Null}
\SetKwData{oldIn}{oldIn}
\SetKwData{Up}{up}
\SetKwData{current}{current$_u$}
\SetKwData{previous}{previous$_u$}
\SetKwData{lastc}{last$_u$}
\SetKwData{lastp}{last$_p$}
\SetKwData{outaf}{\textsc{Out}$_{\alpha}$}
\SetKwData{outd}{out$_d$}
\SetKwData{outdb}{out$_{\bar{d}}$}
\SetKwData{rowu}{Row$_u$}

%\SetKwFunction{IN}{IN}
\SetKwFunction{Add}{Add}
\SetKwFunction{Init}{Init}
\SetKwFunction{PF}{ParititionFunc}
\SetKwInOut{Input}{Input}
\SetKwInOut{Output}{Output}
\DontPrintSemicolon

%\Input{Two $n\times n$ Boolean matrices $\mathbf{A}$ and $\mathbf{B}$;  }
%\Output{An edge-labeled graph $G=(V, E)$.}

Let $G = (V, E)$ and initialize $V$ with program variables \\

\ForEach{constraint $\{b\} \subseteq a$}{
  $\mathit{pt}(a) \leftarrow \mathit{pt}(a) \cup \{b\}$
}
\ForEach{constraint $b \subseteq a$}{
  $E \leftarrow E \cup \{b\rightarrow a\}$
}
$W\leftarrow V$

\BlankLine
\While{$W \neq \emptyset$}{
$n \leftarrow \textsc{Select-from}(W)$\\
\ForEach{$v \in \mathit{pt}(n)$}{
  \ForEach{constraint $*n \subseteq a$}{
    \If{$v\rightarrow a \notin E$}{
      $E \leftarrow E \cup \{v \rightarrow a\}$  and    $W \leftarrow W \cup \{v\}$
    }

  }
  \ForEach{constraint $b \subseteq *n$}{
    \If{$b\rightarrow v \notin E$}{
      $E \leftarrow E \cup \{b \rightarrow v\}$  and    $W \leftarrow W \cup \{b\}$
    }

  }
}

\ForEach{$n\rightarrow z \in E$}{
  $\mathit{pt}(z) \leftarrow \mathit{pt}(z) \cup \mathit{pt}(n)$ \\
  \If{$\mathit{pt}(z)$ changed}{
    $W \leftarrow W \cup \{z\}$
  }
}

}

\caption{Inclusion-based points-to analysis algorithm.}\label{algo:pt}

\end{algorithm}

Algorithm~\ref{algo:pt} gives an algorithm for inclusion-based points-to
analysis based on computing dynamic transitive closure~\cite{HardekopfL07the}.

\begin{example}
Consider the following simple C-style program: \codeIn{a = $\&b$; b = $\&$d;} 
\codeIn{c = *a}.
We illustrate the Andersen-style pointer analysis by computing the $\mathit{pt}$ set 
for each variable.
According to the semantics introduced in Table \ref{tab:ptcons}, these three statements
represent $\mathit{loc}(b) \in \mathit{pt}(a), \mathit{loc}(d) \in \mathit{pt}(b)$ 
and $\forall v \in \mathit{pt}(a): \mathit{pt}(v) \subseteq \mathit{pt}(c)$.
After the fixed-point computation, we get the analysis result $\mathit{pt}(a) = \{ \mathit{loc}(b) \} ,$
$\mathit{pt}(b) = \{  \mathit{loc}(d)  \} $ and $\mathit{pt}(c) = \{ \mathit{loc}(d)   \} $.
\end{example}

\subsection{Dyck-Reachability}\label{sec:pdcfl}
Dyck-Reachability is a subclass of context-free language (CFL) reachability~\cite{Reps97program,ChatterjeeCP18opt,qirun13fast}.
A CFL-reachability problem instance 
contains a context free grammar $\mathit{CFG} = (\Sigma, N, P, S)$ and an edge
labeled digraph $G$. Each edge $u\xrightarrow{l}v \in G$ is labeled by a symbol $l =
\mathcal{L}(u, v) \in \Sigma \cup N$. Each path $p = v_0, v_1, v_2, \ldots, v_m$
in $G$ \emph{realizes} a string $\mathcal{R}(p)$ over the alphabet $\Sigma$ by concatenating the
edge labels in the path in order, \myie, $\mathcal{R}(p) = \mathcal{L}(v_0,
v_1)\mathcal{L}(v_1, v_2)\ldots\mathcal{L}(v_{m-1}, v_m)$. A path $p = v_0,
\ldots, v_m$ in $G$ is an \emph{$l$-path} if its realized string $\mathcal{R}(p)$
is either a terminal $l \in \Sigma$ or it can be derived
from a nonterminal $l \in N$. We represent an $l$-path from node $u$ to $v$ as a \emph{summary edge}
$u\xrightarrow{l} v$ in $G$. Moreover, we say node $v$ is \emph{$l$-reachable} from node
$u$ iff there exists a summary edge $u\xrightarrow{l} v$.

\begin{definition}[$L$-reachability]\label{def:cfgreach}
Given a grammar $\mathit{CFG}$ of a context-free language $L$ and an edge labeled
digraph $G$, the $L$-reachability problem is to compute all $S$-reachable nodes in $G$,
where $S$ is the start symbol in the $\mathit{CFG}$. 
\end{definition}

%% \begin{figure}[!t]
%% $\begin{aligned} 
%% D_1 \rightarrow &~S\\
%% S \rightarrow &~[_1~S~]_1\\
%% S \rightarrow &~S~S~\mid~\epsilon\\
%% \end{aligned}$
%% \caption{Grammar rules for the Dyck language $D_1$.}\label{fig:d2}
%% \end{figure}

Dyck language $D_1$ is a context-free language that generates the strings of
one kind of properly matched parentheses. Formally, $D_1$ is specified by a
$\mathit{CFG} = (\Sigma, N, P, S)$ where $\Sigma = \{[_1, ]_1\}$, $N =
\{D_1, S\}$, and $S = \{D_1\}$. The production rules $P$ are \{$D_1 \rightarrow S$, $S \rightarrow [_1S]_1 \mid
SS\mid\epsilon$ \}.

\begin{definition}[$D_1$-reachability]
Given a grammar of Dyck language $D_1$ and an edge labeled
digraph $G$, the $D_1$-reachability problem is to compute all $D_1$-reachable nodes
 in $G$, where $D_1$ is the start
symbol in the grammar.
\end{definition}

\section{BMM-Hardness of $D_1$-Reachability and Points-to Analysis}\label{sec1:over}

We give two reductions to establish the conditional lower bounds of
$D_1$-reachability and inclusion-based points-to analysis, respectively.
In particular, the first reduction reduces Boolean Matrix Multiplication (BMM) to $D_1$-reachability and the second
reduction reduces  $D_1$-reachability to points-to analysis.
Our hardness results are conditioned on a widely believed BMM conjecture
about the complexity of multiplying two Boolean matrices as described in Conjecture~\ref{conj:bmm}.
Note that the cubic lower bound  of $D_k$-reachability is indeed known in the literature. For instance,
the work by \citet{HeintzeM97on} has proven a cubic lower bound  for a $D_2$-reachability data-flow
analysis problem conditioned on the hardness for solving \npda problems.
A recent result of \citet{ChatterjeeCP18opt}
proves the BMM-hardness of Dyck-reachability by giving a reduction from context-free
language (CFL) parsing which requires multiple kinds of parentheses.
Our hardness result on
$D_1$-reachability is slightly stronger since 
a lower bound on $D_1$-reachability implies a lower bound on
$D_k$-reachability, for any $k\geq 2$.

This section gives a high-level overview of our main results. In particular, 
Section~\ref{subsec:moti} gives a running example.
Sections~\ref{subsec:bmm} and ~\ref{subsec:d1} give the detailed description of the
two reductions, respectively.

%\subsection{Overview}
%Table~\ref{} describes the input/output mappings for each problem. 

\begin{figure}[t]
  \centering
  \begin{subfigure}[b]{0.4\linewidth}
\small
\begin{equation*}
\begin{spmatrix}{\mathbf{A}}
    0&1&0&0 \\
    0&0&0&0 \\
0&0&0&0\\
0&0&0&0
\end{spmatrix}
\begin{spmatrix}{\mathbf{B}}
    0&0&0&0 \\
    0&0&1&1 \\
0&0&0&0\\
0&0&0&0
\end{spmatrix}
=
\begin{spmatrix}{\mathbf{C}}
    0&0&1&1 \\
    0&0&0&0 \\
0&0&0&0\\
0&0&0&0
\end{spmatrix}
\end{equation*}

    \caption{A BMM problem instance.\label{fig:running1}}
  \end{subfigure}%
\hspace{2cm}
  \begin{subfigure}[b]{0.4\linewidth}

\begin{tikzpicture}[every path/.style={>=latex}]

%\draw  node[circle, draw, inner sep=0pt, minimum size=6pt]{}
% \node [style={ inner sep=1pt, minimum size=0.6cm, scale=0.7}]
%  (ad) at (-1,0)  {Layer 1: };
%  \node [style={inner sep=1pt, minimum size=0.6cm, scale=0.7}]
%  (ad1) at (-1,-1.5)  {Layer 2: };
%  \node [style={inner sep=1pt, minimum size=0.6cm, scale=0.7}]
%  (ad2) at (-1,-3)  {Layer 3: };

  \node [style={draw,circle, inner sep=1pt, minimum size=0.6cm, scale=0.8}]
  (a) at (0,0)  {$x_0$ };
  \node [style={draw,circle, inner sep=1pt, minimum size=0.6cm, scale=0.8}]
  (a1) at (1.5,0)  {$x_1$ };
  \node [style={draw,circle, inner sep=1pt, minimum size=0.6cm, scale=0.8}]
  (a2) at (3,0)  {$x_2$ };
  \node [style={draw,circle, inner sep=1pt, minimum size=0.6cm, scale=0.8}]
  (a3) at (4.5,0)  {$x_3$ };
 
  \node [style={draw,circle, inner sep=1pt, minimum size=0.6cm, scale=0.8}]
  (b) at (0,-1.5)  {$y_0$ };
  \node [style={draw,circle, inner sep=1pt, minimum size=0.6cm, scale=0.8}]
  (b1) at (1.5,-1.5)  {$y_1$ };
  \node [style={draw,circle, inner sep=1pt, minimum size=0.6cm, scale=0.8}]
  (b2) at (3,-1.5)  {$y_2$ };
  \node [style={draw,circle, inner sep=1pt, minimum size=0.6cm, scale=0.8}]
  (b3) at (4.5,-1.5)  {$y_3$ };

  \node [style={draw,circle, inner sep=1pt, minimum size=0.6cm, scale=0.8}]
  (c) at (0,-3)  {$z_0$ };
  \node [style={draw,circle, inner sep=1pt, minimum size=0.6cm, scale=0.8}]
  (c1) at (1.5,-3)  {$z_1$ };
  \node [style={draw,circle, inner sep=1pt, minimum size=0.6cm, scale=0.8}]
  (c2) at (3,-3)  {$z_2$ };
  \node [style={draw,circle, inner sep=1pt, minimum size=0.6cm, scale=0.8}]
  (c3) at (4.5,-3)  {$z_3$ };

  \draw[->] (a) edge node[left, scale=0.8] {$[_1$}(b1);

  \draw[->] (b1) edge node[left, scale=0.8] {$]_1$}(c2);
  \draw[->] (b1) edge node[above, scale=0.8] {$]_1$}(c3);

\end{tikzpicture}

    \caption{A $D_1$-reachability problem instance.\label{fig:running2}}
  \end{subfigure}
\\
  \begin{subfigure}[b]{0.4\linewidth}

\begin{multicols}{2}
\begin{Verbatim}[fontsize=\small,commandchars=\\\{\},
  codes={\catcode`$=3\catcode`^=7},
  formatcom={\lstset{fancyvrb=false}}]

{\rmfamily\selectfont \textit{/* Encoding $[_1$ */}}
t$_1$ = &u$_0$; 
*v$_1$ = t$_1$; 

{\rmfamily\selectfont \textit{/* Encoding $]_1$ */}}
v$_1$ = &t$_2$; 
*t$_2$ = w$_2$; 

{\rmfamily\selectfont \textit{/* Encoding $]_1$ */}}
v$_1$ = &t$_3$; 
*t$_3$ = w$_3$; 

{\rmfamily\selectfont \textit{/* Mapping node. */}}
u$_0$ = *t$_4$;
t$_4$ = t$_5$;
t$_5$ = &t$_6$;
t$_6$ = &u$^\prime_0$    

{\rmfamily\selectfont \textit{/* Mapping node */}}
v$_1$ = *t$_7$;
t$_7$ = t$_8$;
t$_8$ = &t$_9$;
t$_9$ = &v$^\prime_1$    

{\rmfamily\selectfont \textit{/* Mapping node */}}
w$_2$ = *t$_{10}$;
t$_{10}$ = t$_{11}$;
t$_{11}$ = &t$_{12}$;
t$_{12}$ = &w$^\prime_2$    

{\rmfamily\selectfont \textit{/* Mapping node */}}
w$_3$ = *t$_{13}$;
t$_{13}$ = t$_{14}$;
t$_{14}$ = &t$_{15}$;
t$_{15}$ = &w$^\prime_3$    
\end{Verbatim}
\end{multicols}
    \caption{A C-style program $P$.\label{fig:running3}}
  \end{subfigure}

\caption{A running example.}\label{fig:running}
\end{figure}

\subsection{Overview}\label{subsec:moti}
Figure~\ref{fig:running} gives a concrete example to illustrate our 
reduction. 
We briefly describe the three problem instances in Figure~\ref{fig:running} as a gentle
introduction to our reduction. 

\begin{itemize}
\item \emph{BMM to $D1$-reachability.} Figure~\ref{fig:running1} gives a BMM instance with two $4\times4$
  matrices $\mathbf{A}$ and $\mathbf{B}$. Their product is given in matrix $\mathbf{C}$.
Figure~\ref{fig:running2}
depicts the transformed digraph $G$ as a $D_1$-reachability instance. 
Our reduction maps each non-zero $a_{ij} \in \mathbf{A}$ to $x_i \xrightarrow{[_1}
  y_j \in G$ and each non-zero $b_{ij} \in \mathbf{B}$ to $y_i \xrightarrow{]_1}
  z_j \in G$. Our reduction guarantees that non-zero $c_{ij} \in \mathbf{C} \Longleftrightarrow x_i \xrightarrow{D_1}
  z_j \in G$ (Theorem~\ref{thm:corr1}).

\item \emph{$D1$-reachability to points-to analysis.} 
Our second reduction takes as input the $D_1$-reachability problem instance in
Figure~\ref{fig:running2}. Note that 
the second reduction is general which does not rely on the output of the first
reduction. In Figure~\ref{fig:running}, we reuse the output graph $G$ in Figure~\ref{fig:running2}  for brevity. 
Figure~\ref{fig:running3} shows a normalized C-style program $P$ as a points-to
analysis problem instance. 
The program $P$ contains three sets of variables $\mathit{Var}_b$,
$\mathit{Var}_w$, and $\mathit{Var}_g$.
Our reduction maps each node $v \in G$ to a variable $v \in \mathit{Var}_b$. 
We also map each node $v\in G$ to a new address-taken variable $\&v^\prime \in
\mathit{Var}_g$. 
Moreover, we  construct program statements in $P$ to make variable $v$ point
to variable $v^\prime$, \myie, $\mathit{loc}(v^\prime) \in \mathit{pt}(v)$.
Set $\mathit{Var}_w$ contains auxiliary variables $t_i$ for
program statement construction.  
Our reduction guarantees that $u\xrightarrow{D_1} v \in G \Longleftrightarrow
\mathit{loc}(v^\prime) \in \mathit{pt}(u)$ in program $P$ (Theorem~\ref{thm:corr2}).
\end{itemize}

\subsection{Reducing BMM to $D_1$-Reachability}\label{subsec:bmm}
\begin{problem}[framed]{\large Reduction 1: From BMM to $D_1$-Reachability}
  Input: & Two $n\times n$ Boolean matrices $\mathbf{A}$ and $\mathbf{B}$; \\
  Output: & An edge-labeled digraph $G= (V, E)$, where $|V| = 3n$ and $|E|$ is
  equal to the number of non-zero entries in both $\mathbf{A}$ and $\mathbf{B}$. 
\end{problem}
\paragraph{Intuition.} A Boolean matrix is a matrix with entries from the set $\{0, 1\}$.
Given two $n \times n$ Boolean matrices $\mathbf{A}$ and $\mathbf{B}$, the BMM
problem is to compute the
product $\mathbf{C} = \mathbf{A}\times \mathbf{B}$, whose entries are defined by
$c_{ij} = \bigvee_{k=1}^{n} (a_{ik} \wedge b_{kj})$.
That is, $c_{ij} = 1$ if and only if there exists an integer  $k\in [0, n-1]$ such that
$a_{ik} = b_{kj} = 1$. Our basic reduction idea
is to treat every non-zero element $a_{ij} \in \mathbf{A}$ as a directed edge 
labeled by an open parenthesis ``$[_1$'' in
the output digraph $G$. Similarly, we treat every non-zero element
$b_{ij} \in \mathbf{B}$ as a ``$]_1$''-labeled edge in $G$. Our reduction
guarantees that  every non-zero
element $c_{ij} \in \mathbf{C}$ corresponds to a pair of nodes joined by a balanced-parenthesis path
(\myie, $D_1$-path) in
 graph $G$.

\begin{algorithm}[!t]
\footnotesize
\SetKwData{Null}{Null}
\SetKwData{oldIn}{oldIn}
\SetKwData{Up}{up}
\SetKwData{current}{current$_u$}
\SetKwData{previous}{previous$_u$}
\SetKwData{lastc}{last$_u$}
\SetKwData{lastp}{last$_p$}
\SetKwData{outaf}{\textsc{Out}$_{\alpha}$}
\SetKwData{outd}{out$_d$}
\SetKwData{outdb}{out$_{\bar{d}}$}
\SetKwData{rowu}{Row$_u$}

%\SetKwFunction{IN}{IN}
\SetKwFunction{Add}{Add}
\SetKwFunction{Init}{Init}
\SetKwFunction{PF}{ParititionFunc}
\SetKwInOut{Input}{Input}
\SetKwInOut{Output}{Output}
\DontPrintSemicolon

\Input{Two $n\times n$ Boolean matrices $\mathbf{A}$ and $\mathbf{B}$;  }
\Output{An edge-labeled graph $G=(V, E)$.}

\BlankLine
Introduce  nodes $x_i$, $y_i$, $z_i$ to $G$ where $i\in[0, n-1]$ \nllabel{algo:bmm1}\\

\ForEach{element $a_{ij} \in \mathbf{A}$ \nllabel{algo:bmm2}}{
  \If{$a_{ij}$ is $1$}{
    Insert edge $x_i \xrightarrow{[_1} y_j$ to $G$ \nllabel{algo:bmm4}\\ 
  }

}
\ForEach{element $b_{ij} \in \mathbf{B}$ }{
  \If{$b_{ij}$ is $1$}{
    Insert edge $y_i \xrightarrow{]_1} z_j$ to $G$ \nllabel{algo:bmm3}\\ 
  }
}

\caption{Reduction from Boolean matrix multiplication to $D_1$-reachability.}\label{algo:bmm2d1}

\end{algorithm}

\paragraph{Reduction.} Algorithm~\ref{algo:bmm2d1} gives the reduction
procedure. Given two $n\times n$ Boolean matrices $\mathbf{A}$ and $\mathbf{B}$, we introduce $3n$ nodes in
our graph (line~\ref{algo:bmm1}). We then insert edges based on the input matrices (lines~\ref{algo:bmm2}-\ref{algo:bmm3}).
Let $m$ be the number of non-zero entries in the input
matrices. Algorithm~\ref{algo:bmm2d1} outputs a digraph with $m$ edges.

\paragraph{Correctness.}
Let $V_x = \{x_0, \ldots,
x_{n-1}\}$, $V_y = \{y_0, \ldots, y_{n-1}\}$, and $V_z = \{z_0, \ldots, z_{n-1}\}$.
It is clear that Algorithm~\ref{algo:bmm2d1} is a linear-time
reduction. Specifically, given two $n \times n$ matrices, the reduction
generates a graph $G = (V, E)$ with $3n$ nodes where $V = V_x \cup
V_y \cup V_z$. 
To show $D_1$-reachability is BMM-hard, it suffices to prove the following
theorem on reduction correctness.

\begin{theorem}[Reduction Correctness]\label{thm:corr1}
Algorithm~\ref{algo:bmm2d1} is a linear-time reduction which takes as input two Boolean matrices $\mathbf{A}$ and
$\mathbf{B}$ and outputs a digraph $G = (V, E)$ where $V = V_x \cup V_y \cup
V_z$. Let $\mathbf{C} = \mathbf{A} \times \mathbf{B}$.
Element $c_{ij}\in \mathbf{C}$ is non-zero iff $z_j \in V_z$ is $D_1$-reachable from
$x_i \in V_x$ in $G$.
\end{theorem}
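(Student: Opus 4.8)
The plan is to prove the biconditional by first pinning down the complete structure of paths in the output graph $G$, then matching $D_1$-reachable pairs to the Boolean inner product that defines $\mathbf{C}$. The crucial observation is that $G$ is three-layered: by construction every edge either goes from a node of $V_x$ to a node of $V_y$ and carries the label $[_1$, or goes from a node of $V_y$ to a node of $V_z$ and carries the label $]_1$. Nodes in $V_x$ have no incoming edges and nodes in $V_z$ have no outgoing edges. Consequently, I would argue that any path originating at some $x_i \in V_x$ and ending at some $z_j \in V_z$ must have length exactly two and must pass through a single intermediate node $y_k \in V_y$, realizing the string $[_1 ]_1$. No shorter path connects the two layers, and no longer path is possible because once a path enters $V_y$ its only continuation is a single $]_1$-edge into the sink layer $V_z$.

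Next I would identify which strings realized by $x_i$-to-$z_j$ paths lie in the Dyck language $D_1$. Since every such path realizes exactly $[_1 ]_1$, and since $[_1 ]_1$ is derivable from the start symbol (via $D_1 \to S$, then $S \to [_1 S ]_1$ with the inner $S \to \epsilon$), every $x_i$-to-$z_j$ path is automatically a $D_1$-path. In particular, no path between these two layers can realize an unbalanced string, so here ``$z_j$ is $D_1$-reachable from $x_i$'' collapses to the purely graph-theoretic statement ``there exists a (length-two) path from $x_i$ to $z_j$.'' This is the step that makes the single-parenthesis alphabet sufficient: the layered construction forbids every realized string except the balanced $[_1 ]_1$.

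Finally I would connect path existence to the matrix product. By the edge-insertion rules of Algorithm~\ref{algo:bmm2d1}, the edge $x_i \xrightarrow{[_1} y_k$ is present iff $a_{ik} = 1$, and the edge $y_k \xrightarrow{]_1} z_j$ is present iff $b_{kj} = 1$. Hence a length-two path $x_i \to y_k \to z_j$ exists for some $k$ iff $\exists k\,(a_{ik} = 1 \wedge b_{kj} = 1)$, which is exactly the condition $c_{ij} = \bigvee_{k} (a_{ik} \wedge b_{kj}) = 1$. Chaining the three equivalences yields that $c_{ij}$ is non-zero iff $z_j$ is $D_1$-reachable from $x_i$, as claimed; the linear-time bound is immediate since the graph has $3n$ nodes and one edge per non-zero matrix entry.

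I do not expect a genuine obstacle here: the reduction is simple and the argument is essentially a case analysis over the three-layered structure. The one point that deserves care — and the part I would state most explicitly — is the structural claim that $[_1 ]_1$ is the only string any $x_i$-to-$z_j$ path can realize, since it is precisely this rigidity that rules out spurious $D_1$-reachable pairs and guarantees the equivalence is tight in both directions.
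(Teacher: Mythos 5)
Your proposal is correct and follows essentially the same route as the paper's own proof: the ``$\Rightarrow$'' direction reads off the two edges $x_i\xrightarrow{[_1}y_k$ and $y_k\xrightarrow{]_1}z_j$ from a witness $k$ with $a_{ik}=b_{kj}=1$, and the ``$\Leftarrow$'' direction uses the three-layered structure to force every relevant path to have length two and realize $[_1]_1$. Your only (harmless) refinement is to make explicit that \emph{every} $x_i$-to-$z_j$ path realizes the balanced string $[_1]_1$, so $D_1$-reachability between the outer layers collapses to plain reachability — a point the paper leaves implicit in its length-two argument.
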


\begin{proof}
We  show that each non-zero element $c_{ij}$ corresponds to a
$D_1$-reachable pair in the constructed graph $G$ and vice versa.
\begin{itemize}
\item \emph{The ``$\Rightarrow$'' direction.} Based on the BMM definition, if
  $c_{ij} = 1$ there exists at least one $k$ such at $a_{ik} = b_{kj} =
  1$. Algorithm~\ref{algo:bmm2d1} inserts $x_i\xrightarrow{[_1} y_k$ (line~\ref{algo:bmm4}) and
    $y_k\xrightarrow{]_1} z_j$ (line~\ref{algo:bmm3}) to $G$. We have a path string $\mathcal{L}(x_i,
  y_k)\mathcal{L}(y_k, z_j) = [_1]_1 \in D_1$. Therefore, node $z_j$ is
  $D_1$-reachable from $x_i$.
\item \emph{The ``$\Leftarrow$'' direction.} Our constructed graph $G$ is a
  $3$-layered graph, \myie, graph $G$ contains three node sets $V_x$, $V_y$, $V_z$. 
All ``$[_1$''-labeled edges go from nodes in $V_x$ to nodes in $V_y$. All ``$]_1$''-labeled edges go from nodes in $V_y$ to nodes in $V_z$.
Therefore, the
  length of every
  $D_1$-path in $G$ is $2$. Specifically, every $D_1$-path begins with a node $x_i \in
  V_x$, passes through a node $y_k \in V_y$ and ends at a node $z_j \in
  V_z$. Based on Algorithm~\ref{algo:bmm2d1}, such a path corresponds to $a_{ik}
  = b_{kj} = 1$. Therefore, we have a non-zero element $c_{ij} \in \mathbf{C}$.
\end{itemize}
\end{proof}

\subsection{Reducing $D_1$-Reachability to Inclusion-Based Points-to Analysis}\label{subsec:d1}

\begin{problem}[framed]{{\large Reduction 2: From $D_1$-Reachability to Inclusion-Based
    Points-to Analysis}}
  Input: & An edge-labeled digraph $G= (V, E)$; \\
  Output: & A normalized C-style program with $4|V|+2|E|$ statements. 
\end{problem}

\paragraph{Intuition.} Our reduction takes as input a generic digraph $G = (V,
E)$ where each edge is labeled by either ``$[_1$'' or ``$]_1$''. 
Note that in $D_1$-reachability, every node is $D_1$-reachable from itself by an
empty path since
the $D_1$ nonterminal is nullable. However,  according to Definition~\ref{def:pt},
the points-to relation is not reflexive and we cannot generally
assume that variable \codeIn{p} points to its own location \codeIn{\&p}, \myie,
$\mathit{loc}(p) \notin \mathit{pt}(p)$ without any proper \textsc{Address-of}
statement in Table~\ref{tab:ptcons}.
To handle the reflexivity, we introduce a \emph{pointer variable set} $\mathit{Var}_b$
and a \emph{pointer address set} $\mathit{Var}_g$ in program $P$. 
For each node
$v \in V$, we construct a pointer variable $v \in \mathit{Var}_b$ and
$v^\prime\in \mathit{Var}_g$ in  program
$P$. In particular, we use $\mathit{loc}(v^\prime)$ to ``replace''
$\mathit{loc}(v)$ in our reduction
and insert additional program statements to make $\mathit{loc}(v^\prime) \in
\mathit{pt}(v)$ in program $P$.
We also introduce \emph{auxiliary nodes} $\mathit{Var}_w$ in $P$ to facilitate edge construction for $G$.
 Our key insight is
that the statements involving a dereferencing \codeIn{*a} and a referencing
\codeIn{\&a} exhibit a balanced-parentheses property. As a result, our reduction
constructs a C-style program $P$ with
those statements to express open- and close-parenthesis edges in $G$.
Finally, the reduction guarantees that each $D_1$-reachable pair $u\xrightarrow{D_1} v$ in
$G$ corresponds to the fact that $u \in \mathit{Var}_b$ points to $v^\prime \in
\mathit{Var}_g$ in program $P$, \myie,  $\mathit{loc}(v^\prime) \in \mathit{pt}(u)$.

\begin{algorithm}[!t]
\footnotesize
\SetKwData{Null}{Null}
\SetKwData{oldIn}{oldIn}
\SetKwData{Up}{up}
\SetKwData{current}{current$_u$}
\SetKwData{previous}{previous$_u$}
\SetKwData{lastc}{last$_u$}
\SetKwData{lastp}{last$_p$}
\SetKwData{outaf}{\textsc{Out}$_{\alpha}$}
\SetKwData{outd}{out$_d$}
\SetKwData{outdb}{out$_{\bar{d}}$}
\SetKwData{rowu}{Row$_u$}

%\SetKwFunction{IN}{IN}
\SetKwFunction{Add}{Add}
\SetKwFunction{Init}{Init}
\SetKwFunction{PF}{ParititionFunc}
\SetKwInOut{Input}{Input}
\SetKwInOut{Output}{Output}
\DontPrintSemicolon

\Input{An edge-labeled digraph $G = (V, E)$;  }
\Output{A C-style program $P$.}

\BlankLine

Introduce $|E|+3|V|$ temporary variables $t_i \in \mathit{Var}_w$ to program $P$ \\
$i \leftarrow 1$\\

\ForEach{node $v \in V$ \nllabel{algo:pt1}}{
  \tcp{construct $\mathit{Var}_b$ and $\mathit{Var}_g$.}
  Introduce variables $v \in \mathit{Var}_b$ and $v^\prime \in \mathit{Var}_g$ 
  to program $P$ \nllabel{algo:nodemap}\\

  \tcp{make $v$ point to $v^\prime$, \myie, $\mathit{loc}(v^\prime) \in \mathit{pt}(v)$.}
  Insert a statement ``$\codeInM{v = *t_i;}$'' to $P$ \nllabel{algo:pte1} \\
  Insert a statement ``$\codeInM{t_i = t_{i+1};}$'' to $P$  \\
  Insert a statement ``$\codeInM{t_{i+1} = \&t_{i+2};}$'' to $P$  \\
  Insert a statement ``$\codeInM{t_{i+2} = \&v^\prime;}$'' to $P$  \nllabel{algo:pte2}\\
  $i\leftarrow i+3$ \nllabel{algo:pt2}
}

%Let $A$ and $B$ be the numbers of ``$[_1$''- and ``$]_1$''-labeled edges in
%$G$, respectively \\

\ForEach{edge $(u, v) \in E$\nllabel{algo:pt3}}{
  \If{edge label $\mathcal{L}(u,v)$ is $[_1$}{
    \tcp{encode open parentheses.}
    Insert a statement ``$\codeInM{t_{i} = \&u;}$'' to $P$ \nllabel{algo:d21} \\
    Insert a statement ``$\codeInM{*v = t_i;}$'' to $P$ \nllabel{algo:d22} \\
    $i\leftarrow i+1$
  }\Else{
      \tcp{encode close parentheses.}
    Insert a statement ``$\codeInM{u = \&t_i;}$'' to $P$ \nllabel{algo:d31} \\
    Insert a statement ``$\codeInM{*t_i = y;}$'' to $P$ \nllabel{algo:d32} \\
    $i\leftarrow i+1$ \nllabel{algo:pt4}
  }

}

\caption{Reduction from $D_1$-reachability to inclusion-based points-to analysis.}\label{algo:d1topt}

\end{algorithm}

\paragraph{Reduction.}  Algorithm~\ref{algo:d1topt} gives the reduction from
$D_1$-reachability to inclusion-based points-to
analysis. 
It inserts statements to program $P$ based on the nodes (lines~\ref{algo:pt1}-\ref{algo:pt2}) and edges (lines~\ref{algo:pt3}-\ref{algo:pt4}) in the input
graph $G$.
Line~\ref{algo:nodemap} maps every node $v \in V$ to a unique element in 
$\mathit{Var}_b$ and $\mathit{Var}_g$, respectively.
Therefore, the mapping between any two of sets $V$, $\mathit{Var}_b$ and
$\mathit{Var}_g$ is a bijection.

\paragraph{Correctness.} Algorithm~\ref{algo:d1topt} is a linear-time
reduction. Specifically, given a graph $G = (V, E)$ with $|V|$ nodes and  $|E|$ edges, our reduction outputs
a normalized C-style program with $5|V|+|E|$ variables and $4|V|+2|E|$ statements.
Algorithm~\ref{algo:d1topt} maps every node $v\in V$ in the input graph to a
unique variable $v\in \mathit{Var}_b$ and a unique variable $v^\prime \in \mathit{Var}_g$ in
the output program.
Based on Theorem~\ref{thm:bmm1}, to show that points-to analysis is BMM-hard, it suffices to prove the following
theorem on reduction correctness.
\begin{theorem}[Reduction Correctness]\label{thm:corr2}
Algorithm~\ref{algo:d1topt} takes as input a digraph $G=(V, E)$ and outputs a
C-style program $P$ with $O(E)$ variables and $O(E)$ statements. All nodes $v\in V$
are represented as variables $\codeInM{v}$ and $\codeInM{v^\prime}$ in $P$.
Node $v$ is $D_1$-reachable from node $u$ in $G$ iff $\mathit{loc}(v^\prime) \in
\mathit{pt}(u)$ (\myie, variable $\codeInM{u}$ points to variable
$\codeInM{v^\prime}$) in program $P$ based on solving the inclusion constraints given in
Table~\ref{tab:ptcons}.
\end{theorem}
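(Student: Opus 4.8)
The plan is to work directly with the least solution $\mathit{pt}$ of the set-constraint system that Algorithm~\ref{algo:d1topt} attaches to $P$, and to match its facts of the form $\mathit{loc}(v^\prime)\in\mathit{pt}(u)$ against $D_1$-reachability in $G$. First I would record the semantic effect of the three gadgets. The node gadget (lines~\ref{algo:pte1}--\ref{algo:pte2}) forces $\mathit{loc}(v^\prime)\in\mathit{pt}(v)$ for every node $v$: the chain ``$\codeInM{t_{i+2}=\&v^\prime}$; $\codeInM{t_{i+1}=\&t_{i+2}}$; $\codeInM{t_i=t_{i+1}}$; $\codeInM{v=*t_i}$'' makes $t_{i+2}\in\mathit{pt}(t_i)$ and then loads $\mathit{pt}(t_{i+2})=\{\mathit{loc}(v^\prime)\}$ into $\mathit{pt}(v)$, and these temporaries carry nothing else. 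An open-paren edge $p\xrightarrow{[_1}q$ yields ``$\codeInM{t=\&p}$; $\codeInM{*q=t}$'', so by Table~\ref{tab:ptcons} it asserts $\mathit{loc}(p)\in\mathit{pt}(w)$ for every $w\in\mathit{pt}(q)$; a close-paren edge $p\xrightarrow{]_1}q$ yields ``$\codeInM{p=\&t}$; $\codeInM{*t=q}$'', so it puts $t$ into $\mathit{pt}(p)$ and asserts $\mathit{pt}(q)\subseteq\mathit{pt}(w)$ for every $w\in\mathit{pt}(t)$. Note the gadgets deliberately use the fresh address-taken variable $v^\prime$ rather than $v$, so that the reflexive empty $D_1$-path (the nonterminal $D_1$ is nullable) is modelled by $\mathit{loc}(v^\prime)\in\mathit{pt}(v)$ without ever demanding the non-reflexive fact $\mathit{loc}(v)\in\mathit{pt}(v)$.

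For the ``$\Rightarrow$'' direction I would prove the strengthened invariant: \emph{if $p\xrightarrow{D_1}q$ then $\mathit{pt}(q)\subseteq\mathit{pt}(p)$}, by induction on the structure (parse tree / edge count) of the $D_1$-path. The empty path gives $p=q$ and is trivial; a concatenation $p\xrightarrow{D_1}r\xrightarrow{D_1}q$ follows by transitivity of $\subseteq$ from the two shorter sub-paths. The crucial case is the nesting rule $a\xrightarrow{[_1}m_1\xrightarrow{D_1}m_2\xrightarrow{]_1}c$. Let $s$ be the temporary of the edge $a\xrightarrow{[_1}m_1$ (so $\mathit{loc}(a)\in\mathit{pt}(s)$) and $r$ the temporary of $m_2\xrightarrow{]_1}c$ (so $\mathit{loc}(r)\in\mathit{pt}(m_2)$). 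The induction hypothesis gives $\mathit{pt}(m_2)\subseteq\mathit{pt}(m_1)$, hence $r\in\mathit{pt}(m_1)$; the statement $\codeInM{*m_1=s}$ then forces $\mathit{loc}(a)\in\mathit{pt}(r)$, and finally the statement $\codeInM{*r=c}$ forces $\mathit{pt}(c)\subseteq\mathit{pt}(a)$. Combined with $\mathit{loc}(c^\prime)\in\mathit{pt}(c)$ from the node gadget, this yields $\mathit{loc}(c^\prime)\in\mathit{pt}(a)$, as required. The mechanism is exactly parenthesis matching: the shared middle node lets the ``$[_1$'' temporary deposit $\mathit{loc}(a)$ into the ``$]_1$'' temporary, which then fires the close-paren store.

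For the harder ``$\Leftarrow$'' direction I would exhibit an explicit over-approximating assignment $\widehat{\mathit{pt}}$, prove it satisfies all four constraint types of Table~\ref{tab:ptcons}, and then invoke minimality of the least solution to conclude $\mathit{pt}\subseteq\widehat{\mathit{pt}}$. The key is to also pin down the \emph{auxiliary} facts living in the temporaries and the primed variables, which are driven by a ``semi-balanced'' reachability relation: write $a\rightsquigarrow m$ when there is a path $a\xrightarrow{[_1}a_1\xrightarrow{D_1}m$ (one unmatched open parenthesis followed by a balanced segment). I would set, for a close-paren temporary $r$ of edge $m\xrightarrow{]_1}c$, $\widehat{\mathit{pt}}(r)=\{\mathit{loc}(a):a\rightsquigarrow m\}$; for a primed variable, $\widehat{\mathit{pt}}(v^\prime)=\{\mathit{loc}(a):a\rightsquigarrow v\}$; for the open-paren and node temporaries their fixed singleton sets; and for each node $v$, $\widehat{\mathit{pt}}(v)=\{\mathit{loc}(w^\prime):v\xrightarrow{D_1}w\}\cup\{\mathit{loc}(r):r\text{ a close-paren temporary of an edge out of some }w\text{ with }v\xrightarrow{D_1}w\}$. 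Checking closure is where the two relations interlock: the store $\codeInM{*q=t}$ for an edge $a\xrightarrow{[_1}q$ is satisfied because $q\xrightarrow{D_1}w$ implies $a\rightsquigarrow w$; and the store $\codeInM{*r=c}$ for $m\xrightarrow{]_1}c$ only fires at nodes $a$ with $a\rightsquigarrow m$, for which $a\rightsquigarrow m\xrightarrow{]_1}c$ closes the bracket, giving $a\xrightarrow{D_1}c$ and hence the needed $\widehat{\mathit{pt}}(c)\subseteq\widehat{\mathit{pt}}(a)$. Since no unprimed node location ever reaches the points-to set of a node variable under $\widehat{\mathit{pt}}$, the primed facts satisfy $\mathit{loc}(v^\prime)\in\widehat{\mathit{pt}}(u)\iff u\xrightarrow{D_1}v$ by construction; with $\mathit{pt}\subseteq\widehat{\mathit{pt}}$ and the ``$\Rightarrow$'' direction this pins the primed facts exactly.

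The main obstacle is precisely this soundness direction: one must guess the correct candidate $\widehat{\mathit{pt}}$, in particular the semi-balanced relation governing the close-paren temporaries, and then verify that the two dereference (\textsc{assign-star}/\textsc{star-assign}) constraints are closed without introducing any spurious store. The delicate point is that an \emph{unbalanced} path should never manufacture a primed fact; this is guaranteed because $\rightsquigarrow$ carries exactly one pending open parenthesis, so a close-paren store can only fire after that single parenthesis is matched, never when the prefix is already balanced or over-nested. I would present the closure verification constraint-by-constraint (it is routine once $\widehat{\mathit{pt}}$ is fixed) and foreground the $\rightsquigarrow$/$\xrightarrow{D_1}$ interplay as the conceptual core; the $O(E)$ variable/statement bounds and the linear running time follow immediately from the structure of Algorithm~\ref{algo:d1topt}, and Theorem~\ref{thm:bmm2} then follows by composing with Theorem~\ref{thm:bmm1}.
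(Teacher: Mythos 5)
Your proposal is correct, but it reaches Theorem~\ref{thm:corr2} by a genuinely different route than the paper. The paper never argues directly over the constraint system: it first recasts Andersen's analysis as $\mathit{Pt}$-reachability over a pointer expression graph (Lemma~\ref{lem:peg2pt}), and then performs a chain of language-level simplifications exploiting the shape/color structure of the constructed PEG --- irreversibility of circle nodes via \textsc{Follow} sets (Lemma~\ref{lem:reversing}), the Isolation Lemma (Lemma~\ref{lem:iso}), a colored grammar, and the Non-transitivity Lemma (Lemma~\ref{lem:nontran}) --- until $\mathit{Pt}$ collapses to $D_1^\prime$, which is isomorphic to $D_1$ (Theorem~\ref{thm:corr3}). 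You instead work with the least solution of the Table~\ref{tab:ptcons} constraints: the forward direction by induction on the Dyck derivation, where your nesting case $a\xrightarrow{[_1}m_1\xrightarrow{D_1}m_2\xrightarrow{]_1}c$ correctly threads $\mathit{loc}(a)$ from the open-paren temporary through the matched middle nodes into the close-paren temporary (this checks out against the gadget semantics, reading the paper's statement ``$\codeInM{*t_i = y}$'' as the evidently intended ``$\codeInM{*t_i = v}$''); and the backward direction by exhibiting a closed over-approximation $\widehat{\mathit{pt}}$ and invoking minimality of the least fixpoint. Your semi-balanced relation $\rightsquigarrow$ is precisely the semantic counterpart of the paper's $\overline{S}$ nonterminal, which likewise tracks exactly one pending open parenthesis between the two temporaries, and your model components for node variables ($\mathit{loc}(w^\prime)$ for $v\xrightarrow{D_1}w$ plus the close-paren temporaries of edges leaving $D_1$-reachable nodes) are what the paper's Isolation and Non-transitivity Lemmas extract structurally rather than semantically. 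What your route buys: it is more elementary and self-contained, it verifies the theorem literally as stated (``based on solving the inclusion constraints given in Table~\ref{tab:ptcons}''), and it does not lean on Lemma~\ref{lem:peg2pt}, whose proof the paper leaves informal. What the paper's route buys: the grammar-level machinery is reused wholesale in Section~\ref{sec:imp}, where the six statement-type combinations of Corollary~\ref{cor:uni} are dispatched by deleting production rules and rechecking Lemmas~\ref{lem:iso} and~\ref{lem:nontran}, whereas your model $\widehat{\mathit{pt}}$ would have to be re-guessed and re-verified per gadget variant (routine, but case-by-case work). If you write this up, make one implicit step explicit in the closure check: the open-paren and node-gadget temporaries keep their stated singleton sets because their addresses are never taken (respectively, are consumed only by the single assign-star), so no \textsc{Star-assign} constraint can enlarge them, which is exactly why an over-nested prefix such as ``$[_1[_1$'' cannot leak $\mathit{loc}(a)$ past the first unmatched parenthesis.
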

We discuss  reduction correctness
in Sections~\ref{sec1:pt1} and~\ref{sec1:pt2}.

\section{Inclusion-Based Points-to Analysis via $\mathit{Pt}$-Reachability}\label{sec1:pt1}
This section describes a graphical representation of any normalized C-style program $P$. We denote the
representation as a \emph{pointer expression graph} (PEG) $G_P$.
Each PEG node
 corresponds to a pointer expression $e \in \{\text{\codeIn{\&a}, \codeIn{a},
   \codeIn{*a}}\}$ in $P$, where \codeIn{a} is a pointer variable.
Each PEG edge $(u, v)$ is labeled by a letter $\mathcal{L}(u, v) \in \{r,
\overline{r}, s, \overline{s}, as, \overline{as}, sa, \overline{sa}, d, \overline{d}\}$.
Specifically, PEGs contain
 two kinds of edges defined as follows.

\begin{itemize}
\item \emph{Program Edge:} For each \textsc{Address-of} statement ``\codeIn{a =
  \&b}'' in $P$, we insert an edge $\codeInM{a}\xrightarrow{r}\codeInM{\&b}$ in $G_P$. For each \textsc{Assignment} statement ``\codeIn{a =
  b}'' in $P$, we insert an edge $\codeInM{a}\xrightarrow{s}\codeInM{b}$ in $G_P$.  For each \textsc{Assign-star} statement ``\codeIn{a =
  *b}'' in $P$, we insert an edge $\codeInM{a}\xrightarrow{\mathit{as}}\codeInM{*b}$ in $G_P$.  For each \textsc{Star-assign} statement ``\codeIn{*a =
  b}'' in $P$, we insert an edge $\codeInM{*a}\xrightarrow{\mathit{sa}}\codeInM{b}$ in $G_P$.   
\item \emph{Dereference Edge:} For each variable $\codeInM{a} \in P$, we insert
  two edges
  $\codeInM{\&a}\xrightarrow{d}\codeInM{a}$ and
  $\codeInM{a}\xrightarrow{d}\codeInM{*a}$ in $G_P$. 
\end{itemize}

PEGs are bidirected. Let $t$ be an edge label. For each edge $u\xrightarrow{t}v$, there  always
exists a
reverse edge $v\xrightarrow{\overline{t}}u$ in $G_P$. Similarly,  for each edge $u\xrightarrow{\overline{t}}v$ , there  always
exists a
reverse edge $v\xrightarrow{t}u$. 
Based on Table~\ref{tab:ptcons}, there are four types of program statement in a
normalized program $P$.
In PEG $G_P$, each program edge  corresponds to a statement in
$P$. The dereference edges respect the pointer semantics described in
Section~\ref{sec:pdpt}. Thus, it is straightforward to see that the
mapping between $G_P$ and $P$ is bijective.
\begin{lemma}\label{lem:peg}
The pointer expression graph $G_P$ is equivalent to program $P$.
\end{lemma}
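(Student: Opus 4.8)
The plan is to exhibit an explicit bijection between the program $P$ and its pointer expression graph $G_P$, and to check that this correspondence loses no information in either direction, which is precisely what ``equivalent'' should mean here. First I would observe that the construction of $G_P$ from $P$ is a deterministic procedure and hence a well-defined function: the node set of $G_P$ is exactly the collection of pointer expressions $\{\codeIn{\&a}, \codeIn{a}, \codeIn{*a}\}$ ranging over the variables of $P$, and each of the four statement forms in Table~\ref{tab:ptcons} contributes exactly one \emph{program edge} whose label ($r$, $s$, $\mathit{as}$, or $\mathit{sa}$) is uniquely determined by the statement type. The remaining edges --- the barred reverse edges and the dereference edges $\codeInM{\&a}\xrightarrow{d}\codeInM{a}$ and $\codeInM{a}\xrightarrow{d}\codeInM{*a}$ --- carry no program-specific content; they are fixed canonically once the node set is known.

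For the inverse direction I would show that $P$ is recoverable from $G_P$. The key point is that the four forward program-edge labels $\{r, s, \mathit{as}, \mathit{sa}\}$ are pairwise distinct and each corresponds to exactly one statement form, so reading off every program edge $\codeInM{a}\xrightarrow{r}\codeInM{\&b}$, $\codeInM{a}\xrightarrow{s}\codeInM{b}$, $\codeInM{a}\xrightarrow{\mathit{as}}\codeInM{*b}$, and $\codeInM{*a}\xrightarrow{\mathit{sa}}\codeInM{b}$ reconstructs the corresponding statement ``\codeIn{a = \&b}'', ``\codeIn{a = b}'', ``\codeIn{a = *b}'', and ``\codeIn{*a = b}''. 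Since the dereference edges ($d$, $\overline{d}$) and all barred reverse edges are determined solely by the node set and the forward program edges, they introduce no ambiguity: discarding them and retaining only the forward program edges yields exactly the statement set of $P$. Composing the two directions shows that $P \mapsto G_P$ is injective and surjective onto the class of well-formed PEGs, which gives the claimed equivalence.

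The main --- and essentially the only --- point requiring care is arguing that the structural edges (reverse edges and dereference edges) do not compromise injectivity of the correspondence. Because these edges are generated uniformly and depend only on the node set, two distinct programs cannot collapse to the same $G_P$ unless they already agree on all forward program edges, in which case they have identical statement sets. I would make this precise by noting that the dereference edges form a canonical skeleton attached to every node and that bidirectedness is a closure operation applied to the forward edges, so neither mechanism adds a degree of freedom that could break the bijection. Given how directly the construction is engineered, the formal argument is short, matching the ``straightforward'' remark preceding the statement.
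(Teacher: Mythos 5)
Your proposal is correct and follows essentially the same route as the paper, which justifies the lemma with exactly the observation you elaborate: each of the four statement types corresponds bijectively to a uniquely labeled program edge, while the dereference and reverse edges are canonical and add no program-specific information, so the mapping between $G_P$ and $P$ is bijective. You merely spell out in detail what the paper dismisses as ``straightforward.''
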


Next, we describe the \textsc{PA} constraint resolution rules  given in
Table~\ref{tab:ptcons} using CFL-reachability.
We define a context-free grammar $\mathit{Pt} = (\Sigma_g, N_g, P_g, S_g)$ for points-to
analysis. Specifically, the alphabet $\Sigma_g = \{d,\overline{d}, r,\overline{r},
s,\overline{s}, \mathit{as},\overline{as}, \mathit{sa},\overline{sa}\}$ contains
the edge labels in PGE $G_P$. The grammar contains three nonterminals with a start
symbol $\mathit{Pt}$, \myie, $N_g =
\{S,\overline{S}, \mathit{Pt}, \overline{Pt}\}$ and $S_g = \{\mathit{Pt}\}$.

Table~\ref{tab:ptgraph} describes four types of constraints. In particular, the ``Meaning'' column in Table~\ref{tab:ptgraph} shows that the four constrains can be expressed in terms
of the points-to and subset constraints.
Let  $\mathit{Var}$ and $\mathit{Addr}$ be the sets of pointer variables and
variable addresses in program $P$, respectively. 
The points-to constraint defines a binary relation $\textsc{Pt}_{\in} \in
\mathit{Var}\times\mathit{Addr}$ and the subset constraint defines a binary
relation $\textsc{Subset}_{\subseteq} \in \mathit{Var}\times\mathit{Var}$. For
instance, $(p, \codeInM{\&q}) \in \textsc{Pt}_{\in}$ corresponds to the
points-to constraint $\mathit{loc}(q) \in \mathit{pt}(p)$ in Table~\ref{tab:ptcons}, \myie,  \codeIn{p} points to \codeIn{q}.
Similarly, $(p, q) \in \textsc{Subset}_{\subseteq}$ corresponds to the
subset constraint $\mathit{pt}(q) \subseteq \mathit{pt}(p)$ in
Table~\ref{tab:ptcons}, \myie,  \codeIn{q}'s points-to set is a subset of \codeIn{p}'s.

\begin{table}[t]
\begin{center}
\caption{Constraints for points-to analysis via graph reachability.}\label{tab:ptgraph}
\begin{tabular}{ r l c l}
\hline
Type & Statement & Constraint & Meaning\\
\hline
\adjustbox{valign=c}{\textsc{Address-of}} & \adjustbox{valign=c}{\codeIn{a = \&b}} &
\adjustbox{valign=c}{\begin{tikzpicture}
%      \fill[white] rectangle (\cellw,\cellh) ;
        \node [style={circle, inner sep=1pt,scale=0.8}]           (a) at (0,0)  { \codeIn{a} };
        \node [style={circle, inner sep=1pt,scale=0.8}]           (b) at
        (2,0)  { \codeIn{\&b} };
\begin{scope}[>=stealth]
   \draw[->] (a) edge node[above, midway,scale=0.6] {$r$}  (b);
   \draw[dashed,->] (a) edge[bend right=-30] node[above, midway,scale=0.6] {$\mathit{Pt}$}  (b);
\end{scope}
    \end{tikzpicture}} &\multirow{2}{*}{\adjustbox{valign=c}{$\begin{aligned} 
\mathit{Pt} \rightarrow &~S~r~\mid~r\\
\mathit{S} \rightarrow &~S~S\\
\mathit{S} \rightarrow &~s
\end{aligned}$
}} \\
\textsc{Assignment} & \codeIn{a = b} &
\begin{tikzpicture}
%      \fill[white] rectangle (\cellw,\cellh) ;
        \node [style={circle, inner sep=1pt,scale=0.8}]           (a) at (0,0)  { \codeIn{a} };
        \node [style={circle, inner sep=1pt,scale=0.8}]           (b) at
        (2,0)  { \codeIn{b} };
\begin{scope}[>=stealth]
   \draw[->] (a) edge node[above, midway,scale=0.6] {$s$}  (b);
   \draw[dashed,->] (a) edge[bend right=-30] node[above, midway,scale=0.6] {$S$}  (b);
\end{scope}
    \end{tikzpicture} &  \\
\textsc{Assign-star} & \codeIn{a = *b} &
\adjustbox{valign=c}{\begin{tikzpicture}
%      \fill[white] rectangle (\cellw,\cellh) ;
        \node [style={circle, inner sep=1pt,scale=0.8}]           (a) at (0,0)  { \codeIn{a} };
        \node [style={circle, inner sep=1pt,scale=0.8}]           (b) at
        (1,0)  { \codeIn{*b} };
        \node [style={circle, inner sep=1pt,scale=0.8}]           (b1) at
        (1,-1)  { \codeIn{b} };
        \node [style={circle, inner sep=1pt,scale=0.8}]           (c) at
        (2,0)  { \codeIn{v} };
        \node [style={circle, inner sep=1pt,scale=0.8}]           (c1) at
        (2,-1)  { \codeIn{\&v} };
\begin{scope}[>=stealth]
   \draw[->] (a) edge node[below, midway,scale=0.6] {$\mathit{as}$}  (b);
   \draw[->] (b1) edge node[left, midway,scale=0.6] {$d$}  (b);
   \draw[->] (c1) edge node[right, midway,scale=0.6] {$d$}  (c);
   \draw[->] (b1) edge node[above, midway,scale=0.6] {$\mathit{Pt}$}  (c1);
   \draw[dashed,->] (a) edge[bend right=-25] node[above, midway,scale=0.6] {$S$}  (c);

\end{scope}
    \end{tikzpicture}}& \adjustbox{valign=c}{$\begin{aligned} 
\mathit{S} \rightarrow &~\mathit{as}~\overline{d}~\mathit{Pt}~d
\end{aligned}$} \\
\textsc{Star-assign} & \codeIn{*a = b} & 
\adjustbox{valign=c}{\begin{tikzpicture}
%      \fill[white] rectangle (\cellw,\cellh) ;
        \node [style={circle, inner sep=1pt,scale=0.8}]           (a) at (2,0)  { \codeIn{b} };
        \node [style={circle, inner sep=1pt,scale=0.8}]           (b) at
        (0,0)  { \codeIn{*a} };
        \node [style={circle, inner sep=1pt,scale=0.8}]           (b1) at
        (0,-1)  { \codeIn{a} };
        \node [style={circle, inner sep=1pt,scale=0.8}]           (c) at
        (1,0)  { \codeIn{v} };
        \node [style={circle, inner sep=1pt,scale=0.8}]           (c1) at
        (1,-1)  { \codeIn{\&v} };
\begin{scope}[>=stealth]
   \draw[dashed,->] (c) edge node[below, midway,scale=0.6] {$S$}  (a);
   \draw[->] (b1) edge node[left, midway,scale=0.6] {$d$}  (b);
   \draw[->] (c1) edge node[right, midway,scale=0.6] {$d$}  (c);
   \draw[->] (b1) edge node[above, midway,scale=0.6] {$\mathit{Pt}$}  (c1);
   \draw[->] (b) edge[bend right=-25] node[above, midway,scale=0.6] {$\mathit{sa}$}  (a);

\end{scope}
    \end{tikzpicture}} & \adjustbox{valign=c}{$\begin{aligned} 
\mathit{S} \rightarrow &~\overline{d}~\overline{\mathit{Pt}}~d~\mathit{sa}
\end{aligned}$} \\
\hline 
\end{tabular}

\end{center}
\end{table}

In our $Pt$-reachability formulation, we use two nonterminals $\mathit{Pt}$ and
$S$ to express the two relations $\textsc{Pt}_{\in}$ and
$\textsc{Subset}_{\subseteq}$, respectively.
That is,  $(p, \codeInM{\&q}) \in \textsc{Pt}_{\in}$ iff
$p\xrightarrow{\mathit{Pt}}\codeInM{\&q}$ in PEGs and 
$(p, q) \in \textsc{Subset}_{\subseteq}$ iff $p\xrightarrow{S}q$ in PEGs.
Nonterminal $\overline{S}$ and $\overline{Pt}$ denote the inverses of $S$ and
$\mathit{Pt}$, respectively.
Next, we discuss the semantics of the four types of constrains in
Table~\ref{tab:ptgraph} and express them using a CFG $\mathit{Pt}$.
%\todo{double
%  check tb1 tb2 thing.}

\begin{itemize}
\item \textsc{Assignment}: The assignment statement \codeIn{a = b} in program $P$ is represented as an edge $a\xrightarrow{s}
b$ edge in the PEG. 
The \textsc{Assignment} constraint means the \codeIn{b}'s points-to set
  is a subset of \codeIn{a}'s.  The resolution in Table~\ref{tab:ptcons} adds an
  inclusion constraint between $\mathit{pt}(b)$ and $\mathit{pt}(a)$, \myie,
  $(a, b) \in \textsc{Subset}_\subseteq$. 
Therefore, we have a summary edge $a\xrightarrow{S}b$ in PEG.
Recall that the new summary is generated based on edge $a\xrightarrow{s}
b$, we encode it using a rule $S\rightarrow s$. Since inclusion
  constraints are always transitive and reflexive, we describe this
  using a rule $S\rightarrow S~S \mid \epsilon$.
\item \textsc{Address-of}: The address-of statement \codeIn{a = \&b} in program $P$ is represented as an edge $a\xrightarrow{r}
\&b$ edge in the PEG. 
The \textsc{Address-of} constraint means variable \codeIn{b} belongs to
  \codeIn{a}'s points-to set. The resolution in Table~\ref{tab:ptcons} assigns \codeIn{b}'s location
  $\mathit{loc}(b)$ to \codeIn{a}'s points-to set $\mathit{pt}(a)$, \myie,   $(a, \codeInM{\&b}) \in \textsc{Pt}_\in$.
Therefore, we have a summary edge $a\xrightarrow{\mathit{Pt}}\codeInM{\&b}$
based on the PEG edge $a\xrightarrow{\mathit{r}}\codeInM{\&b}$.
Note that \codeIn{\&b} should also belong to
  all supersets of $\mathit{pt}(a)$, \myie, $(c, \codeInM{\&b}) \in
  \textsc{Pt}_\in$ for all c such that  $(c, a) \in \textsc{Subset}_\subseteq$. 
In PEG, we insert a summary edge $c\xrightarrow{\mathit{Pt}}\codeInM{\&b}$ based
on $c\xrightarrow{\mathit{S}}a$ and $a\xrightarrow{\mathit{r}}\codeInM{\&b}$.
Since nonterminal $S$ is nullable, we can combine these cases by describing them using a rule
  $\mathit{Pt} \rightarrow S~r$.
\item \textsc{Assign-star}: The assign-star statement \codeIn{a = *b} in program $P$ is represented as an edge $a\xrightarrow{\mathit{as}}
*b$ in the PEG. 
The \textsc{Assign-star} constraint means that, for all variables
  \codeIn{v} that \codeIn{b} points to (\myie, for all $v$ such that $(b, \codeInM{\&v}) \in
  \textsc{Pt}_\in$), \codeIn{v}'s points-to set should be a
  subset of \codeIn{a}'s (\myie, $(a, v) \in \textsc{Subset}_\subseteq$). 
The relation $(b, \codeInM{\&v}) \in
  \textsc{Pt}_\in$ is expressed as a summary edge $b\xrightarrow{\mathit{Pt}}
  \codeInM{\&v}$. 
The newly generated relation $(a, v) \in
  \textsc{Subset}_\subseteq$ can be described as a summary edge $a\xrightarrow{\mathit{S}} v$ in the PEG.
To sum up, we insert a summary edge $a\xrightarrow{\mathit{S}} v$ based on $a\xrightarrow{\mathit{as}}
*b$ and $b\xrightarrow{\mathit{Pt}}
  \codeInM{\&v}$.
In the PEG, there are two deference edges $*b\xrightarrow{\overline{d}}
b$ and $\codeInM{\&v}\xrightarrow{d}v$ that bridge the gaps.
Note that the deference edges always exist among pointer expressions \mywrt the
pointer semantics.
Therefore, in the PEG,  we generate a new summary edge
$\codeInM{a}\xrightarrow{S}\codeInM{v}$ based on four summary edges
$\codeInM{a}\xrightarrow{\mathit{as}}\codeInM{*b}\xrightarrow{\overline{d}}\codeInM{b}\xrightarrow{\mathit{Pt}}\codeInM{\&v}\xrightarrow{d}\codeInM{v}$. Finally,
we describe it
using a rule   $\mathit{S} \rightarrow \mathit{as}~\overline{d}~\mathit{Pt}~d$.
\item \textsc{Star-assign}: 
The star-assign statement \codeIn{*a = b} in program $P$ is represented as an edge $*a\xrightarrow{\mathit{sa}}
b$ in the PEG. 
The \textsc{Star-assign} constraint means that,
for all variables
  \codeIn{v} that \codeIn{a} points to (\myie, for all $v$ such that $(a, \codeInM{\&v}) \in
  \textsc{Pt}_\in$), \codeIn{b}'s points-to set should be a
  subset of \codeIn{v}'s (\myie, $(v, b) \in \textsc{Subset}_\subseteq$).  
The relation $(a, \codeInM{\&v}) \in
  \textsc{Pt}_\in$ is expressed as a summary edge $a\xrightarrow{\mathit{Pt}}
  \codeInM{\&v}$. 
The newly generated relation $(v, b) \in
  \textsc{Subset}_\subseteq$ can be described as a summary edge $v\xrightarrow{\mathit{S}} b$ in the PEG.
To sum up, we insert a summary edge $v\xrightarrow{\mathit{S}} b$ based on a
reverse edge $\codeInM{\&v}\xrightarrow{\overline{Pt}}a$ and $*a\xrightarrow{\mathit{sa}}
b$.
Like the \textsc{Assign-star} case, there are two deference edges $v\xrightarrow{\overline{d}}
\&v$ and $a\xrightarrow{d}*a$ in the PEG.
Therefore,
we generate a new summary edge
$\codeInM{v}\xrightarrow{S}\codeInM{b}$ based on four summary edges
$\codeInM{v}\xrightarrow{\overline{d}}\codeInM{\&v}\xrightarrow{\overline{\mathit{Pt}}}\codeInM{a}\xrightarrow{d}\codeInM{*a}\xrightarrow{\mathit{sa}}\codeInM{b}$. We describe it
using a rule   $\mathit{S} \rightarrow \overline{d}~\overline{\mathit{Pt}}~d~\mathit{sa}$.
\end{itemize}

The context-free grammar described in Table~\ref{tab:ptgraph} fully captures the
constraint resolution in Table~\ref{tab:ptcons}. Specifically, each terminal 
summary edge depicts either a program statement in $P$ or a pointer dereference
(\myie, initial constraint) and each nonterminal summary edge
describes a new constraint generated during constraint resolution. And all
constraints are encoded using summary edges.

To distinguish itself from the other grammars used in our discussion, we rename the
$\mathit{Pt}$ nonterminal in Table~\ref{tab:ptgraph} to $\mathit{PA}$. We give the full
productions in Figure~\ref{fig:gpa1} by expanding all inverse nonterminals. Combined with
Lemma~\ref{lem:peg}, we have the following equivalence result.

\begin{lemma}\label{lem:peg2pt}
Given a program $P$ and its representative PEG, computing all-pairs
$\mathit{Pt}$-reachability in PEG is equivalent to
computing inclusion-based pointer analysis of $P$. Specifically, node
\codeIn{\&b} is $\mathit{Pt}$-reachable from node \codeIn{a} in PEG iff
$\mathit{loc}(b) \in \mathit{pt}(a)$.
\end{lemma}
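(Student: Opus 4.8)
The plan is to establish the lemma by proving the two binary-relation equivalences that underlie it \emph{simultaneously}, namely $(a,\codeInM{\&b}) \in \textsc{Pt}_{\in} \iff a \xrightarrow{\mathit{Pt}} \codeInM{\&b}$ in $G_P$ and $(a,b) \in \textsc{Subset}_{\subseteq} \iff a \xrightarrow{S} b$ in $G_P$. A simultaneous argument is unavoidable because, both in the constraint-resolution rules of Table~\ref{tab:ptcons} and in the grammar of Table~\ref{tab:ptgraph}, the points-to relation and the subset relation are mutually recursive (e.g. \textsc{Assign-star} derives a new $S$-fact from a $\mathit{Pt}$-fact, while \textsc{Address-of} derives a new $\mathit{Pt}$-fact from an $S$-fact). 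Using Lemma~\ref{lem:peg} to identify $P$ with $G_P$, the lemma statement then follows immediately: $\mathit{loc}(b) \in \mathit{pt}(a)$ is by definition $(a,\codeInM{\&b}) \in \textsc{Pt}_{\in}$, which the first equivalence turns into $\mathit{Pt}$-reachability of $\codeInM{\&b}$ from $a$. I would prove each equivalence in two directions.

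For the \emph{completeness} direction (every derived constraint has a summary edge), I would induct on the order in which facts are produced by the least-fixed-point computation of Algorithm~\ref{algo:pt} (equivalently, on the round of chaotic iteration at which a fact first becomes stable). The base cases are the initial constraints: each \textsc{Address-of} statement \codeIn{a = \&b} seeds $(a,\codeInM{\&b}) \in \textsc{Pt}_{\in}$ and is matched by the program edge $a \xrightarrow{r} \codeInM{\&b}$ together with the production $\mathit{Pt} \rightarrow r$, while each \textsc{Assignment} statement \codeIn{a = b} seeds $(a,b) \in \textsc{Subset}_{\subseteq}$ and is matched by $a \xrightarrow{s} b$ with $S \rightarrow s$. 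For the inductive step, every resolution rule in Table~\ref{tab:ptcons} (reflexivity and transitivity of subset, \textsc{Assign-star}, \textsc{Star-assign}, and the closure rule $\mathit{Pt} \rightarrow S\,r$ that propagates an address through subset edges) is the exact image of one grammar production; since its premises are earlier facts, the induction hypothesis supplies the corresponding shorter summary edges, and the bidirected dereference edges $\codeInM{*b} \xrightarrow{\overline{d}} \codeInM{b}$ and $\codeInM{\&v} \xrightarrow{d} \codeInM{v}$ (which always exist by construction) bridge the gaps so that the production fires and yields the required $\mathit{Pt}$- or $S$-edge.

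For the \emph{soundness} direction (every summary edge corresponds to a real constraint), I would induct on the size of the derivation tree witnessing the $\mathit{Pt}$- or $S$-labeled path. For each production I would read it as an instance of the matching rule of Table~\ref{tab:ptcons}: a subtree rooted at $\mathit{Pt}$ terminates, by the type discipline of the PEG, at an address node $\codeInM{\&b}$ and certifies $(a,\codeInM{\&b}) \in \textsc{Pt}_{\in}$; a subtree rooted at $S$ connects two variable nodes and certifies a subset fact; and the reverse nonterminals $\overline{\mathit{Pt}}$ and $\overline{S}$ produced by expanding Figure~\ref{fig:gpa1} certify the reverse lookups demanded by \textsc{Star-assign}. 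Because the least solution of the constraints is by definition closed under all four rules, each such certified fact lies in it, giving the ``$\Rightarrow$'' inclusions.

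The main obstacle I anticipate is the interaction of the \emph{mutual recursion} with the \emph{inverse nonterminals}: the \textsc{Star-assign} production $S \rightarrow \overline{d}\,\overline{\mathit{Pt}}\,d\,\mathit{sa}$ forces $\mathit{Pt}$, $S$, $\overline{\mathit{Pt}}$, and $\overline{S}$ to be defined all at once, so the two equivalences cannot be proved in isolation and I must choose a single well-founded measure (fixed-point round for completeness, derivation-tree size for soundness) that decreases across all four nonterminals simultaneously. Two auxiliary facts must be nailed down to make this go through: first, that the bidirectedness of $G_P$ guarantees a reverse summary edge $v \xrightarrow{\overline{\mathit{Pt}}} u$ exists exactly when $u \xrightarrow{\mathit{Pt}} v$ does, so the $\overline{\mathit{Pt}}$ occurrence in \textsc{Star-assign} is faithful; and second, a ``type-correctness'' invariant stating that every $\mathit{Pt}$-path from a variable node must end at an address node in $\mathit{Addr}$, which rules out spurious reachability and makes the reachability relation coincide with $\textsc{Pt}_{\in}$ rather than merely contain it.
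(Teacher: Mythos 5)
Your proposal is correct and follows essentially the same route as the paper: the paper establishes Lemma~\ref{lem:peg2pt} by exactly the rule-to-production correspondence you describe (each of the four constraint-resolution rules of Table~\ref{tab:ptcons} mirrored by one production of Table~\ref{tab:ptgraph}, with dereference edges bridging the gaps and Lemma~\ref{lem:peg} identifying $P$ with $G_P$), leaving the induction implicit. Your simultaneous two-direction induction (fixed-point rounds for completeness, derivation-tree size for soundness) together with the bidirectedness and type-correctness invariants is just the rigorous scaffolding for the argument the paper sketches informally.
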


\subsection{PEG $G_P$ Generated in Reduction}
To facilitate the reduction from $D_1$-reachability, the C-style program $P$ generated in Section~\ref{subsec:d1}
has some structural properties. Those properties  play a pivotal role to show the
correctness described in Theorem~\ref{thm:corr2}.

Unless otherwise noted, we refer to $G_P$ as the PEG that corresponds to the
output program $P$ of
Algorithm~\ref{algo:d1topt} in our discussion.
Algorithm~\ref{algo:d1topt} partitions 
the variables in the output program $P$ into three disjoint sets
$\mathit{Var}_b$, $\mathit{Var}_g$, and $\mathit{Var}_w$. Specifically, 
each node $v \in G$ in
the $D_1$-reachability instance becomes two variables $v \in \mathit{Var}_b$ and
$v^\prime \in \mathit{Var}_g$ (line~\ref{algo:nodemap}). 
Each variable $t \in \mathit{Var}_w$ is an
auxiliary variable which has been used in either node-processing
(lines~\ref{algo:pt1}-\ref{algo:pt2}) or in edge-processing
(line~\ref{algo:pt3}-\ref{algo:pt4}) of Algorithm~\ref{algo:d1topt}.
We further partition $\mathit{Var}_w$ into two disjoint sets $\mathit{Var}_{w1}$
and $\mathit{Var}_{w2}$ based on node-processing and edge-processing, respectively.
In the constructed PEG $G_P$, we associate each node with a color and a shape:
\begin{itemize}
\item A \emph{white node} (\myie, \tikz\draw[black,fill=white] (0,0) circle (.5ex); or $\square$) represents a variable $t \in \mathit{Var}_{w1} \cup
  \mathit{Var}_{w2}$. White nodes correspond to  auxiliary variables used in $P$.
\item A \emph{black node} (\myie, $\blacksquare$) represents a variable $v \in \mathit{Var}_{b}$. Black
  nodes correspond to the graph nodes in $D_1$-reachability. Black nodes 
  appear in the points-to query.
\item A \emph{gray node} (\myie, \tikz\draw[black,fill=gray,opacity=0.8] (0,0) circle (.5ex);) represents a variable $ v^\prime \in \mathit{Var}_{g}$. Gray
  nodes correspond to the graph nodes in $D_1$-reachability. Gray nodes 
  appear as the address-taken variables in the points-to query.
\item A \emph{square node} (\myie, $\square$ or $\blacksquare$) represents a variable $v \in \mathit{Var}_{b} \cup
  \mathit{Var}_{w2} $. Those nodes are constructed to model the graph
  edges in  $D_1$-reachability shown in Figure~\ref{fig:g2peg}.
\item A \emph{circle node} (\myie, \tikz\draw[black,fill=white] (0,0) circle (.5ex); or \tikz\draw[black,fill=gray,opacity=0.8] (0,0) circle (.5ex);) represents a variable $v \in \mathit{Var}_{g} \cup
  \mathit{Var}_{w1}$. Those nodes are constructed to model the graph
  nodes in  $D_1$-reachability based on Figure~\ref{fig:g2peg}.
\end{itemize}

\begin{lemma}\label{pty:node}
Based on Figure~\ref{fig:g2peg}, Algorithm~\ref{algo:d1topt} maps each node $v
\in G$ to a black square and a gray circle node in PEG $G_P$. Both mappings are bijective.
\end{lemma}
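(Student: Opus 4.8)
The plan is to reduce the two claimed bijections to a single bookkeeping fact about Algorithm~\ref{algo:d1topt}: that line~\ref{algo:nodemap} is the \emph{only} step introducing variables into $\mathit{Var}_b$ and $\mathit{Var}_g$, and that it introduces exactly one variable of each kind per node of $G$. Once this is established, the variable-level bijections $V \leftrightarrow \mathit{Var}_b$ and $V \leftrightarrow \mathit{Var}_g$ follow immediately, and the color/shape conventions together with Figure~\ref{fig:g2peg} let me promote them to the desired bijections onto black square and gray circle PEG nodes.

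First I would scan every line of Algorithm~\ref{algo:d1topt}. The node-processing loop (lines~\ref{algo:pt1}--\ref{algo:pt2}) introduces, for each $v \in V$, exactly one fresh $v \in \mathit{Var}_b$ and one fresh $v^\prime \in \mathit{Var}_g$ (line~\ref{algo:nodemap}), while its remaining statements reference only temporaries $t_i \in \mathit{Var}_w$. The edge-processing loop (lines~\ref{algo:pt3}--\ref{algo:pt4}) emits statements over the already-introduced node variables and fresh temporaries $t_i$, but never declares a new element of $\mathit{Var}_b$ or $\mathit{Var}_g$. Hence $v \mapsto v$ and $v \mapsto v^\prime$ are total functions $V \to \mathit{Var}_b$ and $V \to \mathit{Var}_g$; they are injective because distinct nodes receive distinct fresh names, and surjective because every element of $\mathit{Var}_b$ (resp.\ $\mathit{Var}_g$) is created by some loop iteration. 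This yields the two variable-level bijections. Next I would lift these to PEG nodes: by the construction of Section~\ref{sec1:pt1}, each variable $a$ gives rise to the pointer-expression nodes \codeIn{a}, \codeIn{\&a}, and \codeIn{*a}, and the color/shape conventions single out, for each node variable, a unique such expression --- the black square for $v \in \mathit{Var}_b$ is the PEG node \codeIn{v}, and the gray circle for $v^\prime \in \mathit{Var}_g$ is the PEG node $\codeInM{\&v^\prime}$ used in the points-to query (consistent with Lemma~\ref{lem:peg2pt}, where the query target is an address expression). Since $\mathit{Var}_b$, $\mathit{Var}_g$, and $\mathit{Var}_w$ are pairwise disjoint, no expression receives two colors, so the correspondences \emph{node $\to$ black square} and \emph{node $\to$ gray circle} are again bijective, as depicted in Figure~\ref{fig:g2peg}.

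The main obstacle is not conceptual but one of careful accounting. I must verify the disjointness of the partition $\mathit{Var}_b \cup \mathit{Var}_g \cup \mathit{Var}_w$ and confirm that the edge-processing phase genuinely reuses node variables rather than declaring fresh ones --- in particular that the \codeIn{*v} and \codeIn{\&u} occurrences in lines~\ref{algo:d21}--\ref{algo:d32} are expressions over existing $\mathit{Var}_b$ variables, not new declarations. I would also check that the shape/color rules assign exactly one PEG node per node variable, so that ``black square'' and ``gray circle'' each name a single expression; this expression-level identification is precisely where Figure~\ref{fig:g2peg} does the real work, and I would make its role explicit rather than leave it implicit.
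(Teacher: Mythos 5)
Your proof is correct and takes essentially the same route as the paper: the paper states this lemma without a separate proof, relying on precisely the observation you formalize --- that line~\ref{algo:nodemap} is the sole source of $\mathit{Var}_b$ and $\mathit{Var}_g$ variables, introducing exactly one fresh pair per node of $G$, so the variable-level maps are bijections, which the shape/color conventions of Figure~\ref{fig:g2peg} transfer to the black-square PEG node $\codeInM{v}$ and the gray-circle PEG node $\codeInM{\&v^\prime}$. Your additional checks (disjointness of the partition $\mathit{Var}_b \cup \mathit{Var}_g \cup \mathit{Var}_w$, and that the edge-processing loop only reuses existing node variables rather than declaring new ones) are exactly the bookkeeping the paper leaves implicit in its remark following Algorithm~\ref{algo:d1topt}.
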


\begin{figure}[t]
\begin{center}
\small
\begin{tabular}{ c|c| c |c}
\hline
Construction Type &Input Graph $G$   & PEG $G_P$ & Program $P$\\
\hline
\multirow{2}{*}[-1cm]{\small \textsc{Edge-with-sa}} &
\adjustbox{valign=c}{\begin{tikzpicture}
%      \fill[white] rectangle (\cellw,\cellh) ;

        \node [style={circle,draw, inner sep=1pt,scale=0.85}]           (a) at (0,0)  { $x$ };
        \node [style={circle,draw, inner sep=1pt,scale=0.8}]           (c) at (1,0)  { $y$ };

   \draw[->] (a) -- node[above, midway,scale=0.8] {$[_1$}  (c);

    \end{tikzpicture}}    &

\adjustbox{valign=c}{\begin{tikzpicture}
%      \fill[white] rectangle (\cellw,\cellh) ;

        \node [style={rectangle,inner sep=.8pt}]           (a) at (0,0)  { $\blacksquare$};
        \node [style={rectangle,inner sep=.8pt}]           (b) at (0,-.8)  { $\square$};
        \node [style={rectangle,inner sep=.8pt}]           (c) at (0.8,-0.8)  { $\square$};
        \node [style={rectangle,inner sep=.8pt}]           (d) at (1.6,-0.8)  { $\square$};
        \node [style={rectangle,inner sep=.8pt}]           (e) at (1.6,-1.6)  { $\blacksquare$};

        \node [style={rectangle,inner sep=.8pt}, left=2pt of a,scale=.8]           (x)   {
          $x$};
        \node [style={rectangle,inner sep=.8pt}, right=2pt of e,scale=.8]           (y)   {
          $y$};
        \node [style={rectangle,inner sep=.8pt}]           (t) at (-0.8,0) {};
        \node [style={rectangle,inner sep=.8pt},above=10pt of a]           (t) {};
        \node [style={rectangle,inner sep=.8pt},below=10pt of e]           (t)  {};

\begin{scope}
   \draw[->] (a) -- node[left, midway,scale=0.7] {$\overline{d}$}  (b);
   \draw[->] (b) -- node[above, midway,scale=0.7] {$\overline{r}$}  (c);
   \draw[->] (c) -- node[above, midway,scale=0.7] {$\overline{sa}$}  (d);
   \draw[->] (d) -- node[right, midway,scale=0.7] {$\overline{d}$}  (e);
\end{scope}
    \end{tikzpicture}}
& \shortstack{\codeIn{t$_i$ = \&x;} \\ \codeIn{*y = t$_i$;}}
 \\
%\hline
\cline{2-4}
 &\adjustbox{valign=c}{\begin{tikzpicture}
%      \fill[white] rectangle (\cellw,\cellh) ;

        \node [style={circle,draw, inner sep=1pt,scale=0.85}]           (a) at (0,0)  { $x$ };
        \node [style={circle,draw, inner sep=1pt,scale=0.8}]           (c) at (1,0)  { $y$ };

   \draw[->] (a) -- node[above, midway,scale=0.8] {$]_1$}  (c);

    \end{tikzpicture}}    &

\adjustbox{valign=c}{\begin{tikzpicture}
%      \fill[white] rectangle (\cellw,\cellh) ;

        \node [style={rectangle,inner sep=.8pt}]           (a) at (0,0)  { $\blacksquare$};
        \node [style={rectangle,inner sep=.8pt}]           (b) at (0.8,0)  { $\square$};
        \node [style={rectangle,inner sep=.8pt}]           (c) at (0.8,0.8)  { $\square$};
        \node [style={rectangle,inner sep=.8pt}]           (d) at (0.8,1.6)  { $\square$};
        \node [style={rectangle,inner sep=.8pt}]           (e) at (1.6,1.6)  { $\blacksquare$};

        \node [style={rectangle,inner sep=.8pt}, left=2pt of a,scale=.8]           (x)   {
          $x$};
        \node [style={rectangle,inner sep=.8pt}, right=2pt of e,scale=.8]           (y)   {
          $y$};
        \node [style={rectangle,inner sep=.8pt},]           (t) at (-0.8,0) {};
        \node [style={rectangle,inner sep=.8pt},above=10pt of e]           (t)
              {};
        \node [style={rectangle,inner sep=.8pt},below=10pt of a]           (t)  {};

\begin{scope}
   \draw[->] (a) -- node[above, midway,scale=0.7] {$r$}  (b);
   \draw[->] (b) -- node[right, midway,scale=0.7] {$d$}  (c);
   \draw[->] (c) -- node[right, midway,scale=0.7] {$d$}  (d);
   \draw[->] (d) -- node[above, midway,scale=0.7] {$\mathit{sa}$}  (e);
\end{scope}
    \end{tikzpicture}}
 & \shortstack{\codeIn{x = \&t$_i$;} \\ \codeIn{*t$_i$ = y;}} \\
\hline

{\small\textsc{Node-with-as-s}}& \adjustbox{valign=c}{\begin{tikzpicture}
%      \fill[white] rectangle (\cellw,\cellh) ;
        \node [style={circle, inner sep=1pt,scale=0.8}]           (c) at (0,0)  { };
        \node [style={circle,draw, inner sep=1pt,scale=0.85}]           (a) at (0.5,0)  { $x$ };
        \node [style={circle, inner sep=1pt,scale=0.8}]           (c) at (1,0)  {  };

%   \draw[->] (a) -- node[above, midway,scale=0.8] {$]_1$}  (c);

    \end{tikzpicture}}    &

\adjustbox{valign=c}{\begin{tikzpicture}
%      \fill[white] rectangle (\cellw,\cellh) ;
\definecolor{red}{gray}{0.7}
        \node [style={rectangle,inner sep=.8pt}]           (a) at (0,0)  { $\blacksquare$};
        \node [style={rectangle,inner sep=.8pt}]           (b) at (0.8,0)  {
          $\circ$};
        \node [style={rectangle,inner sep=.8pt}]           (b1) at (0.8,-.8)  { $\circ$};
        \node [style={rectangle,inner sep=.8pt}]           (c) at (1.6,-0.8)  { $\circ$};
        \node [style={rectangle,inner sep=.8pt}]           (d) at (2.4,-0.8)  {
          $\circ$};
        \node [style={rectangle,inner sep=.8pt}]           (d1) at (2.4,0)  { $\circ$};
        \node [style={circle,draw, inner sep=2pt, fill=red}]           (e) at (3.2,0)  { };

        \node [style={rectangle,inner sep=.8pt}, below=2pt of a,scale=.8]           (x)   {
          $x$};
        \node [style={rectangle,inner sep=.8pt}, below=2pt of e,scale=.8]           (y)   {
          $\&x^\prime$};
        \node [style={rectangle,inner sep=.8pt},above=10pt of a]           (t) {};
        \node [style={rectangle,inner sep=.8pt},below=10pt of c]           (t)  {};

\begin{scope}
   \draw[->] (a) -- node[above, midway,scale=0.7] {$\mathit{as}$}  (b);
   \draw[->] (b) -- node[left, midway,scale=0.7] {$\overline{d}$}  (b1);
   \draw[->] (b1) -- node[above, midway,scale=0.7] {$s$}  (c);
   \draw[->] (c) -- node[above, midway,scale=0.7] {$r$}  (d); 
   \draw[->] (d) -- node[right, midway,scale=0.7] {$d$}  (d1); 
   \draw[->] (d1) -- node[above, midway,scale=0.7] {$r$}  (e);
\end{scope}
    \end{tikzpicture}}
& \shortstack{\codeIn{x = *t$_i$;} \\ \codeIn{t$_i$ = t$_{i+1}$;}
  \\ \codeIn{t$_{i+1}$ = \&t$_{i+2}$;} \\ \codeIn{t$_{i+2}$ = \&x$^\prime$;}} \\

\hline 
\end{tabular}
\caption{Edge construction for PEG $G_P$.} \label{fig:g2peg}
\end{center}
\end{figure}

\begin{figure}[t]
\centering
\begin{tikzpicture}
\definecolor{red}{gray}{0.7}
        \node [style={rectangle,inner sep=.8pt}]           (v1) at (0,-1)  { $\blacksquare$};
        \node [style={rectangle,inner sep=.8pt}, left=2pt of v1,scale=.8]           (v1desc)   {$v_1$};

        \node [style={rectangle,inner sep=.8pt}]           (starv1) at (0,0)  { $\square$};
        \node [style={rectangle,inner sep=.8pt}, above=2pt of starv1,scale=.8]           (starv1desc)   {$*v_1$};

        \node [style={rectangle,inner sep=.8pt}]           (t1) at (-1, 0)  { $\square$};
        \node [style={rectangle,inner sep=.8pt}, below=2pt of t1,scale=.8]           (t1desc)   {$t_1$};

        \node [style={rectangle,inner sep=.8pt}]           (adru0) at (-2,0)  { $\square$};
        \node [style={rectangle,inner sep=.8pt}, below=2pt of adru0,scale=.8]           (adru0desc)   {$\&u_0$};

        \node [style={rectangle,inner sep=.8pt}]           (u0) at (-2, 1)  { $\blacksquare$};
        \node [style={rectangle,inner sep=.8pt}, above=2pt of u0,scale=.8]           (u0desc)   {$u_0$};

        \node [style={rectangle,inner sep=.8pt}]           (start4) at (-3, 1)  { $\circ$};
        \node [style={rectangle,inner sep=.8pt}, above=2pt of start4,scale=.8]           (start4desc)   {$*t_4$};

        \node [style={rectangle,inner sep=.8pt}]           (t4) at (-3,0)  { $\circ$};
        \node [style={rectangle,inner sep=.8pt}, below=2pt of t4,scale=.8]           (t4desc)   {$t_4$};

        \node [style={rectangle,inner sep=.8pt}]           (t5) at (-4, 0)  { $\circ$};
        \node [style={rectangle,inner sep=.8pt}, below=2pt of t5,scale=.8]           (t5desc)   {$t_5$};

        \node [style={rectangle,inner sep=.8pt}]           (adrt6) at (-5,0)  { $\circ$};
        \node [style={rectangle,inner sep=.8pt}, below=2pt of adrt6,scale=.8]           (adrt6desc)   {$\&t_6$};

        \node [style={rectangle,inner sep=.8pt}]           (t6) at (-5,1)  { $\circ$};
        \node [style={rectangle,inner sep=.8pt}, above=2pt of t6,scale=.8]           (t6desc)   {$t_6$};

        \node [style={circle,draw, inner sep=2pt, fill=red}]           (adruprime0) at (-4,1)  {};
        \node [style={rectangle,inner sep=.8pt}, below=2pt of adruprime0,scale=.8]           (adruprime0desc)   {$\&u^\prime_0$};

        \node [style={rectangle,inner sep=.8pt}]           (adrt2) at (1, -1)  { $\square$};
        \node [style={rectangle,inner sep=.8pt}, right=2pt of adrt2,scale=.8]           (adrt2desc)   {$\&t_2$};

        \node [style={rectangle,inner sep=.8pt}]           (t2) at (1, 0)  { $\square$};
        \node [style={rectangle,inner sep=.8pt}, right=2pt of t2,scale=.8]           (t2desc)   {$t_2$};

        \node [style={rectangle,inner sep=.8pt}]           (start2) at (1,1)  { $\square$};
        \node [style={rectangle,inner sep=.8pt}, above=2pt of start2,scale=.8]           (start2desc)   {$*t_2$};

        \node [style={rectangle,inner sep=.8pt}]           (w2) at (2,1)  { $\blacksquare$};
        \node [style={rectangle,inner sep=.8pt}, above=2pt of w2,scale=.8]           (w2desc)   {$w_2$};

        \node [style={rectangle,inner sep=.8pt}]           (start10) at (3,1)  { $\circ$};
        \node [style={rectangle,inner sep=.8pt}, above=2pt of start10,scale=.8]           (start10desc)   {$*t_{10}$};

        \node [style={rectangle,inner sep=.8pt}]           (t10) at (3,0)  { $\circ$};
        \node [style={rectangle,inner sep=.8pt}, below=2pt of t10,scale=.8]           (t10desc)   {$t_{10}$};

        \node [style={rectangle,inner sep=.8pt}]           (t11) at (4, 0)  { $\circ$};
        \node [style={rectangle,inner sep=.8pt}, below=2pt of t11,scale=.8]           (t11desc)   {$t_{11}$};

        \node [style={rectangle,inner sep=.8pt}]           (adrt12) at (5, 0)  { $\circ$};
        \node [style={rectangle,inner sep=.8pt}, below=2pt of adrt12,scale=.8]           (adrt12desc)   {$\&t_{12}$};

        \node [style={rectangle,inner sep=.8pt}]           (t12) at (5,1)  { $\circ$};
        \node [style={rectangle,inner sep=.8pt}, right=2pt of t12,scale=.8]           (t12desc)   {$t_{12}$};

        \node [style={circle,draw, inner sep=2pt, fill=red}]           (adrwprime2) at (4,1)  {};
        \node [style={rectangle,inner sep=.8pt}, below=2pt of adrwprime2,scale=.8]           (adrwprime2desc)   {$\&w^\prime_2$};

        \node [style={rectangle,inner sep=.8pt}]           (adrt3) at (-1,-3)  { $\square$};
        \node [style={rectangle,inner sep=.8pt}, left=2pt of adrt3,scale=.8]           (adrt3desc)   {$\&t_{3}$};

        \node [style={rectangle,inner sep=.8pt}]           (t3) at (-1,-2)  { $\square$};
        \node [style={rectangle,inner sep=.8pt}, left=2pt of t3,scale=.8]           (t3desc)   {$t_{3}$};

        \node [style={rectangle,inner sep=.8pt}]           (start3) at (-1,-1)  { $\square$};
        \node [style={rectangle,inner sep=.8pt}, above=2pt of start3,scale=.8]           (start3desc)   {$*t_{3}$};

        \node [style={rectangle,inner sep=.8pt}]           (w3) at (-2,-1)  { $\blacksquare$};
        \node [style={rectangle,inner sep=.8pt}, above=2pt of w3,scale=.8]           (w3desc)   {$w_3$};

        \node [style={rectangle,inner sep=.8pt}]           (start13) at (-3,-1)  { $\circ$};
        \node [style={rectangle,inner sep=.8pt}, left=2pt of start13,scale=.8]           (start13desc)   {$*t_{13}$};

        \node [style={rectangle,inner sep=.8pt}]           (t13) at (-3,-2)  { $\circ$};
        \node [style={rectangle,inner sep=.8pt}, below=2pt of t13,scale=.8]           (t13desc)   {$t_{13}$};

        \node [style={rectangle,inner sep=.8pt}]           (t14) at (-4,-2)  { $\circ$};
        \node [style={rectangle,inner sep=.8pt}, below=2pt of t14,scale=.8]           (t14desc)   {$t_{14}$};

        \node [style={rectangle,inner sep=.8pt}]           (adrt15) at (-5,-2)  { $\circ$};
        \node [style={rectangle,inner sep=.8pt}, below=2pt of adrt15,scale=.8]           (adrt15desc)   {$\&t_{15}$};

        \node [style={rectangle,inner sep=.8pt}]           (t15) at (-5,-1)  { $\circ$};
        \node [style={rectangle,inner sep=.8pt}, right=2pt of t15,scale=.8]           (t15desc)   {$t_{15}$};

        \node [style={circle,draw, inner sep=2pt, fill=red}]           (adrwprime3) at (-6, -1)  {};
        \node [style={rectangle,inner sep=.8pt}, below=2pt of adrwprime3,scale=.8]           (adrwprime3desc)   {$\&w^\prime_3$};

        \node [style={rectangle,inner sep=.8pt}]           (start7) at (1,-2)  { $\circ$};
        \node [style={rectangle,inner sep=.8pt}, right=2pt of start7,scale=.8]           (start7desc)   {$*t_{7}$};

        \node [style={rectangle,inner sep=.8pt}]           (t7) at (1,-3)  { $\circ$};
        \node [style={rectangle,inner sep=.8pt}, below=2pt of t7,scale=.8]           (t7desc)   {$t_{7}$};

        \node [style={rectangle,inner sep=.8pt}]           (t8) at (2,-3)  { $\circ$};
        \node [style={rectangle,inner sep=.8pt}, below=2pt of t8,scale=.8]           (t8desc)   {$t_{8}$};

        \node [style={rectangle,inner sep=.8pt}]           (adrt9) at (3,-3)  { $\circ$};
        \node [style={rectangle,inner sep=.8pt}, below=2pt of adrt9,scale=.8]           (adrt9desc)   {$\&t_{9}$};

        \node [style={rectangle,inner sep=.8pt}]           (t9) at (3,-2)  { $\circ$};
        \node [style={rectangle,inner sep=.8pt}, above=2pt of t9,scale=.8]           (t9desc)   {$t_{9}$};

        \node [style={circle,draw, inner sep=2pt, fill=red}]           (adrvprime1) at (4,-2)  {};
        \node [style={rectangle,inner sep=.8pt}, right=2pt of adrvprime1,scale=.8]           (adrvprime1desc)   {$\&v^\prime_1$};

        % edges

        \draw[->] (v1) -- node[right, midway,scale=0.7] {$\mathit{d}$}  (starv1);
        \draw[->] (starv1) -- node[above, midway,scale=0.7] {$\mathit{sa}$}  (t1);
        \draw[->] (t1) -- node[above, midway,scale=0.7] {$\mathit{r}$}  (adru0);
        \draw[->] (adru0) -- node[right, midway,scale=0.7] {$\mathit{d}$}  (u0);
        \draw[->] (u0) -- node[above, midway,scale=0.7] {$\mathit{as}$}  (start4);
        \draw[->] (t4) -- node[right, midway,scale=0.7] {$\mathit{d}$}  (start4);
        \draw[->] (t4) -- node[above, midway,scale=0.7] {$\mathit{s}$}  (t5);
        \draw[->] (t5) -- node[above, midway,scale=0.7] {$\mathit{r}$}  (adrt6);
        \draw[->] (adrt6) -- node[right, midway,scale=0.7] {$\mathit{d}$}  (t6);
        \draw[->] (t6) -- node[above, midway,scale=0.7] {$\mathit{r}$}  (adruprime0);
        \draw[->] (v1) -- node[above, midway,scale=0.7] {$\mathit{r}$}  (adrt2);
        \draw[->] (adrt2) -- node[right, midway,scale=0.7] {$\mathit{d}$}  (t2);
        \draw[->] (t2) -- node[right, midway,scale=0.7] {$\mathit{d}$}  (start2);
        \draw[->] (start2) -- node[above, midway,scale=0.7] {$\mathit{sa}$}  (w2);
        \draw[->] (w2) -- node[above, midway,scale=0.7] {$\mathit{as}$}  (start10);
        \draw[->] (t10) -- node[right, midway,scale=0.7] {$\mathit{d}$}  (start10);
        \draw[->] (t10) -- node[above, midway,scale=0.7] {$\mathit{s}$}  (t11);
        \draw[->] (t11) -- node[above, midway,scale=0.7] {$\mathit{r}$}  (adrt12);
        \draw[->] (adrt12) -- node[right, midway,scale=0.7] {$\mathit{d}$}  (t12);
        \draw[->] (t12) -- node[above, midway,scale=0.7] {$\mathit{r}$}  (adrwprime2);
        \draw[->] (v1) -- node[above, midway,scale=0.7] {$\mathit{r}$}  (adrt3);
        \draw[->] (adrt3) -- node[left, midway,scale=0.7] {$\mathit{d}$}  (t3);
        \draw[->] (t3) -- node[right, midway,scale=0.7] {$\mathit{d}$}  (start3);
        \draw[->] (start3) -- node[above, midway,scale=0.7] {$\mathit{sa}$}  (w3);
        \draw[->] (t13) -- node[right, midway,scale=0.7] {$\mathit{d}$}  (start13);
        \draw[->] (w3) -- node[above, midway,scale=0.7] {$\mathit{as}$}  (start13);
        \draw[->] (t13) -- node[above, midway,scale=0.7] {$\mathit{s}$}  (t14);
        \draw[->] (t14) -- node[above, midway,scale=0.7] {$\mathit{r}$}  (adrt15);
        \draw[->] (adrt15) -- node[right, midway,scale=0.7] {$\mathit{d}$}  (t15);
        \draw[->] (t15) -- node[above, midway,scale=0.7] {$\mathit{r}$}  (adrwprime3);
        \draw[->] (v1) -- node[above, midway,scale=0.7] {$\mathit{as}$}  (start7);
        \draw[->] (t7) -- node[right, midway,scale=0.7] {$\mathit{d}$}  (start7);
        \draw[->] (t7) -- node[above, midway,scale=0.7] {$\mathit{s}$}  (t8);
        \draw[->] (t8) -- node[above, midway,scale=0.7] {$\mathit{r}$}  (adrt9);
        \draw[->] (adrt9) -- node[right, midway,scale=0.7] {$\mathit{d}$}  (t9);
        \draw[->] (t9) -- node[above, midway,scale=0.7] {$\mathit{r}$}  (adrvprime1);

\end{tikzpicture}
    
\caption{PEG $G_P$ of program $P$ in Figure~\ref{fig:running3}.}\label{picture:pegforrunning3}
\end{figure}

\paragraph{Language $\mathit{Pt}$.}To see the connection between $D_1$ and $\mathit{Pt}$,
we expand the $\mathit{Pt}$ rules in Table~\ref{tab:ptgraph}. In particular, we keep the
start nonterminal $\mathit{Pt}$ and replace any other occurrence of
$\mathit{Pt}$ with $S~r$. Figure~\ref{fig:gpa1} gives the rewritten grammar.
Since nonterminal $S$ is nullable, the rewritten grammar in
Figure~\ref{fig:gpa1} is equivalent to the
original grammar in Table~\ref{tab:ptgraph}.

\begin{example}
Figure~\ref{picture:pegforrunning3} gives the generated PEG for the program in Figure
~\ref{fig:running3}.
In the graph, we can see that there is a $\mathit{Pt}$-path from $w_3$ to $\&w^\prime_3$.
The realized string of path $w_3\rightarrow\ast t_{13} \rightarrow
t_{13}\rightarrow t_{14}\rightarrow \&t_{15}\rightarrow t_{15}\rightarrow \&w^\prime_3$ 
is  ``$as\ \overline{d}\ s\ r\ d\ r$''. 
According to the production rules (\ref{rule:bb0}) and (\ref{rule:bb1}), 
the realized string belongs to the $\mathit{Pt}$ language. 
And node $\&w^\prime_3$ is  $\mathit{Pt}$-reachable from $w_3$ in $G_P$.
We further observe that in the original program (Figure~\ref{fig:running3}), 
the last four lines of code are related to $w_3$ and $w^\prime_3$.
According to Table \ref{tab:ptcons}, we have  constraints
$\forall v \in \mathit{pt}(t_{13}): \mathit{pt}(v) \subseteq \mathit{pt}(w_3)$,
$\mathit{pt}(t_{14}) \subseteq \mathit{pt}(t_{13})$,
$\mathit{loc}(t_{15}) \in \mathit{pt}(t_{14})$ and 
$\mathit{loc}(w^\prime_3) \in \mathit{pt}(t_{15})$. 
Therefore, we have $\mathit{loc}(w^\prime_3) \in \mathit{pt}(w_3)$. 
Finally, we have $w_3\xrightarrow{\mathit{Pt}}\&w_3^\prime \in G_P
\Leftrightarrow \mathit{loc}(w^\prime_3) \in \mathit{pt}(w_3) \in P$ (Lemma~\ref{lem:peg2pt}).

Consider another pair of nodes $u_0$ and $w_3$. The path between them realizes
the string ``$\overline{d}\ \overline{r}\ \overline{sa}\ \overline{d}\ r\ d\ d\ sa$''.
Based on the production rule
(\ref{rule:bb3}),
(\ref{rule:bb4}) and
(\ref{rule:bb7}),
we can see that it can be derived by nonterminal $S$ in Figure~\ref{fig:gpa1}.
On the other hand, we can  extract the set constraints  from
the original program, the statements \codeIn{$t_1$ = \&$u_0$; *$v_1$ = $t_1$;}
\codeIn{$v_1 = \&t_3$; *$t_3$ = $w_3$} yield the relation
$\mathit{pt}(w_3) \subseteq \mathit{pt}(u_0)$. The $S$-reachability in PEG $G_P$
and set constraint resolution in $P$  agree
on the subset relation.

Finally, we consider the pair of nodes $w_3$ and $w_2$. The path between these two nodes
realizes the string ``$\overline{sa}\ \overline{d}\ \overline{d}\ \overline{r} \ r\ d\ d\ sa$''.
This word cannot be recognized by the $S$ language or the $Pt$ language. 
From the program statements, the set constraints can not establish a
subset relation or a points-to relation between the two corresponding variables. 
\end{example}

\section{$D_1$-Reachability and $\mathit{Pt}$-Reachability}\label{sec1:pt2}
\begin{figure}[t]
\footnotesize
\centering
\newcommand{\myw}{0.6}
\newcommand{\myww}{0.4}
\newcommand{\myh}{0.5}
\begin{tikzpicture}[mynode/.style={rectangle,fill=white,anchor=center}]]
\definecolor{red}{gray}{0.7}
        \node [style={rectangle,inner sep=.8pt}]           (a) at (0,0)  { $\blacksquare$};
        \node [style={rectangle,inner sep=.8pt}, below = \myh of a]           (b)   { $\square$};
        \node [style={rectangle,inner sep=.8pt}, right = \myw of b]
        (c)   { $\square$};
        \node [style={rectangle,inner sep=.8pt}, right = \myw of c]
        (d)   { $\square$};
        \node [style={rectangle,inner sep=.8pt}, below = \myh of d]
        (e)   { $\blacksquare$};
        \node [style={rectangle,inner sep=.8pt}, right = \myw of e]
        (f)   { $\square$};
        \node [style={rectangle,inner sep=.8pt}, above = \myh of f]
        (g)   { $\square$};
        \node [style={rectangle,inner sep=.8pt}, above = \myh of g]
        (h)   { $\square$};
        \node [style={rectangle,inner sep=.8pt}, right = \myw of h]
        (i)   { $\blacksquare$};

        \node [style={rectangle,inner sep=.8pt}, left=2pt of a,scale=.8]           (x)   {
          $x$};
        \node [style={rectangle,inner sep=.8pt}, below=2pt of e,scale=.8]           (y)   {
          $y$};
        \node [style={rectangle,inner sep=.8pt},below=2pt of i,scale=.8]
        (z) {$z$};

%fig2

     \node [style={rectangle,inner sep=.8pt}, right = 7.4cm of a]           (a1) at (0,0)  { $\blacksquare$};
        \node [style={rectangle,inner sep=.8pt}, below = \myh of a1]           (b1)   { $\square$};
        \node [style={rectangle,inner sep=.8pt}, right = \myw of b1]
        (c1)   { $\square$};
        \node [style={rectangle,inner sep=.8pt}, right = \myw of c1]
        (d1)   { $\square$};
        \node [style={rectangle,inner sep=.8pt}, below = \myh of d1]
        (e1)   { $\blacksquare$};
        \node [style={rectangle,inner sep=.8pt}, right = \myw of e1]
        (f1)   { $\square$};
        \node [style={rectangle,inner sep=.8pt}, above = \myh of f1]
        (g1)   { $\square$};
        \node [style={rectangle,inner sep=.8pt}, above = \myh of g1]
        (h1)   { $\square$};
        \node [style={rectangle,inner sep=.8pt}, right = \myw of h1]
        (i1)   { $\blacksquare$};

        \node [style={rectangle,inner sep=.8pt}, left=2pt of a1,scale=.8]           (x1)   {
          $x$};
        \node [style={rectangle,inner sep=.8pt}, below=2pt of e1,scale=.8]           (y1)   {
          $y$};
        \node [style={rectangle,inner sep=.8pt},below=2pt of i1,scale=.8]
        (z1) {$z$};

%input graph
  \node [style={circle, inner sep=1pt, minimum size=0.6cm, scale=0.8}, above=4cm of a]
  (ax)  { };
    \node [style={draw,circle, inner sep=1pt, minimum size=0.6cm, scale=0.8}, below=0.35cm of ax]
  (axx)  {$x$ };
    \node [style={draw,circle, inner sep=1pt, minimum size=0.6cm, scale=0.8}, right=1.4 of axx]
  (ex)  {$y$ };
    \node [style={draw,circle, inner sep=1pt, minimum size=0.6cm, scale=0.8}, right=1.1 of ex]
  (ix)  {$z$ };
  
%fig3
        \node [style={rectangle,inner sep=.8pt}, right = 5.5cm of ax]           (a2)   { $\blacksquare$};
        \node [style={rectangle,inner sep=.8pt}, below = \myh of a2]           (b2)   { $\square$};
        \node [style={rectangle,inner sep=.8pt}, right = \myw of b2]
        (c2)   { $\square$};
        \node [style={rectangle,inner sep=.8pt}, right = \myw of c2]
        (d2)   { $\square$};
        \node [style={rectangle,inner sep=.8pt}, below = \myh of d2]
        (e2)   { $\blacksquare$};
        \node [style={rectangle,inner sep=.8pt}, right = \myw of e2]
        (f2)   { $\square$};
        \node [style={rectangle,inner sep=.8pt}, above = \myh of f2]
        (g2)   { $\square$};
        \node [style={rectangle,inner sep=.8pt}, above = \myh of g2]
        (h2)   { $\square$};
        \node [style={rectangle,inner sep=.8pt}, right = \myw of h2]
        (i2)   { $\blacksquare$};
        \node [style={circle, inner sep=.8pt}, right = \myww of i2]           (j2)
              {$\circ$};
        \node [style={circle, inner sep=.8pt}, below = \myh of j2]           (k2)
              {$\circ$}; 
        \node [style={circle, inner sep=.8pt}, right = \myww of k2]           (l2)
              {$\circ$}; 
        \node [style={circle, inner sep=.8pt}, right = \myww of l2]           (m2)
              {$\circ$}; 
        \node [style={circle, inner sep=.8pt}, above = \myh of m2]           (n2)
              {$\circ$}; 
        \node [style={circle,draw, inner sep=2pt, fill=red}, right = \myww of n2]           (o2)   { };

        \node [style={rectangle,inner sep=.8pt}, left=2pt of a2,scale=.8]           (x)   {
          $x$};
        \node [style={rectangle,inner sep=.8pt}, below=2pt of e2,scale=.8]           (y)   {
          $y$};
        \node [style={rectangle,inner sep=.8pt},below=2pt of i2,scale=.8]
        (z) {$z$};
        \node [style={rectangle,inner sep=.8pt},below=2pt of o2,scale=.8]           (zz) {$\&z^\prime$};
        \node [style={circle,draw, inner sep=2pt, fill=red}, right = 1.2 of a2]           (ox2)   { }; 
        \node [style={circle,draw, inner sep=2pt, fill=red}, left = 1.2 of e2]           (oy2)   { };
        \node [style={rectangle,inner sep=.8pt},right=2pt of ox2,scale=.8]           (zzz) {$\&x^\prime$};
        \node [style={rectangle,inner sep=.8pt},left=2pt of oy2,scale=.8]           (zzzz) {$\&y^\prime$};

\begin{scope}
%first graph
\draw[->] (axx) edge node[below, scale=0.8] {$[_1$}(ex);
\draw[->] (ex) edge node[below, scale=0.8] {$]_1$}(ix);
   \draw [->,>=stealth,out=30,in=150,looseness=0.5,densely dotted] (axx) to node[above,midway, scale=0.6]{$D_1$-reachable}  (ix);
%peg
   \draw[->] (a) -- node[left, midway,scale=0.6] {}  (b);
   \draw[->] (b) -- node[above, midway,scale=0.6] {}  (c);
   \draw[->] (c) -- node[above, midway,scale=0.6] {}  (d);
   \draw[->] (d) -- node[right, midway,scale=0.6] {}  (e);
   \draw[->] (e) -- node[below, midway,scale=0.6] {}  (f);
   \draw[->] (f) -- node[right, midway,scale=0.6] {}  (g);
   \draw[->] (g) -- node[right, midway,scale=0.6] {}  (h);
   \draw[->] (h) -- node[above, midway,scale=0.6] {}  (i);
  
   \draw [->,>=stealth,out=20,in=160,looseness=0.5,densely dotted] (a) to node[above,midway, scale=0.6]{$D_1^\prime$-reachable}  (i);
   
%edge 2
 \draw[->] (a1) -- node[left, midway,scale=0.6] {}  (b1);
   \draw[->] (b1) -- node[above, midway,scale=0.6] {}  (c1);
   \draw[->] (c1) -- node[above, midway,scale=0.6] {}  (d1);
   \draw[->] (d1) -- node[right, midway,scale=0.6] {}  (e1);
   \draw[->] (e1) -- node[below, midway,scale=0.6] {}  (f1);
   \draw[->] (f1) -- node[right, midway,scale=0.6] {}  (g1);
   \draw[->] (g1) -- node[right, midway,scale=0.6] {}  (h1);
   \draw[->] (h1) -- node[above, midway,scale=0.6] {}  (i1);
  
   \draw [->,>=stealth,out=20,in=160,looseness=0.5,densely dotted] (a1) to node[above,midway, scale=0.6]{$\mathit{Pt}^\prime$-reachable}  (i1);
%edge3   
   \draw[->] (a2) -- node[left, midway,scale=0.6] {}  (b2);
   \draw[->] (b2) -- node[above, midway,scale=0.6] {}  (c2);
   \draw[->] (c2) -- node[above, midway,scale=0.6] {}  (d2);
   \draw[->] (d2) -- node[right, midway,scale=0.6] {}  (e2);
   \draw[->] (e2) -- node[below, midway,scale=0.6] {}  (f2);
   \draw[->] (f2) -- node[right, midway,scale=0.6] {}  (g2);
   \draw[->] (g2) -- node[right, midway,scale=0.6] {}  (h2);
   \draw[->] (h2) -- node[above, midway,scale=0.6] {}  (i2);
   \draw[->] (i2) -- node[above, midway,scale=0.6] {}  (j2);
   \draw[->] (j2) -- node[left, midway,scale=0.6] {}  (k2);
   \draw[->] (k2) -- node[below, midway,scale=0.6] {}  (l2);
   \draw[->] (l2) -- node[below, midway,scale=0.6] {}  (m2);
   \draw[->] (m2) -- node[right, midway,scale=0.6] {}  (n2);
   \draw[->] (n2) -- node[below, midway,scale=0.6] {}  (o2);
   \draw [->,>=stealth,out=15,in=165,looseness=0.5,densely dotted] (a2) to node[above,midway, scale=0.6]{$\mathit{Pt}$-reachable}  (o2);
   %add
%   \draw[->] (a) --node[mynode,pos=.5] {$\dots$} (ox);
%   \draw[->] (e) --node[mynode,pos=.5] {$\dots$} (oy);  
%  \draw[->] (a1) --node[mynode,pos=.5] {$\dots$} (ox1);
%   \draw[->] (e1) --node[mynode,pos=.5] {$\dots$} (oy1); 
    \draw[->] (a2) --node[mynode,pos=.5] {$\dots$} (ox2);
   \draw[->] (e2) --node[mynode,pos=.5] {$\dots$} (oy2); 
   
\end{scope}
\draw[draw=black] ($ (a) + (-0.5,0.6) $) rectangle ++(4.5,-2.7);
\draw[draw=black] ($ (a1) + (-0.5,0.6) $) rectangle ++(4.5,-2.7);
\draw[draw=black] ($ (ax) + (-0.5,0.6) $) rectangle ++(4.5,-2.6);
\draw[draw=black] ($ (a2) + (-0.5,0.6) $) rectangle ++(7.2,-2.6);

 \node [style={rectangle,inner sep=.8pt}, scale=.9]           (tt1) at ($ (a) + (0,.8) $) {PEG $G_P$};  
 \node [style={rectangle,inner sep=.8pt}, scale=.9]           (tt1) at ($ (a1) + (0,.8) $) {PEG $G_P$};  
 \node [style={rectangle,inner sep=.8pt}, scale=.9]           (tt1) at ($ (a2) + (0,.8) $) {PEG $G_P$};
 \node [style={rectangle,inner sep=.8pt}, scale=.9]           (tt1) at ($ (ax) + (0,.8) $) {Graph $G$};
\draw[implies-implies, double equal sign distance] ($ (ex) + (0,-1.3) $) --
node[rectangle, above=.0cm, right=2pt,scale=0.7,text width=2.5cm]{Isomorphism  (Section~\ref{subsec:c})} ($ (ex) + (0,-2.6) $);
\draw[implies-implies, double equal sign distance] ($ (ex) + (7,-1.3) $) -- node[rectangle, above=.0cm, right=2pt,scale=0.7,text width=3cm]{Isolation Lemma (Section~\ref{subsec:a})} ($ (ex) + (7,-2.6) $);
\draw[implies-implies, double equal sign distance] ($ (g) + (1.4,0) $) -- node[rectangle, above=3pt, scale=0.7, text width=3.5cm]{Non-transitivity Lem. (Section~\ref{subsec:b})} ($ (g) + (4.2,0) $);
\end{tikzpicture}

\caption{Overview of reductions.}
\label{figure:reduction_overview}
    
\end{figure}

In this section, we prove that $D_1$-reachability in $G$ is equivalent to
$\mathit{Pt}$-reachability among black and gray nodes in $G_P$.

Our basic idea is to simplify the language $\mathit{Pt}$ and convert it to a
$D_1$-like language called $D_1^\prime$. 
Figure~\ref{figure:reduction_overview} gives an overview of our reduction.
In particular, in $G_P$, we prove that $\mathit{Pt}$-reachability among black
and gray nodes is equivalent to a simplified $\mathit{Pt}^\prime$-reachability
with only black nodes. We further simplify $\mathit{Pt}^\prime$-reachability and
convert it to $D_1^\prime$-reachability in $G_P$. Finally, we show that the
$D^\prime_1$-reachability problem in $G_P$ is equivalent to the
$D_1$-reachability problem in $G$.

Note that the simplifications mentioned above do not hold for general
PEG. However, our reduction in Algorithm~\ref{algo:d1topt} emits a specialized
PEG. Figure~\ref{fig:gpa} shows all grammars involved in our proof. Our key insight is to leverage the properties in the constructed PEG for
simplifying the $\mathit{Pt}$-reachability problem. In particular, our
constructed PEG introduces two aspects of restrictions:
\begin{itemize}
\item The nodes in the PEG $G_P$ are of different shapes (\myie, square and circle
  nodes). Figure~\ref{figure:reduction_overview} gives an illustration of our constructed PEG. Our
  construction in Figure~\ref{fig:g2peg} guarantees that the reachability
  among square nodes does not involve circle nodes. This helps us eliminate a
  few rules in grammar
  $\mathit{Pt}$ and obtain a simpler grammar $\mathit{Pt}^\prime$ (Section~\ref{subsec:a}).
\item The nodes in the PEG $G_P$ are of different colors (\myie, black, white,
  and gray) as well. The color information and the edge construction forbid
  certain combinations of nonterminals. We encode the node color information in
 grammar  $\mathit{Pt}^\prime$ and further simplify the language to $D^\prime_1$
 (Section~\ref{subsec:b}).
\item  Finally, the $D^\prime_1$-reachability problem in $G_P$ is isomorphic to the
$D_1$-reachability problem in $G$ (Section~\ref{subsec:c}).
\end{itemize}

\begin{figure}[t]
\centering
\numberwithin{equation}{section} % This line resets equation numbering when
                                % starting a new section.
{\small
\renewcommand{\theequation}{\arabic{equation}-a} 
\begin{minipage}[b]{.3\linewidth}
\begin{center}
\begin{align} 
\mathit{Pt} \rightarrow &~S~r  \label{rule:bb0}    \\
S \rightarrow &~\mathit{as}~\overline{d}~S~r~d \label{rule:bb1}\\
S \rightarrow &~\overline{d}~\overline{r}~\overline{S}~d~\mathit{sa} \label{rule:bb2}\\
\overline{S} \rightarrow
&~\overline{d}~\overline{r}~\overline{S}~d~\overline{\mathit{as}} \label{rule:bb3}\\
\overline{S} \rightarrow &~\overline{\mathit{sa}}~\overline{d}~S~r~d \label{rule:bb4}\\
S \rightarrow &~S~S~ \label{rule:bb5}\\
\overline{S} \rightarrow &~\overline{S}~\overline{S}~ \label{rule:bb6}\\
S \rightarrow &~s~\mid~\epsilon \label{rule:bb7}\\
\overline{S} \rightarrow &~\overline{s}~\mid~\epsilon \label{rule:bb8}
\end{align}
\end{center}
\subcaption{Rules for language $\mathit{Pt}$.}\label{fig:gpa1}
\end{minipage}%                                                                                                                               
\renewcommand{\theequation}{\arabic{equation}-b} 
\hfill
%grammar simply
\begin{minipage}[b]{.3\linewidth}
\setcounter{equation}{0}
\begin{center}
\begin{align} 
\mathit{Pt}^\prime \rightarrow &~S  \label{rule:cc0}    \\
 \label{rule:cc1}\\
S \rightarrow &~\overline{d}~\overline{r}~\overline{S}~d~\mathit{sa} \label{rule:cc2}\\
 \label{rule:cc3}\\
\overline{S} \rightarrow &~\overline{\mathit{sa}}~\overline{d}~S~r~d \label{rule:cc4}\\
S \rightarrow &~S~S~ \label{rule:cc5}\\
\overline{S} \rightarrow &~\overline{S}~\overline{S}~ \label{rule:cc6}\\
S \rightarrow &~\epsilon \label{rule:cc7}\\
\overline{S} \rightarrow &~\epsilon \label{rule:cc8}
\end{align}
\end{center}
\subcaption{Rules for language $\mathit{Pt}^\prime$.}\label{fig:gpa2}
\end{minipage}%                                                                                                                               
\hfill
\begin{minipage}[b]{.3\linewidth}
\setcounter{equation}{0}
\renewcommand{\theequation}{\arabic{equation}-c} 
\begin{center}
\begin{align} 
\mathit{Pt}_c^\prime \rightarrow &~\tensor[_\blacksquare]{S}{_\blacksquare}~\\
  \label{rule:dd1}\\
\tensor[_\blacksquare]{S}{_\blacksquare} \rightarrow & 
~{_\blacksquare
  \overline{d}_\square\overline{r}_{\square}}\overline{S}_{\square}d_\square\mathit{sa}_{\blacksquare}\label{rule:dd2}\\
\label{rule:dd3}\\
\tensor[_\square]{\overline{S}}{_\square} \rightarrow & 
~{_\square
  \overline{\mathit{sa}}_\square\overline{d}_{\blacksquare}}S_{\blacksquare}r_\square d_{\square}\label{rule:dd4}\\
\tensor[_\blacksquare]{S}{_\blacksquare} \rightarrow &~_{\blacksquare}S_\blacksquare S_{\blacksquare}\label{rule:dd5}\\
 \label{rule:dd6}\\ %\tensor[_\square]{\overline{S}}{_\square} \rightarrow &~_{\square}S_\square \overline{S}_{\square}\\
\tensor[_\blacksquare]{S}{_\blacksquare} \rightarrow &~\epsilon\label{rule:dd7}\\
\tensor[_\square]{\overline{S}}{_\square} \rightarrow &~\epsilon\label{rule:dd8}
\end{align}
\end{center}
\subcaption{Rules for language $\mathit{Pt}_c^\prime$.}\label{fig:gpa3}
\end{minipage}
}
\begin{minipage}[b]{.5\linewidth}
\begin{center}
$\begin{aligned}
\\ 
D_1^\prime \rightarrow &~S \\
\tensor[_\blacksquare]{S}{_\blacksquare} \rightarrow &~\underbracket{_\blacksquare
  \overline{d}_\square\overline{r}_{\square}\overline{\mathit{sa}}_\square\overline{d}_{\blacksquare}}_{[_1}~~\tensor[_\blacksquare]{S}{_\blacksquare}~~\underbracket{_{\blacksquare}r_\square d_{\square}d_\square\mathit{sa}_{\blacksquare}}_{]_1}\\
\tensor[_\blacksquare]{S}{_\blacksquare} \rightarrow &~_{\blacksquare}S_\blacksquare S_{\blacksquare}~\mid~\epsilon\\
\end{aligned}$
\end{center}
\subcaption{Rules for language $D^\prime_1$.}\label{fig:gpa4}
\end{minipage}

\caption{Grammars used in reduction.  \label{fig:gpa}}

\end{figure}

\subsection{$\mathit{Pt}$-Reachability and $\mathit{Pt}^\prime$-Reachability}\label{subsec:a}

\begin{figure}[t]
\centering
\begin{minipage}[b]{0.4\linewidth}
    
\begin{tikzpicture}
        \definecolor{red}{gray}{0.7}
        \node [style={rectangle,inner sep=.8pt, scale = 1.0}]           (a) at (-3,0)  { $\blacksquare$};
        \node [style={rectangle,inner sep=.8pt, scale = 1.0}]           (b) at (-2.375,0)  { $\square$};
        \node [style={rectangle,inner sep=.8pt, scale = 1.0}]           (b1) at (-1.75,0)  { $\square$};
        \node [style={rectangle,inner sep=.8pt, scale = 1.0}]           (c) at (-1.125,0)  { $\square$};
        \node [style={rectangle,inner sep=.8pt, scale = 1.0}]           (d) at (-0.5,0)  { $\blacksquare$};
        \node [style={rectangle,inner sep=.8pt, scale = 1.0}]           (e) at (0.125,0)  { $\square$};
        \node [style={rectangle,inner sep=.8pt, scale = 1.0}]           (e1) at (0.75,0)  { $\square$};
        \node [style={rectangle,inner sep=.8pt, scale = 1.0}]           (f) at (1.375,0)  { $\square$};
        \node [style={rectangle,inner sep=.8pt, scale = 1.0}]           (g) at (2,0)  { $\blacksquare$};

        \node [style={rectangle,inner sep=.8pt, scale = 1.3}]           (h) at (-0.5,0.625)  { $\circ$};
        \node [style={rectangle,inner sep=.8pt, scale = 1.3}]           (i) at (-0.5,1.25)  { $\circ$};
        \node [style={rectangle,inner sep=.8pt, scale = 1.3}]           (j) at (-0.5,1.875)  { $\circ$};
        \node [style={circle,draw, inner sep=2pt, fill=red, scale = 1.0}]           (k) at (-0.5,2.5)  { };
        
        \node [style={rectangle,inner sep=.8pt}, right=1pt of h,scale=.8]           (nodex)   {$x$};
        \node [style={rectangle,inner sep=.8pt}, right=1pt of i,scale=.8]           (nodey)   {$y$};
        \node [style={rectangle,inner sep=.8pt}, right=1pt of j,scale=.8]           (nodez)   {$z$};

        \draw[->] (-2.85, -0.05) --  (-2.525, -0.05);
        \draw[->] (-2.525, 0.05) --  (-2.85, 0.05);
        \draw[->] (-2.225, -0.05) --  (-1.9, -0.05);
        \draw[->] (-1.9, 0.05) --  (-2.225, 0.05);
        \draw[->] (-1.6, -0.05) --  (-1.275, -0.05);
        \draw[->] (-1.275, 0.05) --  (-1.6, 0.05);
        \draw[->] (-0.975, -0.05) --  (-0.65, -0.05);
        \draw[->] (-0.65, 0.05) --  (-0.975, 0.05);
        \draw[->] (-0.35, -0.05) --  (-0.025, -0.05);
        \draw[->] (-0.025, 0.05) --  (-0.35, 0.05);
        \draw[->] (0.275, -0.05) --  (0.6, -0.05);
        \draw[->] (0.6, 0.05) --  (0.275, 0.05);
        \draw[->] (0.9, -0.05) --  (1.225, -0.05);
        \draw[->] (1.225, 0.05) --  (0.9, 0.05);
        \draw[->] (1.525, -0.05) --  (1.85, -0.05);
        \draw[->] (1.85, 0.05) --  (1.525, 0.05);

        \draw[->] (d) -- (h);
        \draw[->] (h) -- (i);
        \draw[->] (i) -- (j);
        \draw[->] (j) -- (k);

        \node (bend1) at (-2, 0.3){};
        \draw[dashed] (a.north) to [out=30, in=180] (bend1.west);
        \node (bend2) at (-1, 0.3){};
        \draw[dashed] (bend1.west) to (bend2.west);
        \node (bend3) at (-0.7, 0.6){};
        \draw[dashed] (bend2.west) to [out=0, in=270] (bend3.west);
        \node (bend4) at (-0.5, 1.45){};
        \draw[dashed] (bend3.west) to [out=90, in=180] (bend4.north);
        \node (bend5) at (-0.3, 0.6){};
        \draw[dashed] (bend4.north) to [out=0, in=90] (bend5.east);
        \node (bend6) at (0, 0.3){};
        \draw[dashed] (bend5.east) to [out=270, in=180] (bend6.east);
        \node (bend7) at (1, 0.3){};
        \draw[dashed] (bend6.east) to (bend7.east);
        \draw[->, dashed] (bend7.east) to [out = 0, in = 150] (g.north);
        
\end{tikzpicture}
\label{figure:dashed_path}
\subcaption{Illustration on a reversing PEG $G_P$ path.}\label{fig:ill}
\end{minipage}
\hspace{0.1\linewidth}
\begin{minipage}[b]{0.4\linewidth}

\begin{tikzpicture}
        \definecolor{red}{gray}{0.7}
        \node [style={rectangle,inner sep=.8pt, scale = 1}]           (a) at (0,0)  { $\blacksquare$};
        \node [style={rectangle,inner sep=.8pt, scale = 1}]           (b) at (1.4,0)  {
          $\circ$};
        \node [style={rectangle,inner sep=.8pt, scale = 1}]           (b1) at (1.4,-1.6)  { $\circ$};
        \node [style={rectangle,inner sep=.8pt, scale = 1}]           (c) at (2.8,-1.6)  { $\circ$};
        \node [style={rectangle,inner sep=.8pt, scale = 1}]           (d) at (4.2,-1.6)  {
          $\circ$};
        \node [style={rectangle,inner sep=.8pt, scale = 1}]           (d1) at (4.2,0)  { $\circ$};
        \node [style={circle,draw, inner sep=2pt, fill=red, scale = 1}]           (e) at (5.6,0)  { };

        \draw[->] (a) -- node[above, midway,scale=.8] {$\mathit{as}$}  (b);
        \draw[->] (b) -- node[left, midway,scale=.8] {$\overline{d}$}  (b1);
        \draw[->] (b1) -- node[above, midway,scale=.8] {$s$}  (c);
        \draw[->] (c) -- node[above, midway,scale=.8] {$r$}  (d); 
        \draw[->] (d) -- node[right, midway,scale=.8] {$d$}  (d1); 
        \draw[->] (d1) -- node[above, midway,scale=.8] {$r$}  (e);

        \node [style={rectangle,inner sep=.8pt}, below=2pt of a,scale=.8]           (node1)   {$1$};
        \node [style={rectangle,inner sep=.8pt}, right=2pt of b,scale=.8]           (node2)   {$2$};
        \node [style={rectangle,inner sep=.8pt}, below=2pt of b1,scale=.8]           (node3)   {$3$};
        \node [style={rectangle,inner sep=.8pt}, below=2pt of c,scale=.8]           (node4)   {$4$};
        \node [style={rectangle,inner sep=.8pt}, below=2pt of d,scale=.8]           (node2)   {$5$};
        \node [style={rectangle,inner sep=.8pt}, left=2pt of d1,scale=.8]           (node2)   {$6$};
        \node [style={rectangle,inner sep=.8pt}, below=2pt of e,scale=.8]           (node7)   {$7$};
        \node [style={rectangle,inner sep=.8pt},above=10pt of a]           (t) {};
        \node [style={rectangle,inner sep=.8pt},below=10pt of c]           (t)  {};
\end{tikzpicture}
\subcaption{PEG $G_P$ path with circle nodes.}\label{figure:cycle}%\label{fig:corr}
\label{figure:correspondence_graph}
\end{minipage}
\caption{Irreversibility of circle nodes in PEG $G_P$.}
\end{figure}

The most
notable difference between $D_1$-reachability and $\mathit{Pt}$-reachability is that 
the $\mathit{Pt}$-reachability problem is \emph{bidirectional}. 
For instance, for any summary edge $u\xrightarrow{S} v$ there exists a reversed summary
$v\xrightarrow{\overline{S}} u$ in $G_P$ based on the $\mathit{Pt}$ grammar in
Figure~\ref{fig:gpa1}. The reversed summaries introduce additional reachability
information since a path now can go back and forth at a node. 
%\todo{show a reverse% example}

\begin{table}[t]
\begin{center}
\small
\caption{Follow sets for terminals of the $\mathit{Pt}$ language in Figure~\ref{fig:gpa1}.}\label{tab:followset}
\begin{tabular}{ l l | l l }
\hline
Nonterminal & \textsc{Follow} set &Nonterminal & \textsc{Follow} set \\ 
\hline
$\textsc{Follow}(d)$ 
& $\{ sa, \overline{as}, r, d, as, \overline{d}, s, \overline{sa}, \overline{s}  \}$  
& $\textsc{Follow}(\overline{d})$ 
& $\{  as, \overline{d}, s, r, \overline{r}\} $
\\
$\textsc{Follow}(r)$ 
& $\{  d  \}  $
& $\textsc{Follow}(\overline{r})$
& $\{  \overline{d}, d, \overline{sa}, \overline{s}  \}  $
\\
$\textsc{Follow}(as) $
& $\{  \overline{d}  \}  $
& $\textsc{Follow}(\overline{as})$
& $\{  \overline{d},  d, \overline{sa}, \overline{s}   \}  $
\\
$\textsc{Follow}(sa)$
& $\{  r, as, \overline{d}, s  \}$
& $\textsc{Follow}(\overline{sa})$
& $\{  \overline{d}   \} $
\\
$\textsc{Follow}(s)$
& $\{  r, as, \overline{d}, s  \}$
& $\textsc{Follow}(\overline{s})$
& $\{  \overline{d}, d, \overline{sa}, \overline{s}  \} $
\\
\hline 
\end{tabular}

\end{center}
\end{table}

To cope with the bidirectedness, we introduce \emph{reversibility} to paths in $G_P$. Formally, we say a path $p = u,\ldots, x, y, x,\ldots, v$
is a \emph{reversing path} iff there exists at least one node $y \in
p$ such that $x\rightarrow y\rightarrow x$ is a subpath of $p$. The node $y$ is
called a \emph{reversing node} of path $p$. For instance, the path $\blacksquare
\rightarrow \ldots \rightarrow\blacksquare \rightarrow x\rightarrow y\rightarrow
x\rightarrow\blacksquare\rightarrow \ldots \rightarrow \blacksquare$ in Figure~\ref{fig:ill} is a
reversing path with $y$ being the reversing node.

The subpath $x\rightarrow y\rightarrow x$ of a reversing path introduces either a
string ``$t~\overline{t}$'' or a string ``$\overline{t}~t$''. However, most of
those strings are invalid in grammar $\mathit{Pt}$.
Given a $\mathit{CFG} = (\Sigma, N, P, S)$,
we define \textsc{Follow}(t), for terminal $t\in \Sigma$, to be the set of terminals $w$ that can
appear immediately to the right
of terminal $t$ in some sentential form, that is, the set of terminals $w$ such that there
exists a derivation of the form
$A \rightarrow \alpha~t~w~\beta$  for some $\alpha$ and $\beta$.
Note that our definition of \textsc{Follow} set on terminals is similar to the
concept of the Follow
set on nonterminals in standard compiler text~\cite{Aho:86compiler}. 
Table~\ref{tab:followset} gives the \textsc{Follow} sets of all terminals of
grammar $\mathit{Pt}$ in Figure~\ref{fig:gpa4}.

\begin{lemma}\label{lem:reversing}
No circle node can be a reversing node.
\end{lemma}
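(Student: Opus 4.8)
The plan is to convert the bidirectedness obstruction into a purely local, grammar-level test, and then isolate the single residual case. First I would note that a reversing node $y$ contributes, through its subpath $x\rightarrow y\rightarrow x$ together with the bidirectedness of $G_P$, a two-letter factor $\ell\,\overline{\ell}$ to the realized string, where $\ell=\mathcal{L}(x,y)$ and $\overline{\ell}=\mathcal{L}(y,x)$. Such a factor occurs in a sentential form of $\mathit{Pt}$ only if $\overline{\ell}\in\textsc{Follow}(\ell)$. Reading Table~\ref{tab:followset} across all ten terminals, for every $\ell\neq d$ one has $\overline{\ell}\notin\textsc{Follow}(\ell)$ (for instance $\bar{s}\notin\textsc{Follow}(s)$, $\overline{as}\notin\textsc{Follow}(as)$, $\overline{sa}\notin\textsc{Follow}(sa)$, $\bar{r}\notin\textsc{Follow}(r)$, and symmetrically for the barred letters), whereas $\bar{d}\in\textsc{Follow}(d)$. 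Hence the only surviving reversing factor is $d\,\bar{d}$, so every reversing node must be the head of a $d$-edge $x\xrightarrow{d}y$.

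Next I would type the $d$-edges against the grammar of Figure~\ref{fig:gpa1}. A $d$-edge is either $\&a\xrightarrow{d}a$, whose head is a plain-variable node, or $a\xrightarrow{d}*a$, whose head is a dereference node $*a$. The terminal $d$ with a dereference head occurs only as the internal $d$ of rules (\ref{rule:bb2}) and (\ref{rule:bb3}), where it is immediately followed by $sa$ or $\overline{as}$ and never by $\bar{d}$; the $d$ with a plain-variable head occurs only as the trailing $d$ of (\ref{rule:bb1})/(\ref{rule:bb4}), which is always preceded by $r$. Consequently no dereference node, and no address node $\&a$ (the latter having no incoming $d$-edge at all), can reverse, which already removes every circle node of the forms $*b$ and $\&b$. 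The trailing-$d$ requirement further forces a reversing plain-variable node $y=a$ to carry an incoming $r$-edge, i.e.\ $a$ must be address-taken. Since Algorithm~\ref{algo:d1topt} creates circle nodes only inside the node gadget of Figure~\ref{fig:g2peg}, the only address-taken circle variables there are $t_{i+2}\in\mathit{Var}_{w1}$ and $v^\prime\in\mathit{Var}_g$, so the lemma reduces to excluding these two.

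For this residual case I would push the forced continuation through the grammar. A factor $d\,\bar{d}$ can arise only at a sibling boundary of (\ref{rule:bb5}) ($S\rightarrow SS$) or (\ref{rule:bb6}) ($\bar{S}\rightarrow\bar{S}\,\bar{S}$), with the left sibling closing via (\ref{rule:bb1})/(\ref{rule:bb4}) and the right sibling opening via (\ref{rule:bb2})/(\ref{rule:bb3}). A reversal at $y\in\{t_{i+2},v^\prime\}$ pins the edges on both sides: the returning $\bar{d}$ is forced to continue with $\bar{r}$ back along the unique address-of edge, then $\bar{S}$, then a $d$ into some $*a$, and finally $sa$ (via (\ref{rule:bb2})) or $\overline{as}$ (via (\ref{rule:bb3})); symmetrically the left sibling needs an $as$ into, or an $\overline{sa}$ into, the matching dereference node. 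Enumerating the few admissible $S$- and $\bar{S}$-summary edges incident to $t_{i+2}$ and $v^\prime$, every closure demands an $sa$- or $\overline{sa}$-labeled edge incident to the dereference of a node-gadget temporary or of a black variable (concretely $*t_i$ or $*v$). But node gadgets contain no \textsc{Star-assign} statement and no \textsc{Assign-star} statement dereferences a black or temporary variable, so none of these edges exists and the derivation cannot close, a contradiction.

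I expect this last step to be the main obstacle. The \textsc{Follow}-set computation disposes cleanly of nine of the ten terminals, but it genuinely admits $d\,\bar{d}$, so the lemma cannot be settled at the level of two-letter factors alone. The real content is that the PEG emitted by Algorithm~\ref{algo:d1topt} confines all star-assign activity to the edge gadgets (square nodes), leaving the dereference nodes attached to circle variables starved of exactly the $sa/\overline{sa}$ edges a circle-node reversal would need. Making the forced-continuation argument airtight — listing the admissible summary edges at $t_{i+2}$ and $v^\prime$ and checking each against rules (\ref{rule:bb1})–(\ref{rule:bb6}) — is where the work lies.
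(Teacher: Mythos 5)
Your opening two reductions are sound and essentially retrace the paper's own proof: the \textsc{Follow}-set scan that leaves $d\,\overline{d}$ as the only viable reversing factor is exactly the paper's first move, and your observation that such a factor can only arise at a sibling boundary of rules (\ref{rule:bb5})/(\ref{rule:bb6}), with the left sibling closing via (\ref{rule:bb1})/(\ref{rule:bb4}) and the right sibling opening via (\ref{rule:bb2})/(\ref{rule:bb3}), is the same structural fact the paper extracts from its two displayed derivations. Your typing of $d$-edges (a dereference-headed $d$ fills only the internal slots of (\ref{rule:bb2})/(\ref{rule:bb3}) and is never followed by $\overline{d}$) is a clean, equivalent way to eliminate the candidate $*t_i$; the paper instead argues that the returning $\overline{d}$ must be followed by an $\overline{r}$ that the node $t_i$ cannot supply. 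To your credit, you also flag the plain gray node $v^\prime$ as a candidate, which the paper's figure (stopping at $\&v^\prime$) never examines.

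The genuine gap is in your endgame, precisely where you predicted, and two of its factual claims are false. First, ``no \textsc{Assign-star} statement dereferences a black or temporary variable'' fails for the node gadget's own statement $\codeInM{v = *t_i}$, which puts an $\mathit{as}$-edge into $*t_i$; this edge is what builds the summary $v\xrightarrow{S}t_{i+2}$, so it cannot be assumed away. Second, the $sa$-edge at $*v$ that your right sibling demands \emph{does} exist: whenever the original node $v$ has an incoming $[_1$-edge in $G$, Algorithm~\ref{algo:d1topt} emits $\codeInM{*v = t_j}$, giving $*v\xrightarrow{\mathit{sa}}t_j$. Hence for a putative reversal at $v^\prime$, the right sibling under the pairing (\ref{rule:bb1})--(\ref{rule:bb2}) genuinely closes, via $v^\prime\xrightarrow{\overline{d}}\&v^\prime\xrightarrow{\overline{r}}t_{i+2}\xrightarrow{\overline{S}}v\xrightarrow{d}*v\xrightarrow{\mathit{sa}}t_j$, and your uniform ``none of these edges exists'' contradiction collapses. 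What actually fails there is the \emph{left} sibling: rule (\ref{rule:bb1}) must open with $\mathit{as}~\overline{d}$ into some $*w$ with $w\xrightarrow{S}t_{i+2}$, and every such $w$ is either $t_{i+2}$ itself or a black node ($v$, or a black node $S$-reaching $v$), none of whose dereferences carries an $\mathit{as}$-edge. So the argument must be run per pairing, identifying in each case which sibling dies, rather than asserting a single nonexistent edge class; your claim is correct as stated only for the candidate $t_{i+2}$, where both pairings do demand an $\mathit{sa}/\overline{\mathit{sa}}$-edge at $*t_i$ or $*t_{i+1}$ and node gadgets indeed contain no \textsc{Star-assign}. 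The paper sidesteps this delicacy by excluding $*t_i$ first and then checking the single forced palindromic string $\mathit{as}~\overline{d}~s~r~d~\overline{d}~\overline{r}~\overline{s}~d~\overline{\mathit{as}}$, which is underivable because $\mathit{as}$ cannot pair with $\overline{\mathit{as}}$.
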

\begin{proof}
From the $G_P$ construction, we can see that circle nodes represent auxiliary
variables in program $P$. They are used in $G_P$ to connect black and gray nodes.
Figure~\ref{figure:cycle} shows a path with circle nodes.
Based on the \textsc{Follow} sets shown in Table~\ref{tab:followset}, we have only $\overline{d} \in
\textsc{Follow}(d)$. Therefore, in Figure~\ref{figure:cycle}, only nodes $2$ and
$6$ could be the reversing nodes. From Figure~\ref{fig:gpa1}, we can see that the
substring ``$d~\overline{d}$'' can only be generated by rules (\ref{rule:bb5})
and (\ref{rule:bb6}):

\[
S \xRightarrow[\hspace{18pt}]{(\ref{rule:bb5})}    S~S
\xRightarrow[\hspace{18pt}]{(\ref{rule:bb1})} \mathit{as}~\overline{d}~S~r~d~S
\xRightarrow[\hspace{18pt}]{(\ref{rule:bb3})}
\mathit{as}~\overline{d}~S~r~d~\overline{d}~\overline{r}~\overline{S}~d~\mathit{sa};
\]

\[
\overline{S} \xRightarrow[\hspace{18pt}]{(\ref{rule:bb6})}    \overline{S}~\overline{S}
\xRightarrow[\hspace{18pt}]{(\ref{rule:bb4})} \overline{sa}~\overline{d}~S~r~d~\overline{S}
\xRightarrow[\hspace{18pt}]{(\ref{rule:bb3})}
\overline{sa}~\overline{d}~S~r~d~\overline{d}~\overline{r}~\overline{S}~d~\overline{as}.
\]
We notice that there is always a ``$\overline{r}$'' symbol that follows a
``$\overline{d}$'' symbol. Therefore, node $2$ in Figure~\ref{figure:cycle} could not be a
reversing node. 
Without node $2$ being a reversing node, the realized string of path
$1\rightarrow2\rightarrow3\rightarrow4\rightarrow5\rightarrow6\rightarrow5\rightarrow4\rightarrow3\rightarrow2\rightarrow1$
is
``$\mathit{as}~\overline{d}~s~r~d~\overline{d}~\overline{r}~\overline{s}~d~\overline{as}$''. This
realized string cannot be generated by $S$ or $\overline{S}$ discussed
above since $\mathit{as}$ cannot be paired with $\mathit{\overline{as}}$. 
Therefore, node $6$ cannot be a reversing node, either.
\end{proof}

Consider a $G_P$ path with circle nodes shown in Figure~\ref{figure:cycle}. 
It corresponds to the edges generated by the last row in Table~\ref{tab:alledges}.
It contains the variables
introduced in Algorithm~\ref{algo:d1topt} on lines~\ref{algo:pte1}-\ref{algo:pte2}, \myie, node $1$, $2$, $3$, $4$,
$5$, $6$, $7$ represent variables \codeIn{v}, \codeIn{*t}$_i$, \codeIn{t}$_i$,
\codeIn{t}$_{t+1}$, \codeIn{\&t}$_{t+2}$, \codeIn{t}$_{t+2}$ and \codeIn{\&v}$^\prime$ in Algorithm~\ref{algo:d1topt}, respectively.
We say that the black node representing variable \codeIn{v} is a \emph{root node} of the
gray circle node representing variable \codeIn{\&v}$^\prime$ as well as the
white circle nodes representing the auxiliary variables \codeIn{t}$_i$
introduced in Algorithm~\ref{algo:d1topt} on lines~\ref{algo:pte1}-\ref{algo:pte2}.

\begin{lemma}[Isolation]\label{lem:iso}
In $G_P$, the $\mathit{Pt}$- or $S$-path which joins two square nodes cannot pass
through any circle node.
\end{lemma}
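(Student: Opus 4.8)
The plan is to argue by contradiction, using the rigid ``tentacle'' geometry of the circle-node region together with Lemma~\ref{lem:reversing}. Suppose a $\mathit{Pt}$- or $S$-path $p$ joins two square nodes $u$ and $v$ but visits at least one circle node. First I would pin down exactly where circle nodes can occur in $G_P$. By the coloring convention, the circle nodes are precisely the variables in $\mathit{Var}_{w1}\cup\mathit{Var}_g$, and these are produced only by the \textsc{Node-with-as-s} gadget of Figure~\ref{fig:g2peg} (Algorithm~\ref{algo:d1topt}, lines~\ref{algo:pte1}--\ref{algo:pte2}). Because the auxiliary variables $t_i\in\mathit{Var}_{w1}$ and the address-taken variable $v'\in\mathit{Var}_g$ are generated freshly for each processed node $v$, each such gadget contributes a simple chain of circle nodes attached to the rest of $G_P$ at a single black square ``root'' $v$, through the edge $v\xrightarrow{\mathit{as}}\ast t_i$ and its reverse $\ast t_i\xrightarrow{\overline{\mathit{as}}}v$.

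The first real step is a purely structural claim: the only square node adjacent to any circle node is the root of that circle node's tentacle, circle nodes of distinct tentacles are never adjacent, and the terminal gray node $v'$ is a dead end. I would verify this by inspecting the six PEG edges of the gadget in Figure~\ref{fig:g2peg}, checking that every edge internal to the gadget joins two circle nodes, that the unique circle--square edge is the $\mathit{as}/\overline{\mathit{as}}$ edge at the root, and that $v'$'s only neighbor is the preceding white circle. Freshness of the $t_i$ makes distinct tentacles vertex-disjoint except at their roots, so the circle nodes of $G_P$ decompose into isolated dead-end chains, each hanging off one black square node.

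Granting this, the contradiction is immediate. Since $p$ starts and ends at square nodes yet visits a circle node $c$, consider the maximal sub-walk of $p$ lying entirely within circle nodes and containing $c$. As every circle node's sole square neighbor is its tentacle root, and the endpoints of $p$ are square, this sub-walk is both entered from and exited to the same root $x$; hence $p$ contains a sub-walk $x\rightarrow(\text{circle nodes})\rightarrow x$ that goes into a dead-end chain and returns. Such a closed excursion on a chain must turn around at its deepest-visited node $y$, so that $z\rightarrow y\rightarrow z$ is a subpath of $p$ for a neighbor $z$ of $y$, making the circle node $y$ a reversing node and contradicting Lemma~\ref{lem:reversing}. (The degenerate case where a square endpoint equals the visited tentacle's root is covered identically.) The step I expect to be the main obstacle is not this reversing argument, which is short once Lemma~\ref{lem:reversing} is available, but the bookkeeping behind the structural claim: one must confirm from the construction that \emph{every} circle node arises through the \textsc{Node-with-as-s} gadget and that the \textsc{Edge-with-sa} gadgets, which use only variables from $\mathit{Var}_b\cup\mathit{Var}_{w2}$ (all square), never create an edge incident to a circle node, so that the tentacles are truly isolated.
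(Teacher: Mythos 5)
Your proposal is correct and takes essentially the same route as the paper: the paper's proof likewise argues by contradiction that a path between square nodes entering the circle region must enter and leave through the unique black root of that node gadget, so the path contains a closed excursion into the dead-end chain that must turn around at some circle node, contradicting Lemma~\ref{lem:reversing}. Your extra bookkeeping (freshness of the $t_i$, vertex-disjointness of the tentacles, and that the edge gadgets are incident only to square nodes) merely makes explicit what the paper asserts informally from the $G_P$ construction.
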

\begin{proof}
We prove by contradiction. Assume such a path exists.
Without loss of generality, we assume the path joining  two square
nodes $u$ and $v$ is $u\rightarrow\ldots\rightarrow \circ_x\rightarrow\ldots
\rightarrow v$. Note that node
$\circ_x$ can be either a white circle node or a gray circle node.
Let the root node of $\circ_x$ be $\blacksquare_t$.
Thus, the path is of the form $u\rightarrow,\ldots, \rightarrow \blacksquare_t\rightarrow, \ldots, \circ_x, \ldots,\rightarrow
\blacksquare_t\rightarrow, \ldots, \rightarrow v$. The subpath
$\blacksquare_t\rightarrow, \ldots, \circ_x, \ldots, \rightarrow
\blacksquare_t$ forms a cycle.
As shown in Figure~\ref{figure:cycle}, with the gray node being one end of the
path, there always exits a node $\circ_y$ such that $\blacksquare_t\rightarrow, \ldots, \circ_y, \ldots, \rightarrow
\blacksquare_t$. Thus, $\circ_y$ is a reversing node.
This contradicts the fact that no circle node can be a reversing node (Lemma~\ref{lem:reversing})
\end{proof}

\begin{lemma}\label{lem:ptbg2sbb}
Let $\mathit{Pt}$ and $S$ be two nonterminals in grammar~\ref{fig:gpa1}. In PEG $G_P$, $\mathit{Pt}$-reachability among black and gray nodes is equivalent to
$S$-reachability among black nodes, \myie,
\[
\blacksquare_y\xrightarrow{\mathit{Pt}}\greynode_{\&x^\prime} \Longleftrightarrow
\blacksquare_y\xrightarrow{S}\blacksquare_x.
\]
\end{lemma}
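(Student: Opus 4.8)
The plan is to let the single production (\ref{rule:bb0}), $\mathit{Pt}\rightarrow S\,r$, drive everything, together with a census of the gadget that Algorithm~\ref{algo:d1topt} attaches to each node. In $G_P$ the gray node $\greynode_{\&x^\prime}$ occurs only inside the \textsc{Node-with-as-s} gadget of Figure~\ref{fig:g2peg} built for $x$, and its sole incoming $r$-edge is $\circ_{t_{i+2}}\xrightarrow{r}\greynode_{\&x^\prime}$, coming from \codeIn{t$_{i+2}$ = \&x$^\prime$;}. I first record the one positive fact I will reuse: the mapping tail $\blacksquare_x\xrightarrow{as}\circ_{\ast t_i}\xrightarrow{\overline{d}}\circ_{t_i}\xrightarrow{s}\circ_{t_{i+1}}\xrightarrow{r}\circ_{\&t_{i+2}}\xrightarrow{d}\circ_{t_{i+2}}$ realizes $as\,\overline{d}\,s\,r\,d$, which is derivable from $S$ by (\ref{rule:bb1}) with the inner $S$ rewritten by (\ref{rule:bb7}); hence $\blacksquare_x\xrightarrow{S}\circ_{t_{i+2}}$ holds unconditionally.

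For the ``$\Leftarrow$'' direction I would simply compose. Given $\blacksquare_y\xrightarrow{S}\blacksquare_x$, I glue it to the tail $\blacksquare_x\xrightarrow{S}\circ_{t_{i+2}}$ with (\ref{rule:bb5}) to obtain $\blacksquare_y\xrightarrow{S}\circ_{t_{i+2}}$ (the degenerate case $y=x$ uses $S\rightarrow\epsilon$), and then append $\circ_{t_{i+2}}\xrightarrow{r}\greynode_{\&x^\prime}$ through (\ref{rule:bb0}) to conclude $\blacksquare_y\xrightarrow{\mathit{Pt}}\greynode_{\&x^\prime}$.

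The real content is the ``$\Rightarrow$'' direction, which I would prove by peeling the derivation from the right. Starting from $\blacksquare_y\xrightarrow{\mathit{Pt}}\greynode_{\&x^\prime}$, rule (\ref{rule:bb0}) factors the path as $\blacksquare_y\xrightarrow{S}w\xrightarrow{r}\greynode_{\&x^\prime}$, and uniqueness of the $r$-edge into the gray node forces $w=\circ_{t_{i+2}}$, so it remains to show that $\blacksquare_y\xrightarrow{S}\circ_{t_{i+2}}$ implies $\blacksquare_y\xrightarrow{S}\blacksquare_x$. Inspecting the grammar, an $S$-path can terminate only in a $d$-, $sa$-, or $s$-edge; among the incoming edges of $\circ_{t_{i+2}}$ only $\circ_{\&t_{i+2}}\xrightarrow{d}\circ_{t_{i+2}}$ qualifies (the other is $\greynode_{\&x^\prime}\xrightarrow{\overline{r}}\circ_{t_{i+2}}$), so the final production must be (\ref{rule:bb1}). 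Its trailing $r\,d$ is then pinned to the unique $r$-edge $\circ_{t_{i+1}}\xrightarrow{r}\circ_{\&t_{i+2}}$, leaving an inner $S$-path that ends at $\circ_{t_{i+1}}$; repeating the census there, the only admissible last edges are $\circ_{t_i}\xrightarrow{s}\circ_{t_{i+1}}$ and $\circ_{\&t_{i+1}}\xrightarrow{d}\circ_{t_{i+1}}$, and the latter would again demand (\ref{rule:bb1}) and hence an $r$-edge into $\&t_{i+1}$, which the gadget never creates since no statement writes $\&t_{i+1}$. Thus the inner $S$ collapses to the single $s$-edge via (\ref{rule:bb7}), and the remaining $\overline{d}$- and $as$-slots of the (\ref{rule:bb1}) frame are forced through the unique edges into $\circ_{\ast t_i}$ and out of $\blacksquare_x$, so the rightmost (\ref{rule:bb1}) block is exactly the mapping tail rooted at $\blacksquare_x$. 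Consequently either this block is the entire derivation, forcing $y=x$ and $\blacksquare_y\xrightarrow{S}\blacksquare_x$ trivially, or the top level is an (\ref{rule:bb5}) split whose left factor is $\blacksquare_y\xrightarrow{S}\blacksquare_x$.

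The main obstacle is ensuring that this right-to-left peeling is actually forced, so that no stray $S$-path can enter the gadget except along the mapping tail through $\blacksquare_x$. I would underpin it with three facts already in hand: the edge-uniqueness census inside the gadget (one $r$-edge into $\greynode_{\&x^\prime}$, one into $\circ_{\&t_{i+2}}$, and none into $\circ_{\&t_{i+1}}$); Lemma~\ref{lem:reversing}, so that a path cannot loop at the auxiliary circle nodes to manufacture a different label sequence; and the Isolation Lemma (Lemma~\ref{lem:iso}), which certifies that the extracted left factor $\blacksquare_y\xrightarrow{S}\blacksquare_x$ stays among square nodes and is therefore a genuine black-to-black $S$-path. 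Together these collapse every admissible derivation to the normal form above and yield both implications.
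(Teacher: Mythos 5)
Your proposal is correct and takes essentially the same route as the paper's own proof: both isolate the unique mapping-tail $S$-path $\blacksquare_x\xrightarrow{S}\circ_{t_{i+2}}\xrightarrow{r}\greynode_{\&x^\prime}$ (rule (\ref{rule:bb1}) with inner $S\rightarrow s$), prove ``$\Leftarrow$'' by composition via rules (\ref{rule:bb0}) and (\ref{rule:bb5}), and prove ``$\Rightarrow$'' by forcing every $\mathit{Pt}$-path into the gray node to pass through the root $\blacksquare_x$ and splitting there. Your right-to-left peeling simply makes explicit the uniqueness facts the paper asserts more tersely (no $S$-path ends at the intermediate circle nodes, and the $s$-edge gives the only $S$-path into $\circ_w$), so it is a more detailed write-up of the same argument rather than a different one.
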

\begin{proof}
Each gray node $\greynode_{\&x^\prime}$ has  one unique root node $\blacksquare_x$.
Due to the $G_P$ construction, there is always a path $\blacksquare_x\xrightarrow{\mathit{as}}\circ_u\xrightarrow{\overline{d}}\circ_v\xrightarrow{s}\circ_w\xrightarrow{r}\circ_y\xrightarrow{d}\circ_z\xrightarrow{r}\greynode_{\&x^\prime}$ between the black node
$\blacksquare_x$ and the gray node $\greynode_{\&x^\prime}$ shown in Figure~\ref{figure:cycle}.
According to the last terminals in rules (\ref{rule:bb1}) and (\ref{rule:bb2}), no $S$-path in PEG
$G_P$ ends at nodes $\circ_u$, $\circ_v$, $\circ_y$, respectively. Moreover,
there is only one $S$-path $\circ_v\xrightarrow{S}\circ_w$ that ends at $\circ_w$,
\myie, there is no node $y$ in $G_P$ such that $y\rightarrow \blacksquare_x
\rightarrow \circ_v\xrightarrow{s}\circ_w$ and $y \xrightarrow{S}\circ_w$.
 Based on rule (\ref{rule:bb1}), we
have $\blacksquare_x\xrightarrow{S}\circ_z\xrightarrow{r}\greynode_{\&x^\prime}$.
\begin{itemize}
\item {The $\Rightarrow$ direction.} Due to the construction, every $\mathit{Pt}$-path
$\blacksquare_y\xrightarrow{\mathit{Pt}}\greynode_{\&x^\prime}$
passes through node $\blacksquare_x$, \myie, there must be a path
  $\blacksquare_y\rightarrow\blacksquare_x\xrightarrow{S}\circ_z\xrightarrow{r}\greynode_{\&x^\prime}$.
Based on rule (\ref{rule:bb0}), we have   $\blacksquare_y\xrightarrow{S}\circ_z\xrightarrow{r}\greynode_{\&x^\prime}$.
The nodes $\circ_u$, $\circ_v$, $\circ_w$,$\circ_y$ between $\blacksquare_x$ and $\circ_z$ are not $S$-reachable from $\blacksquare_y$.
Finally, we have $\blacksquare_y\xrightarrow{S}\blacksquare_x$ based on rule (\ref{rule:bb5}).
\item {The $\Leftarrow$ direction.} For each $S$-path
  $\blacksquare_y\xrightarrow{S}\blacksquare_x$, there is a path
  $\blacksquare_y\xrightarrow{S}\blacksquare_x\xrightarrow{S}\circ_z\xrightarrow{r}\greynode_{\&x^\prime}$. 
Therefore,
  we have a path
  $\blacksquare_y\xrightarrow{\mathit{Pt}}\greynode_{\&x^\prime}$ based on
  rules (\ref{rule:bb0}) and (\ref{rule:bb5}).
\end{itemize}

\end{proof}

\paragraph{Language $\mathit{Pt}^\prime$}
Based on Lemma~\ref{lem:ptbg2sbb}, we are able to compute $S$-reachability with
only black nodes.
Due to Lemma~\ref{lem:iso}, we can discard all circle nodes when computing
$S$-reachability. Therefore, we can discard all rules in Figure~\ref{fig:gpa1}
that contain symbols associated with circle nodes. As a result, we can safely
remove rules (\ref{rule:bb1}), (\ref{rule:bb3}), $S\rightarrow s$ and
$\overline{S}\rightarrow \overline{s}$. Figure~\ref{fig:gpa2} gives the
simplified grammar with a new start symbol $\mathit{Pt}^\prime$. Based on the
discussion, it is immediate that $\mathit{Pt}^\prime$-reachability is equivalent
to $S$-reachability.

\begin{lemma}\label{lem:part1}
In PEG $G_P$, $\mathit{Pt}$-reachability among black and gray nodes is equivalent to
$\mathit{Pt}^\prime$-reachability among black nodes, \myie,
\[
\blacksquare_y\xrightarrow{\mathit{Pt}}\greynode_{\&x^\prime} \Longleftrightarrow
\blacksquare_y\xrightarrow{\mathit{Pt^\prime}}\blacksquare_x.
\]
\end{lemma}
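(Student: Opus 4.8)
The plan is to obtain Lemma~\ref{lem:part1} by chaining the reduction already established in Lemma~\ref{lem:ptbg2sbb} with a purely grammatical argument showing that $S$-reachability among black nodes under the grammar of Figure~\ref{fig:gpa1} coincides with $\mathit{Pt}^\prime$-reachability under the grammar of Figure~\ref{fig:gpa2}. Since Lemma~\ref{lem:ptbg2sbb} already gives $\blacksquare_y\xrightarrow{\mathit{Pt}}\greynode_{\&x^\prime} \Leftrightarrow \blacksquare_y\xrightarrow{S}\blacksquare_x$, it suffices to prove $\blacksquare_y\xrightarrow{S}\blacksquare_x \Leftrightarrow \blacksquare_y\xrightarrow{\mathit{Pt}^\prime}\blacksquare_x$, after which the two equivalences compose to yield the claim.

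The easy inclusion is the ($\Leftarrow$) direction. The productions of $\mathit{Pt}^\prime$ in Figure~\ref{fig:gpa2} are exactly the productions (\ref{rule:bb2}), (\ref{rule:bb4}), (\ref{rule:bb5}), (\ref{rule:bb6}) of Figure~\ref{fig:gpa1}, together with the $\epsilon$-alternatives of (\ref{rule:bb7}) and (\ref{rule:bb8}); its start rule $\mathit{Pt}^\prime\to S$ merely selects the nonterminal $S$. Hence any $\mathit{Pt}^\prime$-derivation of the string realized by a black-to-black path is verbatim an $S$-derivation in the grammar of Figure~\ref{fig:gpa1}, so $\blacksquare_y\xrightarrow{\mathit{Pt}^\prime}\blacksquare_x$ implies $\blacksquare_y\xrightarrow{S}\blacksquare_x$, and then Lemma~\ref{lem:ptbg2sbb} delivers $\blacksquare_y\xrightarrow{\mathit{Pt}}\greynode_{\&x^\prime}$.

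For the harder ($\Rightarrow$) direction I would invoke the Isolation Lemma (Lemma~\ref{lem:iso}): any $S$-path joining two square nodes avoids every circle node. The crux is then to read off from the construction in Figure~\ref{fig:g2peg} that the terminals $\mathit{as}, \overline{\mathit{as}}, s, \overline{s}$ label only edges incident to circle nodes (they arise solely from the \textsc{Node-with-as-s} gadget), whereas every square-to-square edge carries a label from $\{d,\overline{d},r,\overline{r},\mathit{sa},\overline{sa}\}$. Consequently the string realized by a black-to-black $S$-path contains none of $\mathit{as}, \overline{\mathit{as}}, s, \overline{s}$, so its derivation cannot apply any rule that emits one of these terminals; in particular it uses neither (\ref{rule:bb1}) nor (\ref{rule:bb3}) nor the $s$/$\overline{s}$ alternatives of (\ref{rule:bb7}) and (\ref{rule:bb8}). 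The surviving productions are precisely those of $\mathit{Pt}^\prime$, whence the realized string is $\mathit{Pt}^\prime$-derivable and $\blacksquare_y\xrightarrow{S}\blacksquare_x$ implies $\blacksquare_y\xrightarrow{\mathit{Pt}^\prime}\blacksquare_x$.

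The main obstacle is this forward inclusion, and specifically the step that converts the geometric statement of Lemma~\ref{lem:iso} (the path avoids circle nodes) into the syntactic statement that the four circle-only terminals never occur in the realized string; this is where the shape-based labeling of the specialized PEG is essential and where I would argue most carefully. One minor but genuine subtlety is that discarding $S\to s$ from (\ref{rule:bb7}) must leave the alternative $S\to\epsilon$ intact—retained as the $\epsilon$-rule of $\mathit{Pt}^\prime$—since the nullability of $S$ is still required, and the same applies to $\overline{S}$ via (\ref{rule:bb8}). Combining the two inclusions with Lemma~\ref{lem:ptbg2sbb} then closes the proof.
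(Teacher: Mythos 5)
Your proof is correct and takes essentially the same route as the paper: the paper likewise obtains Lemma~\ref{lem:part1} by combining Lemma~\ref{lem:ptbg2sbb} with the Isolation Lemma (Lemma~\ref{lem:iso}) to discard rules (\ref{rule:bb1}), (\ref{rule:bb3}), $S\rightarrow s$, and $\overline{S}\rightarrow\overline{s}$, leaving exactly the $\mathit{Pt}^\prime$ grammar of Figure~\ref{fig:gpa2} while retaining the $\epsilon$-alternatives. Your terminal-occurrence argument (circle-only labels $\mathit{as},\overline{\mathit{as}},s,\overline{s}$ cannot appear in the realized string of a black-to-black $S$-path, hence no derivation of it uses the discarded rules) is a faithful elaboration of the step the paper declares ``immediate,'' not a different approach.
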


\subsection{$\mathit{Pt}^\prime$-Reachability and $D_1^\prime$-Reachability}\label{subsec:b}

Our basic idea is to ``extract'' a $D_1$ grammar from the
$\mathit{Pt}^\prime$ grammar.  
Every $D_1$ string can be generated by  either rule
$D_1\rightarrow[_1~D_1~]_1$ or rule $D_1\rightarrow D_1~D_1$.
Lemma~\ref{lem:part1} considers  $\mathit{Pt}^\prime$-paths with only black nodes.
However, in $G_P$, the $\mathit{Pt}^\prime$-paths can also
join white nodes. In our proof, we need to make sure that our extracted $D_1$ grammar
only involves black nodes.

\begin{lemma}\label{lem:samecolor}
Based on grammar $\mathit{Pt}^\prime$, the $S$-paths join only
same-color square nodes in $G_P$.
\end{lemma}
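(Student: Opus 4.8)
The plan is to prove Lemma~\ref{lem:samecolor} by a \emph{simultaneous} (mutual) structural induction on $\mathit{Pt}^\prime$-derivations, after strengthening the statement so that it covers both nonterminals at once: every $S$-path joining two square nodes joins two squares of the \emph{same} color, and likewise every $\overline S$-path joining two square nodes joins two squares of the same color. Before starting the induction I would invoke the Isolation Lemma (Lemma~\ref{lem:iso}): any $S$- or $\overline S$-path between square endpoints contains no circle node, so every edge along such a path is a square-to-square edge and it suffices to reason inside the square-induced subgraph of $G_P$.

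The key preliminary step is to read off, from the two edge gadgets in Figure~\ref{fig:g2peg}, the complete list of square-to-square edges together with the colors of their endpoints. Writing $B$ for a black square (a bare variable of $\mathit{Var}_b$) and $W$ for every other square node, the only realizable square-to-square transitions are those produced by the open gadget $\blacksquare\xrightarrow{\overline d}\square\xrightarrow{\overline r}\square\xrightarrow{\overline{sa}}\square\xrightarrow{\overline d}\blacksquare$ and the close gadget $\blacksquare\xrightarrow{r}\square\xrightarrow{d}\square\xrightarrow{d}\square\xrightarrow{sa}\blacksquare$, together with their reverses. The crucial structural fact I would extract is that a color change $B\leftrightarrow W$ can occur only at a bare black variable, and that the relevant neighboring edges are \emph{uniquely forced}: e.g.\ the only $\overline d$-edge into $\&x$ is $x\xrightarrow{\overline d}\&x$, and the only $d$-edge into $\ast y$ is $y\xrightarrow{d}\ast y$, so the color of the node adjacent to such a symbol is pinned down.

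With this table in hand the induction is routine on the productions of Figure~\ref{fig:gpa2}. The empty productions (\ref{rule:cc7}) and (\ref{rule:cc8}) give a zero-length path at a single square, trivially monochromatic. The concatenation rules (\ref{rule:cc5}) and (\ref{rule:cc6}) follow from transitivity of ``same color'': the shared midpoint is again a square by the inductive hypothesis, so the two halves agree on color. The heart of the argument is the two bracket rules (\ref{rule:cc2}) and (\ref{rule:cc4}). For $S\to\overline d\,\overline r\,\overline S\,d\,sa$ I would case-split on the first pair $\overline d\,\overline r$ and the last pair $d\,sa$; each pair is realizable in exactly two ways (from the open gadget or from the close gadget), and in every case the edge table forces the colors of the two inner endpoints of $\overline S$. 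Applying the inductive hypothesis to the inner $\overline S$-path then eliminates the color-incompatible combinations and leaves precisely those whose outer endpoints share a color; rule (\ref{rule:cc4}) is handled symmetrically, with the inner $S$-path playing the corresponding role.

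The step I expect to be delicate is exactly this bracket-rule case analysis, for two reasons. First, one must notice that $S$-paths are \emph{not} forced to be black-to-black: when the inner $\overline S$ is derived by (\ref{rule:cc8}) at a black node $n$ (which happens precisely when $n$ carries an incoming open-parenthesis gadget and an outgoing close-parenthesis gadget), rule (\ref{rule:cc2}) produces a genuine white-to-white $S$-path. This is why the invariant must be the symmetric ``same color'' rather than a fixed color, and why $S$ and $\overline S$ have to be carried together in one mutual induction. Second, the argument that the cross-color combinations cannot occur rests entirely on applying the inductive hypothesis to the inner nonterminal (a $B\!\to\!W$ inner path would be required, which the hypothesis forbids) together with the uniqueness of the forced neighboring edges; verifying that uniqueness for each of the four symbols $\overline r$, $sa$, $r$, $\overline{sa}$ is the main bookkeeping obstacle, but it reads off directly from the gadget structure in Figure~\ref{fig:g2peg}.
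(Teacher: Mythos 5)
Your proof is correct, but it is not the argument the paper gives for Lemma~\ref{lem:samecolor}. The paper proves the lemma by a short counting argument: every $S$-string of the grammar in Figure~\ref{fig:gpa2} has length $4k$ (rules (\ref{rule:cc2}) and (\ref{rule:cc4}) each contribute exactly four terminals, and concatenation/$\epsilon$ preserve this), while in $G_P$ consecutive black squares are separated by exactly three white squares, so a hypothetical $S$-path joining a black and a white square would have length $4k+1$, $4k+2$, or $4k+3$ --- a contradiction. Your route --- a mutual structural induction on $S$/$\overline{S}$-derivations, with Lemma~\ref{lem:iso} confining the path to the square subgraph and a case analysis of the colored edge table (Table~\ref{tab:alledges}) pinning down the colors at each bracket rule --- is genuinely different, and in effect anticipates the machinery the paper only deploys afterwards when it constructs the colored grammar $\mathit{Pt}^\prime_c$ in Section~\ref{subsec:b}. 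What your version buys is robustness to direction reversals: the paper's length claim tacitly equates walk length with net position displacement modulo $4$, which is only forced modulo $2$ once a path may turn around (a length-$4k$ walk in a graph graded by positions modulo $4$ can have displacement $2 \bmod 4$), so the black-to-$\square_2$ case is not strictly excluded by counting alone; your induction closes exactly this gap via the grammar together with the uniqueness of the forced $\overline{d}$/$d$ neighbors of $\&x$ and $\ast x$, which I verified against the gadgets of Figure~\ref{fig:g2peg}. You also correctly isolate the genuinely delicate phenomenon: rule (\ref{rule:cc2}) with an empty inner $\overline{S}$ at a black node that has both an incoming $[_1$-gadget and an outgoing $]_1$-gadget yields a white-to-white $S$-path, so the invariant must be the symmetric ``same color'' rather than ``black-to-black'' --- a fact the paper never states for $S$ explicitly (it surfaces only as $\tensor[_\square]{\overline{S}}{_\square}$ in Figure~\ref{fig:gpa3} and in Lemma~\ref{lem:nontran}). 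The cost of your approach is redundancy: the bracket-rule case analysis largely duplicates the Step 1--6 colored-rule construction that immediately follows Lemma~\ref{lem:samecolor} in the paper, whereas the paper keeps the lemma itself lightweight by deferring that bookkeeping.
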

\begin{proof}
We prove by contradiction. From Figure~\ref{fig:gpa2}, it is clear that all $S$-paths are
of length $4k$ for some $k\geq 0$. In our $G_P$ construction, there are three
white nodes between a pair of black nodes. We label the three white nodes as
$\blacksquare \rightarrow \square_1\rightarrow \square_2\rightarrow
\square_3\rightarrow \blacksquare$. Assume an $S$-path joins a
black node and a white node. The path could be depicted as one of the followings:
$\blacksquare
\xrightarrow{S} \square_1$,  $\blacksquare
\xrightarrow{S} \square_2$, or $\blacksquare
\xrightarrow{S} \square_3$.
The path lengths are $4k+1$, $4k+2$, and $4k+3$, respectively. Similarly, the
path lengths of $\square_1
\xrightarrow{S} \blacksquare$,  $\square_2
\xrightarrow{S} \blacksquare$, are $\square_3
\xrightarrow{S} \blacksquare$ are $4k+1$, $4k+2$, and $4k+3$ as well.  It
contradicts the fact that $S$-paths are of length $4k$.
\end{proof}
\begin{corollary}
Based on grammar $\mathit{Pt}^\prime$, the $\overline{S}$-paths join only
same-color square nodes in $G_P$.
\end{corollary}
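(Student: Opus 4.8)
The plan is to derive the corollary from Lemma~\ref{lem:samecolor} by exploiting the duality between the nonterminals $S$ and $\overline{S}$ in grammar $\mathit{Pt}^\prime$ (Figure~\ref{fig:gpa2}) together with the bidirectedness of $G_P$. First I would observe that the $\overline{S}$-productions are precisely the ``barred reverses'' of the $S$-productions: rule~(\ref{rule:cc2}) $S\rightarrow\overline{d}~\overline{r}~\overline{S}~d~\mathit{sa}$ and rule~(\ref{rule:cc4}) $\overline{S}\rightarrow\overline{\mathit{sa}}~\overline{d}~S~r~d$ are interchanged by reversing the symbol order and replacing each terminal/nonterminal by its bar; similarly rule~(\ref{rule:cc5}) $S\rightarrow S~S$ pairs with rule~(\ref{rule:cc6}) $\overline{S}\rightarrow\overline{S}~\overline{S}$, and the two $\epsilon$-rules pair with each other. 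Hence, by a routine induction on derivation length, the operation ``reverse the string and bar every symbol'' maps the language $L(\overline{S})$ exactly onto $L(S)$.

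Next I would invoke the bidirectedness established earlier: every edge $u\xrightarrow{t}v$ in $G_P$ has a reverse edge $v\xrightarrow{\overline{t}}u$, and $\overline{\overline{t}}=t$. Given any $\overline{S}$-path $p$ from $u$ to $v$ realizing a string $w\in L(\overline{S})$, traversing its edges in reverse order produces a genuine path $p^R$ in $G_P$ from $v$ to $u$ whose realized string is exactly the barred reverse of $w$. By the language correspondence above this string lies in $L(S)$, so $v\xrightarrow{S}u$. Lemma~\ref{lem:samecolor} then forces $u$ and $v$ to be same-color square nodes, which is the claim.

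The only real obstacle is making the symbol-level correspondence airtight: I would need to verify that the ``reverse-and-bar'' map is a bijection between the two production sets and that it commutes with derivation, so that the induction goes through cleanly. As a fully self-contained alternative that sidesteps the duality bookkeeping, I could instead replay the length argument of Lemma~\ref{lem:samecolor} verbatim: inspecting the $\overline{S}$-rules~(\ref{rule:cc4}),~(\ref{rule:cc6}),~(\ref{rule:cc8}) shows every $\overline{S}$-path again has length $4k$ for some $k\geq 0$, so the same labeling $\blacksquare\rightarrow\square_1\rightarrow\square_2\rightarrow\square_3\rightarrow\blacksquare$ of the three white nodes between a pair of black nodes yields path lengths $4k+1$, $4k+2$, $4k+3$ for any mixed-color endpoint pair, contradicting the fact that $\overline{S}$-paths have length $4k$. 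I would present the duality argument as the main proof and mention the length recomputation as the reason it is no more than a mirror image of the preceding lemma.
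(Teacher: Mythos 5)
Your proposal is correct. The paper states this corollary without proof, the intended justification being exactly your fallback: replay the length argument of Lemma~\ref{lem:samecolor} with $\overline{S}$ in place of $S$ (noting that, since rules~(\ref{rule:cc2}) and~(\ref{rule:cc4}) are mutually recursive, the ``length $4k$'' claim is really a simultaneous induction over $S$- and $\overline{S}$-derivations, which the lemma's proof already implicitly carries out). Your primary route is genuinely different and arguably cleaner: rather than reproving anything, you observe that the reverse-and-bar operation is an involution on the production set of $\mathit{Pt}^\prime$ --- rule~(\ref{rule:cc2}) maps to rule~(\ref{rule:cc4}), rule~(\ref{rule:cc5}) to rule~(\ref{rule:cc6}), and the $\epsilon$-rules to each other --- so $L(\overline{S})$ is exactly the barred reverse of $L(S)$; combined with the bidirectedness of PEGs (every $u\xrightarrow{t}v$ has a companion $v\xrightarrow{\overline{t}}u$, with $\overline{\overline{t}}=t$), any $\overline{S}$-path from $u$ to $v$ reverses to an honest $S$-path from $v$ to $u$, and Lemma~\ref{lem:samecolor} applies as a black box. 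What the duality argument buys is robustness: it would survive any modification of the edge construction that preserves bidirectedness and the barred-symmetric shape of the grammar (as in the extended cases of Section~\ref{sec:gen}), whereas the length recomputation must be redone whenever the gadget lengths change. What the paper's mirrored argument buys is economy --- no symbol-level bijection to verify, since the $4k$ count is immediate from inspection. Both are sound; your write-up correctly identifies the only delicate step (that reverse-and-bar commutes with derivation), and that step does check out against Figure~\ref{fig:gpa2}.
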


\begin{table}[t]
\caption{All edges} \label{tab:alledges}
\centering
\begin{tabular}{l | c | l}
\hline
Edge type & $G_P$ edge & $P$ statement\\
\hline
\textsc{O1} & $\blacksquare \xrightarrow{\overline{d}} \square
\xrightarrow{\overline{r}} \square \xrightarrow{\mathit{\overline{sa}}}  \square
\xrightarrow{\overline{d}}  \blacksquare$ & \codeIn{t$_i$ = \&x; *y = t$_i$;}\\
\textsc{R1}  &  $\blacksquare \xrightarrow{d} \square \xrightarrow{\mathit{sa}} \square \xrightarrow{r}
\square \xrightarrow{d}  \blacksquare$ & \\
\textsc{O2} & $\blacksquare \xrightarrow{r} \square \xrightarrow{d} \square
\xrightarrow{d}  \square \xrightarrow{\mathit{sa}}  \blacksquare$ & \codeIn{x =
  \&t$_i$; *t$_i$ = y;}\\
\textsc{R2}  &  $\blacksquare \xrightarrow{\mathit{\overline{sa}}} \square \xrightarrow{\overline{d}} \square \xrightarrow{\overline{d}}
\square \xrightarrow{\overline{r}}  \blacksquare$ & \\

\hline 
\end{tabular}

\end{table}

We augment the $\mathit{Pt}^\prime$ grammar in Figure~\ref{fig:gpa2} with node color information in PEG
$G_P$. 
Consider a symbol $\overline{d}$ in Table~\ref{tab:alledges}. 
From Table~\ref{tab:followset}, we can see that symbol $\overline{r} \in
\textsc{Follow}(\overline{d})$. Therefore, a $\overline{d}$-edge can be followed
by an $\overline{r}$-edge. Now, consider an edge $u\xrightarrow{\overline{d}} y$ in
$G_P$. In  Table~\ref{tab:alledges}, there are three types of
$\overline{d}$-edges, \myie, $\square\xrightarrow{\overline{d}}\square$,
$\blacksquare\xrightarrow{\overline{d}}\square$, and $\square\xrightarrow{\overline{d}}\blacksquare$.
It is interesting to note that no $\overline{r}$-edge can follow
$\square\xrightarrow{\overline{d}}\blacksquare$ since all $\overline{r}$-edges start with a
white node $\square$.
To bridge the gap between the color constraints in $G_P$ and the
$\mathit{Pt}^\prime$ grammar, we introduce a \emph{colored form} of grammar $\mathit{Pt}^\prime$.
Specifically, we augment the $\mathit{Pt}^\prime$ grammar in Figure~\ref{fig:gpa2} with node color information in graph
$G_P$. 

\paragraph{Colored Grammar $\tensor[_\blacksquare]{S}{_\blacksquare}$}
Given a $\mathit{CFG} = (\Sigma, N, P, S)$ and a PEG $G_P$, we define a \emph{colored} grammar
$\mathit{CFG}_c = (\Sigma_c, N_c, P_c, S_c)$ where every symbol $\tensor[_l]{t}{_r} \in 
\Sigma_c \cup N_c$ is annotated with two colors $l, r \in \{\square, \blacksquare\}$ iff $l\xrightarrow{t}r \in G_P$ for all $t\in
\Sigma\cup N$. For each production rule $C \rightarrow AB$ in $P$, we construct a
colored rule  $\tensor[_i]{A}{_j} \rightarrow
\tensor[_k]{B}{_l}\tensor[_m]{C}{_n}$ in $P_c$ iff $i=k$, $j=n$, and $l=m$. We
denote it as $\tensor[_i]{A}{_j} \rightarrow
 {_i B_l C_j}$  for
brevity.
The start symbols in $S_c$ could be constructed accordingly. We then give
detailed steps to construct the production rules.
\begin{enumerate}[leftmargin=2cm,rightmargin=1cm,label=Step \arabic*:]
\item From grammar~\ref{fig:gpa2}, we can see that $S$ always begins with a
$\overline{d}$. In Table~\ref{tab:alledges}, there is a unique
  $\blacksquare\xrightarrow{\overline{d}}\square$ in the type \textsc{O1}  edge. Therefore, we have ``$\tensor[_\blacksquare]{S}{_\blacksquare} \rightarrow
\tensor[_\blacksquare]{\overline{d}}{_\square}\ldots$''.
\item In the type \textsc{O1} edge, the unique
  $\blacksquare\xrightarrow{\overline{d}}\square$ is followed by $\square\xrightarrow{\overline{r}}\square$. Therefore, we have ``$\tensor[_\blacksquare]{S}{_\blacksquare} \rightarrow
{_\blacksquare \overline{d}_\square\overline{r}_{\square}}\ldots$''.
\item Based on grammar~\ref{fig:gpa2} and our current construction, there should be an $S$ symbol that begins with a
  $\square$. 
Based on Lemma~\ref{lem:samecolor}, it should be a $\tensor[_\square]{S}{_\square}$.
Our production rule becomes  ``$\tensor[_\blacksquare]{S}{_\blacksquare} \rightarrow
{_\blacksquare \overline{d}_\square\overline{r}_{\square}}S_{\square}\ldots$''. Next,
we look into the last symbols of rule (\ref{rule:cc2}) to complete the construction.
\item The production rule of $\tensor[_\blacksquare]{S}{_\blacksquare}$ should
  end with a $\tensor[_\square]{\mathit{sa}}{_\blacksquare}$. Therefore, we have
  ``$\tensor[_\blacksquare]{S}{_\blacksquare} \rightarrow \ldots
  {_\square\mathit{sa}_{\blacksquare}}$''.
\item The edge $\square \xrightarrow{\mathit{sa}} \blacksquare$ is unique in
  Table~\ref{tab:alledges}. It immediately follows a $\square \xrightarrow{d}
  \square$ in type O2 edge. As a result, we have  ``$\tensor[_\blacksquare]{S}{_\blacksquare} \rightarrow \ldots
  {_\square d_\square\mathit{sa}_{\blacksquare}}$''.
\item Combining steps 3 and 5, we have a complete production rule 
``$\tensor[_\blacksquare]{S}{_\blacksquare} \rightarrow
{_\blacksquare
  \overline{d}_\square\overline{r}_{\square}}S_{\square}d_\square\mathit{sa}_{\blacksquare}$''.
\end{enumerate}

Similarly, we could construct rule ``$\tensor[_\square]{\overline{S}}{_\square} \rightarrow
{_\square
  \overline{\mathit{sa}}_\square\overline{d}_{\blacksquare}}S_{\blacksquare}r_\square
d_{\square}$''. With the beginning $\square$ and the terminal $\overline{sa}$,
we could uniquely locate the ${_\square
  \overline{\mathit{sa}}_\square\overline{d}_{\blacksquare}}$ portion in the
type \textsc{O1} edge of Table~\ref{tab:alledges}. With the beginning
$\blacksquare$ and the terminal $r$, we could also uniquely locate the ${_\blacksquare
  r_\square d_{\square}}$ portion in the
type \textsc{O2} edge in Table~\ref{tab:alledges}. Based on the first and last
symbol in $\tensor[_\blacksquare]{S}{_\blacksquare}$, it is immediate that
$\tensor[_\blacksquare]{S}{_\blacksquare}\rightarrow {_\blacksquare
  S_\blacksquare S_\blacksquare}$.

\begin{lemma}[Non-transitivity]\label{lem:nontran}
$\tensor[_\square]{S}{_\square}$-paths are not transitive.
\end{lemma}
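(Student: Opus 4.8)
The plan is to prove the lemma by pinning down the \emph{seam symbol} that a transitive step would force, and then showing $G_P$ contains no vertex at which that seam can be realized. Concretely, a transitive composition of $\tensor[_\square]{S}{_\square}$-summaries is exactly the white--white instance of production (\ref{rule:cc5}), namely $\tensor[_\square]{S}{_\square}\to\tensor[_\square]{S}{_\square}\,\tensor[_\square]{S}{_\square}$; by Lemma~\ref{lem:samecolor} and its corollary every endpoint and every join vertex it would introduce is a white square. So I would argue that no white square can serve as the join vertex, and hence that two nontrivial white--white $S$-summaries can never be put end to end.

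First I would fix the boundary terminals of an $S$-derivation. In grammar $\mathit{Pt}^\prime$ (Figure~\ref{fig:gpa2}) the only terminal-emitting rule for $S$ is (\ref{rule:cc2}), $S\to\overline{d}\,\overline{r}\,\overline{S}\,d\,sa$, while (\ref{rule:cc5}) merely concatenates and (\ref{rule:cc7}) erases; an easy induction on the derivation then shows that the leftmost terminal of any nonempty $S$-string is $\overline{d}$ and the rightmost is $sa$. Therefore a nontrivial $\tensor[_\square]{S}{_\square}$-summary must leave its source along an outgoing $\overline{d}$-edge and enter its target along an incoming $sa$-edge. A join vertex $v$ for the transitive step must thus be a white square carrying \emph{both} an incoming $sa$-edge (closing the first summary) and an outgoing $\overline{d}$-edge (opening the second); equivalently, the composed realized string would contain the seam factor $sa\,\overline{d}$ at $v$.

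The heart of the proof is to show this seam is unrealizable at any white square, by an exhaustive reading of the edge catalogue of Table~\ref{tab:alledges} augmented with bidirectedness. The only white-to-white $sa$-edge is the $\square\xrightarrow{sa}\square$ inside a type-\textsc{R1} gadget, so the unique candidate for an $sa$-head is that gadget's interior node, whose incident labels are exactly $\{sa,\overline{sa},r,\overline{r}\}$ --- it has no outgoing $\overline{d}$. Symmetrically, white-sourced $\overline{d}$-edges occur only in type-\textsc{R2} gadgets (and as the terminal $\square\xrightarrow{\overline{d}}\blacksquare$ of a type-\textsc{O1} gadget), and none of those tails carries an incoming $sa$. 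Since each white square lives in a single gadget, the two roles never coincide, so the required join vertex $v$ does not exist. I would then conclude that the white--white instance of (\ref{rule:cc5}) is never applicable in $G_P$: nontrivial $\tensor[_\square]{S}{_\square}$-summaries cannot be composed, the relation they induce has no transitive step, and the $\tensor[_\square]{S}{_\square}$-summaries reduce to the empty (reflexive) ones of (\ref{rule:cc7}). This is precisely the non-transitivity used later to excise the white transitivity productions and inline $\overline{S}$ into $S$ when extracting $D_1^\prime$ (Figure~\ref{fig:gpa4}). The $D_1$ reading makes the obstruction vivid: the summary ends on the close-symbol $sa$ and would have to hand control to the open-symbol $\overline{d}$, but $G_P$ offers no white vertex where the close of one half-bracket meets the open of another.

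The main obstacle I anticipate is making the last step airtight against the bidirected reverse edges: I must verify label-by-label that reversing an edge cannot furnish an $sa$-head with an $\overline{d}$ out-edge (or an $\overline{d}$-tail with an $sa$ in-edge) through some mirrored label, which is why the incidence check is cross-referenced against both Table~\ref{tab:alledges} and the \textsc{Follow} data of Table~\ref{tab:followset} rather than read off a single gadget picture. A secondary, purely expository point is to state the conclusion in the operative sense required downstream --- that the composition rule cannot fire, so $\tensor[_\square]{S}{_\square}$ is not closed under transitive composition --- so that it is clear this is a genuine obstruction to chaining white summaries and not merely a statement about endpoint disjointness.
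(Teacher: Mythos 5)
Your proposal is correct and takes essentially the same route as the paper: both arguments pin down the grammar-forced boundary terminals of a nontrivial white--white summary and then check exhaustively against the four gadget types of Table~\ref{tab:alledges} that the seam at the (necessarily white, by Lemma~\ref{lem:samecolor}) join vertex cannot be realized in $G_P$. The only difference is cosmetic orientation: you argue in the $S$-form (boundary $\overline{d}\cdots sa$, seam $sa\,\overline{d}$, phrased as a local incidence check at the join vertex), while the paper's proof uses the $\overline{S}$-form boundary $\overline{sa}\,\overline{d}\cdots r\,d$ and rejects the concatenated four-edge fragment $\blacksquare\xrightarrow{r}\square\xrightarrow{d}\square\xrightarrow{\overline{sa}}\square\xrightarrow{\overline{d}}\blacksquare$ as matching no edge type --- and since $G_P$ is bidirected, these two seam conditions (incoming $sa$ with outgoing $\overline{d}$, versus incoming $d$ with outgoing $\overline{sa}$) are the same incidence condition, exactly as your mirrored-label check verifies.
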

\begin{proof}
We prove by contradiction. Assume that there  is an
$\tensor[_\square]{S}{_\square}$-path in $G_P$ which is generated by two consecutive
$\tensor[_\square]{S}{_\square}$-paths. With the color constraints in Table~\ref{tab:alledges}, we can
see that an $\tensor[_\square]{S}{_\square}$
path can only begin with ${_\square \overline{sa}_\square \overline{d}_{\blacksquare}}$ and
end with ${_\blacksquare r_\square d_{\square}}$. Putting the two
parts together yields a ``$\blacksquare \xrightarrow{r} \square \xrightarrow{d} \square \xrightarrow{\overline{sa}}
\square \xrightarrow{\overline{d}}  \blacksquare$''. It it clear that the
constructed path does not belong to any of the four edge types in
Table~\ref{tab:alledges}. It contradicts the fact that the
$\tensor[_\square]{S}{_\square}$-path is a valid path in $G_P$
\end{proof}

We give the production rules of language $\mathit{Pt}^\prime$ in the colored
form in Figure~\ref{fig:gpa3}.
Note that, based on Lemma~\ref{lem:nontran}, we can eliminate rule~\ref{rule:cc6} in Figure~\ref{fig:gpa2}.
\begin{lemma}\label{lem:part2}
In PEG $G_P$, $\mathit{Pt}^\prime$-reachability among black nodes is equivalent to
$\mathit{Pt}^\prime_c$-reachability, \myie,
\[
\blacksquare_y\xrightarrow{\mathit{Pt}}\blacksquare_x \Longleftrightarrow
\blacksquare_y\xrightarrow{\mathit{Pt^\prime_c}}\blacksquare_x.
\]
\end{lemma}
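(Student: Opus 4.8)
The plan is to prove the two implications separately, with the ``$\Leftarrow$'' direction essentially free and the ``$\Rightarrow$'' direction carrying all the content. The organizing principle is that a colored derivation is, by definition, an ordinary $\mathit{Pt}^\prime$-derivation in which every terminal and nonterminal has been annotated with the colors of the two $G_P$-nodes it spans. So the whole lemma reduces to showing that the color annotations can always be assigned consistently and never force us outside the rule set of Figure~\ref{fig:gpa3}.

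For the \emph{``$\Leftarrow$'' direction}, I would simply erase colors. Each colored rule in Figure~\ref{fig:gpa3} is an annotated instance of a rule of Figure~\ref{fig:gpa2}: rule~(\ref{rule:dd2}) refines~(\ref{rule:cc2}), rule~(\ref{rule:dd4}) refines~(\ref{rule:cc4}), rule~(\ref{rule:dd5}) refines~(\ref{rule:cc5}), and the $\epsilon$-rules refine~(\ref{rule:cc7})/(\ref{rule:cc8}). Hence deleting the color subscripts from a $\mathit{Pt}^\prime_c$-derivation of $\blacksquare_y\xrightarrow{\mathit{Pt}^\prime_c}\blacksquare_x$ yields a valid $\mathit{Pt}^\prime$-derivation realizing the same path in $G_P$, giving $\blacksquare_y\xrightarrow{\mathit{Pt}^\prime}\blacksquare_x$.

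For the \emph{``$\Rightarrow$'' direction}, I would start from a $\mathit{Pt}^\prime$-path $p$ between two black nodes together with its $\mathit{Pt}^\prime$-derivation tree, whose yield is $\mathcal{R}(p)$. Because $p$ is a genuine path in $G_P$, every node it visits has a definite color in $\{\square,\blacksquare\}$, so I can annotate each terminal and nonterminal occurrence with the colors of the two path endpoints it spans; this coloring is forced and unique once $p$ is fixed. The task is then to verify, rule application by rule application, that the annotated rule is one of (\ref{rule:dd2}), (\ref{rule:dd4}), (\ref{rule:dd5}), (\ref{rule:dd7}), (\ref{rule:dd8}). For the terminal-bearing rules this is immediate: the edge inventory of Table~\ref{tab:alledges} assigns each terminal a unique colored form, which is exactly how the colored rules were synthesized in construction Steps~1--6, so the endpoint colors of each nonterminal occurrence are pinned down to match (\ref{rule:dd2}) and (\ref{rule:dd4}).

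The main obstacle is the binary splitting rule and the elimination of~(\ref{rule:cc6}). For an application of~(\ref{rule:cc5}) ($S\rightarrow S\,S$), I must show the interior split node is black: the top-level $S$ runs between black endpoints (via~(\ref{rule:cc0})/(\ref{rule:dd1})), and Lemma~\ref{lem:samecolor} forces every $S$-path to join same-colored square nodes, so both halves are $\blacksquare$-to-$\blacksquare$ and the split point is black, matching~(\ref{rule:dd5}); the dual corollary pins the $\overline{S}$ occurrences to $\square$-to-$\square$. Finally, I must rule out any use of~(\ref{rule:cc6}) ($\overline{S}\rightarrow\overline{S}\,\overline{S}$): since every $\overline{S}$-path joins two $\square$ nodes, Lemma~\ref{lem:nontran} (non-transitivity) shows such a path cannot be decomposed into two consecutive $\overline{S}$-paths, so~(\ref{rule:cc6}) is never invoked on a real path and its omission from the colored grammar loses no black-to-black reachability. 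Assembling these observations shows the forced coloring is always a legal $\mathit{Pt}^\prime_c$-derivation, establishing $\blacksquare_y\xrightarrow{\mathit{Pt}^\prime_c}\blacksquare_x$ and completing the equivalence.
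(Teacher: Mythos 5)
Your proposal is correct and takes essentially the same route as the paper: the paper gives no standalone proof of Lemma~\ref{lem:part2}, justifying it instead by exactly the ingredients you assemble --- the Steps~1--6 synthesis of colored rules from the edge inventory in Table~\ref{tab:alledges} for the terminal-bearing rules, Lemma~\ref{lem:samecolor} (and its corollary) to force the split point of $S\rightarrow S~S$ to be black and every $\overline{S}$ occurrence to span white nodes, and Lemma~\ref{lem:nontran} to discard rule~(\ref{rule:cc6}). Your additions (the explicit color-erasure argument for the ``$\Leftarrow$'' direction and the induction over the derivation tree) merely make the paper's implicit argument explicit; just note that a single terminal need not have a unique colored form in Table~\ref{tab:alledges} in isolation (\myeg, $\overline{d}$ has several) --- the uniqueness is contextual, given the left endpoint color and the adjacent terminal, which is precisely what Steps~1--6 exploit.
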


\paragraph{Language $D^\prime_1$.}  The colored from of language
$\mathit{Pt}^\prime$ in Figure~\ref{fig:gpa3} defines the
$\mathit{Pt}^\prime$-reachability among black nodes in PEG $G_P$. We can
simplify the set of
rules in Figure~\ref{fig:gpa3} by eliminating the nonterminal
$\tensor[_\square]{S}{_\square}$. We obtain the language $D^\prime_1$ in
Figure~\ref{fig:gpa4}. It is immediate that $D^\prime_1$ is equivalent to the
colored form $\mathit{Pt}^\prime_c$ in Figure~\ref{fig:gpa3}. As a result,
$D^\prime_1$-reachability is equivalent to
$\mathit{Pt}^\prime_c$-reachability. 

\subsection{$D_1^\prime$-Reachability and $D_1$-Reachability}\label{subsec:c}
Let $\mathit{Pt}$-reachability refer to the $\mathit{Pt}$-reachability among black
and gray nodes in $G_P$ and $\mathit{Pt}^\prime$-reachability refer to the
$\mathit{Pt}^\prime$-reachability among black nodes in $G_P$.
Based on Lemma~\ref{lem:part1} and Lemma~\ref{lem:part2}, we have:
\[
\mathit{Pt}\text{-reachability} \Longleftrightarrow
\mathit{Pt^\prime}\text{-reachability} \Longleftrightarrow
\mathit{Pt^\prime_c}\text{-reachability} \Longleftrightarrow D^\prime_1\text{-reachability}.
\]

Recall that an $L$-reachability problem instance defined in Definition~\ref{def:cfgreach} contains a digraph $G$ and a
context-free language $\mathit{CFG}$.
The $D^\prime_1$-reachability in PEG $G_P$ is isomorphic to the
$D_1$-reachability in $G$, \myie, there is a bijective mapping between $G_P$ and
$G$ . In particular, each node $v \in G$ has been mapped to a variable in $v \in
\mathit{Var}_b$ in program $P$ based on Algorithm~\ref{algo:d1topt}. The variable has been
constructed as a black square node in $G_P$. Figure~\ref{fig:g2peg} establishes
the bijective mapping between edges. From Figure~\ref{fig:gpa4}, it is clear
that $D^\prime_1$ and $D_1$ are isomorphic. 
Therefore, we have the following
theorem:
\begin{theorem}\label{thm:corr3}
Algorithm~\ref{algo:d1topt} takes as input a digraph $G=(V, E)$ and outputs a
C-style program $P$ with $O(E)$ variables and $O(E)$ statements. All nodes $v\in V$
are represented as variables $\codeInM{v}$ and $\codeInM{v^\prime}$ in $P$.
Node $v$ is $D_1$-reachable from node $u$ in $G$ iff 
the gray node $v^\prime$ is $\mathit{Pt}$-reachable from the black node $u$ in $G_P$.
\end{theorem}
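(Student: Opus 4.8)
The plan is to assemble Theorem~\ref{thm:corr3} from two ingredients that are already in place at this point in the development. The first ingredient is the chain of reachability equivalences established immediately above, namely
\[
\mathit{Pt}\text{-reachability} \Longleftrightarrow \mathit{Pt}^\prime\text{-reachability} \Longleftrightarrow \mathit{Pt}^\prime_c\text{-reachability} \Longleftrightarrow D_1^\prime\text{-reachability},
\]
which follows from Lemma~\ref{lem:part1}, Lemma~\ref{lem:part2}, and the elimination of the nonterminal $\tensor[_\square]{S}{_\square}$ that produces the grammar of Figure~\ref{fig:gpa4}. Concretely, this tells us that $\blacksquare_u \xrightarrow{\mathit{Pt}} \greynode_{\&v^\prime}$ holds in $G_P$ if and only if $\blacksquare_u \xrightarrow{D_1^\prime} \blacksquare_v$ holds in $G_P$. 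The second ingredient, which is the actual content of Section~\ref{subsec:c}, is an isomorphism between $D_1^\prime$-reachability on the black nodes of $G_P$ and $D_1$-reachability on $G$. Once both are in hand, the theorem is simply their composition, and the $O(E)$ variable/statement counts are read off directly from Algorithm~\ref{algo:d1topt}.

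First I would fix the two bijections underlying the isomorphism. By Lemma~\ref{pty:node}, Algorithm~\ref{algo:d1topt} maps each node $v\in V$ to a unique black square node $\blacksquare_v$ in $G_P$, and this map is a bijection onto the black nodes; this is the node part. For the edge part I would read off Figure~\ref{fig:g2peg}: each open-parenthesis edge $x \xrightarrow{[_1} y$ of $G$ is realized in $G_P$ by the four-edge gadget $\blacksquare_x \xrightarrow{\overline{d}} \square \xrightarrow{\overline{r}} \square \xrightarrow{\overline{\mathit{sa}}} \square \xrightarrow{\overline{d}} \blacksquare_y$, whose realized string is exactly the macro $[_1 := \overline{d}\,\overline{r}\,\overline{\mathit{sa}}\,\overline{d}$, and each close-parenthesis edge $x \xrightarrow{]_1} y$ is realized by the gadget $\blacksquare_x \xrightarrow{r} \square \xrightarrow{d} \square \xrightarrow{d} \square \xrightarrow{\mathit{sa}} \blacksquare_y$, whose realized string is the macro $]_1 := r\,d\,d\,\mathit{sa}$. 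These are precisely the two bracketed right-hand sides of the $D_1^\prime$ grammar in Figure~\ref{fig:gpa4}, so after collapsing each gadget to a single labeled edge the grammar $D_1^\prime$ becomes the ordinary $D_1$ grammar.

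With these correspondences fixed, I would argue the path-level equivalence in both directions. For the forward direction, a $D_1$-path from $u$ to $v$ in $G$ is a sequence of $[_1$- and $]_1$-edges whose realized string lies in $D_1$; substituting each edge by its gadget yields a walk from $\blacksquare_u$ to $\blacksquare_v$ in $G_P$ whose realized string is obtained from a $D_1$-string by $[_1 \mapsto \overline{d}\,\overline{r}\,\overline{\mathit{sa}}\,\overline{d}$ and $]_1 \mapsto r\,d\,d\,\mathit{sa}$, hence lies in $D_1^\prime$, so $\blacksquare_v$ is $D_1^\prime$-reachable from $\blacksquare_u$. For the converse I would invoke the specialized structure of $G_P$ to show that any black-to-black $D_1^\prime$-walk factors uniquely into complete gadgets: the colored grammar of Figure~\ref{fig:gpa3}, obtained through Lemma~\ref{lem:samecolor} and Lemma~\ref{lem:nontran}, only ever glues full $\tensor[_\blacksquare]{S}{_\blacksquare}$ derivations at black nodes, and each such derivation begins and ends with a full open/close gadget from Table~\ref{tab:alledges}; hence the realized string is a concatenation of the two macros with no partial gadget and pulls back to a genuine $D_1$-string over the edges of $G$.

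I expect the main obstacle to be exactly this uniqueness-of-factorization step in the converse: one must rule out $D_1^\prime$-walks that traverse a gadget only partially, detour through the interior white/gray machinery of a mapping node, or otherwise cross a gadget boundary in a way that corresponds to no $D_1$-string in $G$. This is precisely what the earlier structural lemmas supply — Lemma~\ref{lem:reversing} and the Isolation Lemma (Lemma~\ref{lem:iso}) forbid a valid path from entering circle nodes, Lemma~\ref{lem:samecolor} forces $S$-summaries to join same-color square nodes at gadget boundaries, and the Non-transitivity Lemma (Lemma~\ref{lem:nontran}) blocks the spurious concatenation of two $\tensor[_\square]{S}{_\square}$ half-gadgets. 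Granting these, the isomorphism is exact, and chaining it with the reachability equivalences above yields $u \xrightarrow{D_1} v$ in $G$ if and only if $\blacksquare_u \xrightarrow{\mathit{Pt}} \greynode_{\&v^\prime}$ in $G_P$, which is the statement of the theorem.
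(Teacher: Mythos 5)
Your proposal is correct and takes essentially the same route as the paper: the chain of equivalences $\mathit{Pt}\Leftrightarrow\mathit{Pt}^\prime\Leftrightarrow\mathit{Pt}^\prime_c\Leftrightarrow D_1^\prime$ from Lemmas~\ref{lem:part1} and~\ref{lem:part2}, composed with the node/edge isomorphism between $D_1^\prime$-reachability in $G_P$ and $D_1$-reachability in $G$ obtained from Lemma~\ref{pty:node}, the gadgets of Figure~\ref{fig:g2peg}, and the grammar of Figure~\ref{fig:gpa4}. Your explicit two-direction path argument and the gadget-factorization step (via Lemmas~\ref{lem:reversing}, \ref{lem:iso}, \ref{lem:samecolor}, and~\ref{lem:nontran}) simply spell out in detail what the paper asserts more tersely as an evident isomorphism.
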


Combining the Lemma~\ref{lem:peg2pt} on the equivalence between
$\mathit{Pt}$-reachability and inclusion-based points-to analysis, we 
prove  Theorem~\ref{thm:corr2}.

\section{Implications of $D_1$-Reachability-Based Reduction}\label{sec:imp}
Our BMM-hardness result of inclusion-based points-to analysis is based on a
reduction  from $D_1$-reachability (Section~\ref{sec1:pt2}).
As mentioned in Section~\ref{sec:intro}, the work by \citet{SridharanF09the} gives a  reduction from transitive closure to
inclusion-based points-to analysis. It is well-known that Boolean matrix
multiplication (BMM) is computationally  equivalent to transitive closure~\cite{FischerM71boolean}.
A natural question that arises is: does 
the $D_1$-reachability-based reduction yield any new insights?
This section discusses two important implications of our reduction.

\begin{itemize}

\item \textbf{Generality.} Based on Table~\ref{tab:ptcons}, points-to analysis on
 C-style programs contains four types of
  constraints: \textsc{Address-of}, \textsc{Assignment}, \textsc{Assign-star}
  and \textsc{Star-assign}. The Sridharan-Fink  reduction only permits
  \textsc{Address-of} and \textsc{Assignment} constraints. 
Based on pointer semantics, 
  real-world C-style programs can have pointers  \emph{without} using any assignment
  statements of the form ``\codeIn{a = b}'', \myie, all assignments are of the
  forms ``\codeIn{a = \&b}'', ``\codeIn{*a = b}'' and ``\codeIn{a = *b}''. Therefore, the
  Sridharan-Fink  reduction does not apply to 
the points-to analysis problem \emph{without} \textsc{Assignment} constraints.
Our reduction based on $D_1$-reachability can be generalized to non-trivial
C-style programs with any types of constraints. It applies to more practical programs (Section~\ref{sec:gen}).

\item \textbf{Expressiveness.} 
Based on $D_1$-reachability, we can establish a more interesting result  that
the demand-driven points-to  analysis is no
easier  than the exhaustive counterpart. As noted in Section~\ref{sec:intro}, the
transitive-closure-based reduction does not imply  such results because the
demand-driven version of graph
reachability can be trivially solved by a linear-time 
depth-first search.
Dyck-reachability is a fundamental framework to formulate many interprocedural program-analysis problems.
Our results demonstrate that the bottleneck of demand-driven interprocedural
analysis is due to matching the well-balanced properties such as procedure
calls/returns and pointer references/dereferences in programs, as opposed to computing the
transitive closure (Section~\ref{sec:hdd}).
\end{itemize}

\subsection{Generality of Reduction}\label{sec:gen}

Consider the four types of  constraints \textsc{Address-of} (\textsc{R}),
\textsc{Assignment} (\textsc{S}),  \textsc{Assign-star} (\textsc{As})
  and \textsc{Star-assign} (\textsc{Sa}) in points-to analysis.
Among the four constraints, the \textsc{Address-of} constraint is essential.
As discussed in Section~\ref{sec:intro}, without \textsc{Address-of}, all points-to sets are empty sets and the
points-to analysis problem becomes trivial. Moreover, if a program contains only
\textsc{Address-of} statements, the points-to sets can be trivially decided in
linear time as there are not any subset constraints.
The other three constraints can be arbitrarily  combined in any practical C-style
programs.
Therefore, to prove Corollary~\ref{cor:uni}, we need to discuss   $ {3 \choose 1} + {3 \choose 2} + {3 \choose 3}
=7$ combinations.  The work by~\citet[\S 3]{SridharanF09the} has already
established the reduction based on  
\textsc{S}. This section focuses on the remaining six cases.

\begin{figure}[t]
\small
\begin{center}
\begin{tabular}{ c|c| c |c}
\hline
Construction Type&Input Graph $G$   & PEG $G_P$ & Program $P$\\
\hline
\multirow{2}{*}[-1cm]{\small \textsc{Edge-with-as}}&
\adjustbox{valign=c}{\begin{tikzpicture}
%      \fill[white] rectangle (\cellw,\cellh) ;

        \node [style={circle,draw, inner sep=1pt,scale=0.85}]           (a) at (0,0)  { $x$ };
        \node [style={circle,draw, inner sep=1pt,scale=0.8}]           (c) at (1,0)  { $y$ };

   \draw[->] (a) -- node[above, midway,scale=0.8] {$[_1$}  (c);

    \end{tikzpicture}}    &

\adjustbox{valign=c}{\begin{tikzpicture}
%      \fill[white] rectangle (\cellw,\cellh) ;

        \node [style={rectangle,inner sep=.8pt}]           (a) at (0,0)  { $\blacksquare$};
        \node [style={rectangle,inner sep=.8pt}]           (b) at (0.8,0)  { $\square$};
        \node [style={rectangle,inner sep=.8pt}]           (c) at (0.8,-0.8)  { $\blacksquare$};

        \node [style={rectangle,inner sep=.8pt}, left=2pt of a,scale=.8]           (x)   {
          $x$};
        \node [style={rectangle,inner sep=.8pt}, right=2pt of c,scale=.8]           (y)   {
          $y$};
        \node [style={rectangle,inner sep=.8pt}]           (t) at (-0.8,0) {};
        \node [style={rectangle,inner sep=.8pt},above=10pt of a]           (t) {};
        \node [style={rectangle,inner sep=.8pt},below=10pt of c]           (t) {};

\begin{scope}
   \draw[->] (a) -- node[above, midway,scale=0.7] {$\mathit{as}$}  (b);
   \draw[->] (b) -- node[left, midway,scale=0.7] {$\overline{d}$}  (c);
\end{scope}
    \end{tikzpicture}}
& \shortstack{\codeIn{x = *y;} }
 \\
\cline{2-4}

&\adjustbox{valign=c}{\begin{tikzpicture}
%      \fill[white] rectangle (\cellw,\cellh) ;

        \node [style={circle,draw, inner sep=1pt,scale=0.85}]           (a) at (0,0)  { $x$ };
        \node [style={circle,draw, inner sep=1pt,scale=0.8}]           (c) at (1,0)  { $y$ };

   \draw[->] (a) -- node[above, midway,scale=0.8] {$]_1$}  (c);

    \end{tikzpicture}}    &

\adjustbox{valign=c}{\begin{tikzpicture}
%      \fill[white] rectangle (\cellw,\cellh) ;

        \node [style={rectangle,inner sep=.8pt}]           (a) at (0,0)  { $\blacksquare$};
        \node [style={rectangle,inner sep=.8pt}]           (b) at (0.8,0)  { $\square$};
        \node [style={rectangle,inner sep=.8pt}]           (c) at (0.8,0.8)  { $\blacksquare$};

        \node [style={rectangle,inner sep=.8pt}, left=2pt of a,scale=.8]           (x)   {
          $x$};
        \node [style={rectangle,inner sep=.8pt}, right=2pt of c,scale=.8]           (y)   {
          $y$};
        \node [style={rectangle,inner sep=.8pt},]           (t) at (-0.8,0) {};
        \node [style={rectangle,inner sep=.8pt},above=10pt of c]           (t)
              {};
        \node [style={rectangle,inner sep=.8pt},below=10pt of a]           (t)  {};

\begin{scope}
   \draw[->] (a) -- node[above, midway,scale=0.7] {$r$}  (b);
   \draw[->] (b) -- node[right, midway,scale=0.7] {$d$}  (c);
\end{scope}
    \end{tikzpicture}}
 & \shortstack{\codeIn{x = \&y;} } \\
\hline

{\small\textsc{Node-with-as}}&\adjustbox{valign=c}{\begin{tikzpicture}
%      \fill[white] rectangle (\cellw,\cellh) ;
        \node [style={circle, inner sep=1pt,scale=0.8}]           (c) at (0,0)  { };
        \node [style={circle,draw, inner sep=1pt,scale=0.85}]           (a) at (0.5,0)  { $x$ };
        \node [style={circle, inner sep=1pt,scale=0.8}]           (c) at (1,0)  {  };

%   \draw[->] (a) -- node[above, midway,scale=0.8] {$]_1$}  (c);

    \end{tikzpicture}}    &

\adjustbox{valign=c}{\begin{tikzpicture}
%      \fill[white] rectangle (\cellw,\cellh) ;
\definecolor{red}{gray}{0.7}
        \node [style={rectangle,inner sep=.8pt}]           (a) at (0,0)  { $\blacksquare$};
        \node [style={rectangle,inner sep=.8pt}]           (b) at (0.8,0)  {
          $\circ$};
        \node [style={rectangle,inner sep=.8pt}]           (b1) at (0.8,-.8)  { $\circ$};

        \node [style={rectangle,inner sep=.8pt}]           (d) at (1.6,-0.8)  {
          $\circ$};
        \node [style={rectangle,inner sep=.8pt}]           (d1) at (1.6,0)  { $\circ$};
        \node [style={circle,draw, inner sep=2pt, fill=red}]           (e) at (2.4,0)  { };

        \node [style={rectangle,inner sep=.8pt}, below=2pt of a,scale=.8]           (x)   {
          $x$};
        \node [style={rectangle,inner sep=.8pt}, below=2pt of e,scale=.8]           (y)   {
          $\&x^\prime$};
        \node [style={rectangle,inner sep=.8pt},above=10pt of a]           (t) {};
        \node [style={rectangle,inner sep=.8pt},below=10pt of b1]           (t)  {};

\begin{scope}
   \draw[->] (a) -- node[above, midway,scale=0.7] {$\mathit{as}$}  (b);
   \draw[->] (b) -- node[left, midway,scale=0.7] {$\overline{d}$}  (b1);

   \draw[->] (b1) -- node[above, midway,scale=0.7] {$r$}  (d); 
   \draw[->] (d) -- node[right, midway,scale=0.7] {$d$}  (d1); 
   \draw[->] (d1) -- node[above, midway,scale=0.7] {$r$}  (e);
\end{scope}
    \end{tikzpicture}}
& \shortstack{\codeIn{x = *t$_i$;} \\ \codeIn{t$_i$ = \&t$_{i+1}$;}
   \\ \codeIn{t$_{i+1}$ = \&x$^\prime$;}} \\

\hline

{\small\textsc{Node-with-s}}&\adjustbox{valign=c}{\begin{tikzpicture}
%      \fill[white] rectangle (\cellw,\cellh) ;
        \node [style={circle, inner sep=1pt,scale=0.8}]           (c) at (0,0)  { };
        \node [style={circle,draw, inner sep=1pt,scale=0.85}]           (a) at (0.5,0)  { $x$ };
        \node [style={circle, inner sep=1pt,scale=0.8}]           (c) at (1,0)  {  };

%   \draw[->] (a) -- node[above, midway,scale=0.8] {$]_1$}  (c);

    \end{tikzpicture}}    &

\adjustbox{valign=c}{\begin{tikzpicture}
%      \fill[white] rectangle (\cellw,\cellh) ;
\definecolor{red}{gray}{0.7}
        \node [style={rectangle,inner sep=.8pt}]           (a) at (0,0)  { $\blacksquare$};
        \node [style={rectangle,inner sep=.8pt}]           (b) at (0.8,0)  {
          $\circ$};
        \node [style={circle,draw, inner sep=2pt, fill=red}]           (e) at (1.6,0)  { };

        \node [style={rectangle,inner sep=.8pt}, below=2pt of a,scale=.8]           (x)   {
          $x$};
        \node [style={rectangle,inner sep=.8pt}, below=2pt of e,scale=.8]           (y)   {
          $\&x^\prime$};
        \node [style={rectangle,inner sep=.8pt},above=10pt of a]           (t) {};
        \node [style={rectangle,inner sep=.8pt},below=10pt of a]           (t)  {};

\begin{scope}
   \draw[->] (a) -- node[above, midway,scale=0.7] {$s$}  (b);
   \draw[->] (b) -- node[above, midway,scale=0.7] {$r$}  (e);
\end{scope}
    \end{tikzpicture}}
& \shortstack{\codeIn{x = t$_i$;} \\  \codeIn{t$_{i}$ = \&x$^\prime$;}} \\

\hline 
{\small\textsc{Node-with-path}}&\adjustbox{valign=c}{\begin{tikzpicture}
%      \fill[white] rectangle (\cellw,\cellh) ;
        \node [style={circle, inner sep=1pt,scale=0.8}]           (c) at (0,0)  { };
        \node [style={circle,draw, inner sep=1pt,scale=0.85}]           (a) at (0.5,0)  { $x$ };
        \node [style={circle, inner sep=1pt,scale=0.8}]           (c) at (1,0)  {  };

%   \draw[->] (a) -- node[above, midway,scale=0.8] {$]_1$}  (c);

    \end{tikzpicture}}    &

\adjustbox{valign=c}{\begin{tikzpicture}
%      \fill[white] rectangle (\cellw,\cellh) ;
\definecolor{red}{gray}{0.7}
        \node [style={rectangle,inner sep=.8pt}]           (a) at (0,0)  { $\blacksquare$};
        \node [style={circle,draw, inner sep=2pt, fill=red}]           (e) at (1.6,0)  { };

        \node [style={rectangle,inner sep=.8pt}, below=2pt of a,scale=.8]           (x)   {
          $x$};
        \node [style={rectangle,inner sep=.8pt}, below=2pt of e,scale=.8]           (y)   {
          $\&x^\prime$};
        \node [style={rectangle,inner sep=.8pt},above=10pt of a]           (t) {};
        \node [style={rectangle,inner sep=.8pt},below=10pt of a]           (t)  {};

\begin{scope}
   \draw[->] (a) -- node[above, midway,scale=0.7] {$r$}  (e);

\end{scope}
    \end{tikzpicture}}
& \shortstack{\codeIn{x = \&x$^\prime$;}} \\

\hline 

\end{tabular}
\caption{Edge construction in PEG $G_P$ for extended cases.} \label{fig:pegext}
\end{center}
\end{figure}

\begin{figure}[t]
\small
\numberwithin{equation}{section} % This line resets equation numbering when
                                % starting a new section.

\renewcommand{\theequation}{\arabic{equation}-b'} 
%grammar simply
\begin{minipage}[b]{.3\linewidth}
\setcounter{equation}{0}
\begin{center}
\begin{align} 
\mathit{Pt}^\prime \rightarrow &~S  \label{rule:cc01}    \\
S \rightarrow &~\mathit{as}~\overline{d}~S~r~d \label{rule:cc11}\\
\label{rule:cc21}\\
\overline{S} \rightarrow &~\overline{d}~\overline{r}~\overline{S}~d~\overline{\mathit{as}} \label{rule:cc31}\\
\label{rule:cc41}\\
S \rightarrow &~S~S~ \label{rule:cc51}\\
\overline{S} \rightarrow &~\overline{S}~\overline{S}~ \label{rule:cc61}\\
S \rightarrow &~\epsilon \label{rule:cc71}\\
\overline{S} \rightarrow &~\epsilon \label{rule:cc81}
\end{align}
\end{center}
\subcaption{Rules for language $\mathit{Pt}^\prime$.}\label{fig:gpa2e}
\end{minipage}%                  
\hfill
\renewcommand{\theequation}{\arabic{equation}-c'} 
\begin{minipage}[b]{.3\linewidth}
\setcounter{equation}{0}
\begin{center}
\begin{align} 
\mathit{Pt}_c^\prime \rightarrow &~\tensor[_\blacksquare]{S}{_\blacksquare}~  \label{rule:cc0e}    \\
\tensor[_\blacksquare]{S}{_\blacksquare} \rightarrow & 
~{_\blacksquare
  \mathit{as}_\square\overline{d}_{\blacksquare}}S_{\blacksquare}r_{\square}d_{\blacksquare} \label{rule:cc1e}\\
~\label{rule:cc2e}\\
~ \label{rule:cc3e}\\
~\label{rule:cc4e}\\
\tensor[_\blacksquare]{S}{_\blacksquare} \rightarrow &~_{\blacksquare}S_\blacksquare S_{\blacksquare}\label{rule:cc5e}\\
~ \label{rule:cc6e}\\
\tensor[_\blacksquare]{S}{_\blacksquare} \rightarrow &~\epsilon \label{rule:cc7e}\\
~ \label{rule:cc8e}
\end{align}
\end{center}
\vspace{.2cm}
\subcaption{Rules for language $\mathit{Pt}_c^\prime$.}\label{fig:gpa3e}
\end{minipage}%
\hfill
\begin{minipage}[b]{.3\linewidth}
\begin{center}
$\begin{aligned}
\\ 
D_1^\prime \rightarrow &~S \\
\tensor[_\blacksquare]{S}{_\blacksquare} \rightarrow &~\underbracket{_\blacksquare
  \mathit{as}_\square\overline{d}_{\blacksquare}}_{[_1}~~\tensor[_\blacksquare]{S}{_\blacksquare}~~\underbracket{_{\blacksquare}r_\square d_{\blacksquare}}_{]_1}\\
\tensor[_\blacksquare]{S}{_\blacksquare} \rightarrow &~_{\blacksquare}S_\blacksquare S_{\blacksquare}~\mid~\epsilon\\
\end{aligned}$
\end{center}
\vspace{2cm}
\subcaption{Rules for language $D^\prime_1$.}\label{fig:gpa4e}
\end{minipage}

\caption{Grammars used for extended cases. The $\mathit{Pt}^\prime$ grammar in
  Figure~\ref{fig:gpa2e} is obtained from the $\mathit{Pt}$ grammar in Figure~\ref{fig:gpa1} by
  removing two rules related to $\mathit{sa}$ and $\overline{sa}$.  \label{fig:gpaee}}

\end{figure}

\begin{itemize}
\item {\bf Case 1 with constraints \textsc{Sa}, \textsc{As}, and \textsc{S}.}
  Section~\ref{sec1:pt2} has established the construction. It gives a reduction based on \emph{all} three constraints, which
   is the main contribution of this paper. The other five
   cases are indeed extended from this case. We briefly
summarize the reduction and the correctness to facilitate  the discussions on
  other cases. 
\begin{itemize}
\item \emph{Reduction.}  Figure~\ref{fig:g2peg} gives the construction. For each edge in $G$,
   we use \textsc{Edge-with-sa} to construct edges in $G_P$ with square
   ($\square$ or $\blacksquare$) nodes. For each node in $G$, we construct paths based on
   \textsc{Node-with-as-s} using circle (\tikz\draw[black,fill=white] (0,0)
   circle (.5ex); or \tikz\draw[black,fill=gray,opacity=0.8] (0,0) circle
   (.5ex);) nodes.
\item \emph{Correctness.} Based on Section~\ref{sec1:pt2}, the key steps to prove the
  correctness (Theorem~\ref{thm:corr3}) include establishing the isolation lemma (Lemma~\ref{lem:iso}) and the non-transitivity
  lemma (Lemma~\ref{lem:nontran}). From Figure~\ref{figure:reduction_overview},
  we can see that the isolation lemma
  guarantees that $\mathit{Pt}$-reachability is equivalent to
  $\mathit{Pt}^\prime$-reachability among only black square $\blacksquare$ nodes.
The non-transitivity lemma ensures that the mapping between 
$\mathit{Pt}^\prime$-paths and $D_1^\prime$-paths is bijective.
Finally, due to the grammar construction, $D_1^\prime$-reachability in $G_P$ is
always isomorphic to $D_1$-reachability in $G$.
\end{itemize}
\item {\bf Case 2 with constraints \textsc{Sa} and \textsc{S}.}  Case 1 includes
  all three  constraints. In Case 2, we need to construct a program without any \textsc{As}
  constraint. Based on Figure~\ref{fig:g2peg}, we can see that the \textsc{As} constraint only appears at the
  paths for node construction (\textsc{Node-with-as-s}). Therefore, in Case 2,
  we only need to remove the $\mathit{as}$-related nodes/edges when constructing nodes in $G_P$.
\begin{itemize}
\item  \emph{Reduction.} For each  edge in $G$, we adopt 
  \textsc{Edge-with-sa} shown in Figure~\ref{fig:g2peg}. For each node in $G$, we
  construct a path with \textsc{Node-with-s} shown in Figure~\ref{fig:pegext}. Comparing with
  \textsc{Node-with-as-a} in Case 1, \textsc{Node-with-s} removes the edges representing
  \textsc{As} and the corresponding reference/dereference.
\item  \emph{Correctness.}  According to  Case 1, it suffices to show that
  Lemmas~\ref{lem:iso} and~\ref{lem:nontran} hold for Case 2.
The nodes used in \textsc{Node-with-s} construction are circle (\tikz\draw[black,fill=white] (0,0)
   circle (.5ex); or \tikz\draw[black,fill=gray,opacity=0.8] (0,0) circle
   (.5ex);) nodes.
 Lemma~\ref{lem:reversing} holds for all cycle nodes. The \textsc{Node-with-s} construction only removes
  edges in \textsc{Node-with-as-a}. Therefore, it holds
  for the \textsc{Node-with-s} construction as well. 
  This concludes that Lemma~\ref{lem:iso} holds. Note that the \textsc{Edge-with-sa} construction
  is identical to the edge construction in Case 1 (Lemma~\ref{lem:nontran}). 
\end{itemize}
\item {\bf Case 3 with constraints \textsc{Sa} and \textsc{As}.} Similar to Case
  2, we only need to avoid using the \textsc{S} constraint when constructing $G_P$.
\begin{itemize}
\item  \emph{Reduction.} For each edge in $G$, we use
  \textsc{Edge-with-sa} in Figure~\ref{fig:g2peg} as the previous two cases. For each node in
  $G$, we construct a path with \textsc{Node-with-as} in Figure~\ref{fig:pegext}. Comparing with
  \textsc{Node-with-as-s} in Case 1, we remove the edges representing
  \textsc{S} in Case 3.
\item  \emph{Correctness.} Both Cases 2 and 3 avoid using one constraint
  compared with the node construction in Case 1. Following a
  similar argument as in Case 2, Lemmas~\ref{lem:iso} and~\ref{lem:nontran} hold.
\end{itemize}
\item {\bf Case 4 with constraints \textsc{As} and \textsc{S}.}  In Case 1, we
use $\mathit{sa}$-edges in $G_P$ to model ``$[_1$''- and ``$]_1$''-edges in
$G$. However, Case
4 does not contain any \textsc{Sa} constraints. We need to use
$\mathit{as}$-edges to encode ``$[_1$''- and ``$]_1$''-edges in $G$. The principal idea
is based on Case 1.
Indeed, due to the $\mathit{Pt}$ grammar, the reduction based on
$\mathit{as}$-edges is simpler than Case 1 which is based on $\mathit{sa}$-edges.
Let us revisit 
  the $\mathit{Pt}$ grammar in Figure~\ref{fig:gpa1}. The reduction in
  Section~\ref{subsec:a} uses rules~\ref{rule:bb2} and~\ref{rule:bb4} to encode the
  $D_1$ rule $S \rightarrow [_1~~S~~]_1$ shown in Figure~\ref{fig:gpa4}. Without the
  $\mathit{sa}$ terminal, we can still use rule~\ref{rule:bb1}. It is
  interesting to note that,
  unlike rule~\ref{rule:bb2}, the $S$ nonterminal in rule~\ref{rule:bb1} only
  depends on $S$ itself. Rule~\ref{rule:bb1} is immediately isomorphic to  $D_1$ rule $S
  \rightarrow [_1~~S~~]_1$. Consequently, the reduction based on
  $\mathit{as}$-edges is significantly simpler because rule~\ref{rule:bb1} does
  not contain the $\overline{S}$ nonterminal.
Figure~\ref{fig:gpaee} gives all grammars used in the new reduction for Case 4.
Figure~\ref{fig:gpa4e} gives the grammar that is isomorphic to the $D_1$ grammar. 
\begin{itemize}
\item \emph{Reduction.} For each labeled edge in $G$, we use
  \textsc{Edge-with-as} in Figure~\ref{fig:pegext} to construct edges in $G_P$. For each node in $G$, we
   construct a path based on \textsc{Node-with-s} in Figure~\ref{fig:pegext}. 
\item \emph{Correctness.} It is clear
from Figure~\ref{fig:gpa2e} that rules~\ref{rule:cc31} and ~\ref{rule:cc61} are
unreachable from the start symbol, which can be safely
eliminated.    Due to the \textsc{Edge-with-as} and grammar $D_1^\prime$ in
Figure~\ref{fig:gpa4e}, there is no ``$\tensor[_\square]{S}{_\square}$''
symbol, \myie, no white $\square$ nodes $S$-reachable in the new reduction.
Therefore, the
non-transitive Lemma~\ref{lem:nontran} holds immediately. The
\textsc{Node-with-s} construction has been used in Case 2 and  Lemma~\ref{lem:iso} holds.
\end{itemize}
\item {\bf Case 5 with constraint \textsc{Sa}} This case avoids using
  constraint \textsc{As} in  Case 3.
\begin{itemize}
\item  \emph{Reduction.} For each edge in $G$, we use
  \textsc{Edge-with-sa} in Figure~\ref{fig:g2peg} as Cases 1-3. For each node in
  $G$, we construct  a path with \textsc{Node-with-path} in Figure~\ref{fig:pegext}. Comparing with
  \textsc{Node-with-s} in Case 2, we remove the edge representing
  \textsc{S} in Case 5.
\item  \emph{Correctness.} The \textsc{Edge-with-sa} construction
  is identical to the edge construction in Case 1
  (\myie, Lemma~\ref{lem:nontran} holds). There is only one edge in the
  \textsc{Node-with-path} construction. Based on $\textsc{Follow}(r)$ and
  $\textsc{Follow}(\overline{r})$ in Table~\ref{tab:followset}, Lemma~\ref{lem:iso}
 holds.
\end{itemize}
\item {\bf Case 6 with constraint \textsc{As}} This case avoids using
  constraint \textsc{Sa} in  Case 3.
\begin{itemize}
\item  \emph{Reduction.} For each edge in $G$, we use
  \textsc{Edge-with-as} in Figure~\ref{fig:pegext} as Case 4. For each node in
  $G$, we construct  a path with \textsc{Node-with-path} in
  Figure~\ref{fig:pegext}, which is identical to  Case 5.
\item  \emph{Correctness.}  The \textsc{Node-with-path}
  construction is identical to the node construction in Case 5 (\myie Lemma~\ref{lem:iso}
 holds). The \textsc{Edge-with-as} construction
  is identical to the edge construction in Case 4
  (\myie, Lemma~\ref{lem:nontran} holds).
\end{itemize}
\end{itemize}

Putting everything together, we prove Corollary~\ref{cor:uni} in Section~\ref{sec:intro}.

\subsection{Hardness of Demand-Driven Analysis}\label{sec:hdd}

Section~\ref{sec1:pt2} gives a reduction from $D_1$-reachability to points-to analysis. 
Corollary~\ref{cor:dd} further states that $D_1$-reachability can be reduced to points-to
analysis under arbitrary  combinations of statement types.
Therefore, to
establish the BMM-hardness of demand-driven points-to analysis, it suffices to
establish the BMM-hardness of  single-source-single-target
$D_1$-reachability ($s$-$t$ $D_1$-reachability).
Due to the subcubic fine-grained
equivalence of BMM and Triangle Detection~\cite[Thm 1.3]{WilliamsW18subcubic}, the
BMM conjecture is equivalent to: 
\begin{conjecture}
Any combinatorial algorithm for Triangle Detection
in 
graphs with $n$ nodes requires $n^{3-o(1)}$ time in the Word-RAM model of computation with $O(\log n)$ bit words.
\end{conjecture}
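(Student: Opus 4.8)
The plan is to read the final statement as asserting the equivalence of two conjectures---the BMM conjecture (Conjecture~\ref{conj:bmm}) and the Triangle Detection conjecture just stated---and to justify it by exhibiting a two-way subcubic reduction between Boolean Matrix Multiplication and Triangle Detection. Concretely, I would show that a truly subcubic combinatorial algorithm for one problem yields a truly subcubic combinatorial algorithm for the other; taking contrapositives then gives that the two conjectures stand or fall together. Throughout, I would verify that every reduction is itself \emph{combinatorial} (only graph traversals, set unions, and table lookups), so that it respects the combinatorial restriction under which both conjectures are phrased and does not smuggle in fast matrix multiplication.

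First I would dispatch the easy direction, reducing Triangle Detection to BMM. Given a graph $G$ on $n$ nodes with Boolean adjacency matrix $M$, I would form the Boolean product $M^2$, whose entry $(i,j)$ is $1$ exactly when $i$ and $j$ share a common neighbor. Since the diagonal of $M$ is zero, $G$ contains a triangle if and only if some pair $(i,j)$ satisfies both $M_{ij}=1$ and $(M^2)_{ij}=1$, a condition checkable in $O(n^2)$ time. Hence a combinatorial BMM algorithm running in $O(n^{3-\delta})$ time gives a combinatorial triangle detector in $O(n^{3-\delta})$ time, which is the contrapositive of ``Triangle conjecture $\Rightarrow$ BMM conjecture.''

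The main obstacle is the reverse reduction: turning a fast Triangle Detection oracle into a fast algorithm for the \emph{full} Boolean product. The tension is that detection returns a single bit---whether \emph{any} triangle exists---whereas BMM must decide, for all $n^2$ pairs $(i,j)$, whether a common witness $k$ exists, and in the dense case there can be $\Theta(n^2)$ such pairs, so naive per-pair or per-triangle queries blow the budget. I would resolve this in two stages. The first is a self-reduction boosting detection to \emph{finding}: given a triangle-containing graph, repeatedly bisect the vertex set and test each part, extracting an actual triangle with only polylogarithmic overhead. The second, and the delicate part, is a block decomposition of the tripartite graph encoding $A$ and $B$ (parts $I,K,J$ with $I$--$K$ edges for $A$ and $K$--$J$ edges for $B$): I would partition the index sets into blocks, use detection to prune block-triples that contribute no witnessed entry, and recurse only into the surviving blocks, so that the total query cost is governed by a balancing of block size against the number of surviving blocks. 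Getting this balance to yield $O(n^{3-\delta'})$ for some $\delta'>0$---uniformly over sparse and dense outputs---is precisely the content of the subcubic fine-grained equivalence of \cite[Thm 1.3]{WilliamsW18subcubic}, whose reductions are combinatorial and preserve the $O(\log n)$-word RAM model. I would therefore invoke that theorem for the exact bookkeeping rather than re-deriving the recursion, and conclude the equivalence of the two conjectures by combining both contrapositives.
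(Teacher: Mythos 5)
Your proposal is correct and takes essentially the same route as the paper: the paper does not prove this conjecture's equivalence with BMM itself, but justifies it solely by citing the subcubic fine-grained equivalence of BMM and Triangle Detection in \cite[Thm 1.3]{WilliamsW18subcubic}, which is exactly the theorem you invoke for the delicate direction. Your additional explicit treatment of the easy direction (squaring the adjacency matrix and checking $M_{ij}=(M^2)_{ij}=1$) and your sketch of the block-decomposition machinery are accurate but go beyond what the paper records.
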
We pick the problem of Triangle Detection because the reduction is more
intuitive. Similar to BMM, Triangle Detection has widely been used in fine-grained
complexity proofs~\cite{WilliamsW18subcubic,AbboudW14popular,Patrascu10towards}.
This section gives a reduction from Triangle Detection to $s$-$t$ $D_1$-reachability.

\begin{problem}[framed]{\large Reduction: From Triangle Detection to $s$-$t$ $D_1$-Reachability}
  Input: & An (un)directed graph $G$ with $n$ nodes and $m$ edges; \\
  Output: & An edge-labeled digraph $G^\prime= (V^\prime, E^\prime)$, where $|V^\prime| = 4n+6m+2$ and $|E^\prime|
  = 2n+12m$. 
\end{problem}
\paragraph{Intuition.} 

We introduce two unique nodes $s$ and $t$ in the output graph $G^\prime$. The
input graph $G$ contains a triangle iff node $t$ is $D_1$-reachable from $s$.
We arrange all nodes based on a particular ordering and split each node  $u \in G$ into four copies
$u_0$, $u_1$,
$u_2$ and $u_3$ that span four layers in $G^\prime$. The four layers  with three
edges in $G^\prime$ accompany  a triangle  in $G$.
The graph structure is informally known as a
tripartite
graph~\cite{WilliamsW18subcubic,AbboudW14popular,Patrascu10towards}. In
particular, the graph contains two parts:
%
%the tripartite is also mentioned
%the demand-driven result is inspired by ~\cite[Fig. 1]{GutenbergWW20new}
\begin{itemize}
\item \emph{Triangle finding part:} For each $u\rightarrow v$ in $G$, we connect $u$ and $v$ in two adjacent layers
in $G^\prime$, \myie,
 $u_j \xrightarrow{D_1} v_{j+1}$ and $v_{j} \xrightarrow{D_1} u_{j+1}$ for all
$j\in [0,3)$. Therefore, we have a bijective map between 
$u\rightarrow v\rightarrow w\rightarrow u$ in $G$ and
 $u_0\xrightarrow{D_1} v_1\xrightarrow{D_1}
w_2\xrightarrow{D_1} u_3$ in $G^\prime$.
\item \emph{Existential testing part:} To test the existence  of a triangle in
  $G$, we connect all nodes in layer 0  via ``$[_1$-edges'' and all nodes in layer 3
  via ``$]_1$-edges''. Since all nodes in $G$ are arranged based on a random
  ordering, let $x$ be the first node that appears in the particular  ordering.  We
  construct $s\xrightarrow{[_1}x_0$ and $x_3\xrightarrow{]_1}t$. Therefore, if graph $G$ has a triangle that contains
  node $u$, there exists a corresponding $D_1$-path $s\xrightarrow{[_1}\ldots\xrightarrow{[_1}
  u_0\xrightarrow{D_1} u_3\xrightarrow{]_1} \ldots \xrightarrow{]_1} t$ in $G^\prime$, and vice versa.
\end{itemize}

\begin{figure}[!t]
  \centering
  \begin{subfigure}[b]{0.25\textwidth}
\small
\begin{tikzpicture}[every path/.style={>=latex}]

%\draw  node[circle, draw, inner sep=0pt, minimum size=6pt]{}
% \node [style={ inner sep=1pt, minimum size=0.6cm, scale=0.7}]
%  (ad) at (-1,0)  {Layer 1: };
%  \node [style={inner sep=1pt, minimum size=0.6cm, scale=0.7}]
%  (ad1) at (-1,-1.5)  {Layer 2: };
%  \node [style={inner sep=1pt, minimum size=0.6cm, scale=0.7}]
%  (ad2) at (-1,-3)  {Layer 3: };

  \node [style={draw,circle, inner sep=1pt, minimum size=0.6cm, scale=0.8}]
  (a) at (0,0)  {$w$ };
  \node [style={draw,circle, inner sep=1pt, minimum size=0.6cm, scale=0.8}]
  (a1) at (1,0)  {$x$ };
  \node [style={draw,circle, inner sep=1pt, minimum size=0.6cm, scale=0.8}]
  (a2) at (1.5,1)  {$y$ };
  \node [style={draw,circle, inner sep=1pt, minimum size=0.6cm, scale=0.8}]
  (a3) at (2,0)  {$z$ };

  \draw[-] (a) edge  (a1);
  \draw[-] (a1) edge  (a2);
  \draw[-] (a2) edge  (a3);
  \draw[-] (a1) edge  (a3);

\end{tikzpicture}

    \caption{A triangle detection problem instance.\label{fig:demand1}}
  \end{subfigure}%
  \begin{subfigure}[b]{0.55\textwidth}
\hspace{1cm}
\begin{tikzpicture}[>=stealth]
\newcommand{\myw}{15pt}
\newcommand{\myis}{{1pt}}
\newcommand{\myscale}{{0.7}}
%\draw  node[circle, draw, inner sep=0pt, minimum size=6pt]{}
% \node [style={ inner sep=1pt, minimum size=0.6cm, scale=0.7}]
%  (ad) at (-1,0)  {Layer 1: };
%  \node [style={inner sep=1pt, minimum size=0.6cm, scale=0.7}]
%  (ad1) at (-1,-1.5)  {Layer 2: };
%  \node [style={inner sep=1pt, minimum size=0.6cm, scale=0.7}]
%  (ad2) at (-1,-3)  {Layer 3: };

  \node [style={draw,circle, inner sep=\myis,  scale=\myscale}]
  (a) at (0,0)  {$w_0$ };
  \node [style={draw,circle, inner sep=\myis,  scale=\myscale}]
  (a1) at (1.5,0)  {$w_1$ };
  \node [style={draw,circle, inner sep=\myis,  scale=\myscale}]
  (a2) at (3,0)  {$w_2$ };
  \node [style={draw,circle, inner sep=\myis,  scale=\myscale}]
  (a3) at (4.5,0)  {$w_3$ };
 
  \node [style={draw,circle, inner sep=\myis,  scale=\myscale},below = \myw of a]
  (b)   {$x_0$ };
  \node [style={draw,circle, inner sep=\myis,  scale=\myscale}, below = \myw of a1]
  (b1)   {$x_1$ };
  \node [style={draw,circle, inner sep=\myis,  scale=\myscale},below = \myw of a2]
  (b2)  {$x_2$ };
  \node [style={draw,circle, inner sep=\myis,  scale=\myscale},below = \myw of a3]
  (b3)   {$x_3$ };

  \node [style={draw,circle, inner sep=\myis,  scale=\myscale},below = \myw of b]
  (c)   {$y_0$ };
  \node [style={draw,circle, inner sep=\myis,  scale=\myscale},below = \myw of b1]
  (c1)   {$y_1$ };
  \node [style={draw,circle, inner sep=\myis,  scale=\myscale},below = \myw of b2]
  (c2)   {$y_2$ };
  \node [style={draw,circle, inner sep=\myis,  scale=\myscale},below = \myw of b3]
  (c3)   {$y_3$ };

  \node [style={draw,circle, inner sep=\myis,  scale=\myscale},below = \myw of c]
  (d)  {$z_0$ };
  \node [style={draw,circle, inner sep=\myis,  scale=\myscale},below = \myw of c1]
  (d1)   {$z_1$ };
  \node [style={draw,circle, inner sep=\myis, scale=\myscale},below = \myw of c2]
  (d2)   {$z_2$ };
  \node [style={draw,circle, inner sep=\myis, scale=\myscale},below = \myw of c3]
  (d3)   {$z_3$ };

  \node [style={draw,circle, inner sep=1.5pt,  scale=\myscale}]
  (s) at (-1.5,0)  {$s$ };
  \node [style={draw,circle, inner sep=1.5pt,  scale=\myscale}]
  (t) at (6,0)  {$t$ };

  \draw[->] (s) edge node[above, scale=0.8] {$[_1$}(a);

  \draw[->] (a3) edge node[above, scale=0.8] {$]_1$}(t);

  \draw[->] (a) edge node[left, scale=0.6] {$[_1$}(b);

  \draw[->] (b) edge node[left, scale=0.6] {$[_1$}(c);
  \draw[->] (c) edge node[left, scale=0.6] {$[_1$}(d);

  \draw[->] (b3) edge node[right, scale=0.6] {$]_1$}(a3);

  \draw[->] (c3) edge node[right, scale=0.6] {$]_1$}(b3);
  \draw[->] (d3) edge node[right, scale=0.6] {$]_1$}(c3);

%graph edge
  \draw[densely dashed,->] (a) edge (b1);
  \draw[densely dashed,->] (b) edge (a1);
  \draw[densely dashed,->] (a1) edge (b2);
  \draw[densely dashed,->] (b1) edge (a2);
  \draw[densely dashed,->] (a2) edge (b3);
  \draw[densely dashed,->] (b2) edge (a3);

  \draw[densely dashed,->] (b) edge (c1);
  \draw[densely dashed,->] (c) edge (b1);
  \draw[densely dashed,->] (b1) edge (c2);
  \draw[densely dashed,->] (c1) edge (b2);
  \draw[densely dashed,->] (b2) edge (c3);
  \draw[densely dashed,->] (c2) edge (b3);

  \draw[densely dashed,->] (c) edge (d1);
  \draw[densely dashed,->] (d) edge (c1);
  \draw[densely dashed,->] (c1) edge (d2);
  \draw[densely dashed,->] (d1) edge (c2);
  \draw[densely dashed,->] (c2) edge (d3);
  \draw[densely dashed,->] (d2) edge (c3);

  \draw[densely dashed,->] (b) edge (d1);
  \draw[densely dashed,->] (d) edge (b1);
  \draw[densely dashed,->] (b1) edge (d2);
  \draw[densely dashed,->] (d1) edge (b2);
  \draw[densely dashed,->] (b2) edge (d3);
  \draw[densely dashed,->] (d2) edge (b3);

\end{tikzpicture}

    \caption{A $s$-$t$ $D_1$-reachability problem instance.\label{fig:demand2}}
  \end{subfigure}

\caption{Reduction from Triangle Detection to $s$-$t$ $D_1$-reachability. For
  brevity, we use the dashed edges $u \dashrightarrow v$ to denote
  $u\xrightarrow{[_1}t_i\xrightarrow{]_1}v$ for some auxiliary nodes $t_i$.}\label{fig:demand}
\end{figure}

\begin{algorithm}[!t]
\footnotesize
\SetKwData{Null}{Null}
\SetKwData{oldIn}{oldIn}
\SetKwData{Up}{up}
\SetKwData{current}{current$_u$}
\SetKwData{previous}{previous$_u$}
\SetKwData{lastc}{last$_u$}
\SetKwData{lastp}{last$_p$}
\SetKwData{outaf}{\textsc{Out}$_{\alpha}$}
\SetKwData{outd}{out$_d$}
\SetKwData{outdb}{out$_{\bar{d}}$}
\SetKwData{rowu}{Row$_u$}

%\SetKwFunction{IN}{IN}
\SetKwFunction{Add}{Add}
\SetKwFunction{Init}{Init}
\SetKwFunction{PF}{ParititionFunc}
\SetKwInOut{Input}{Input}
\SetKwInOut{Output}{Output}
\DontPrintSemicolon

\Input{An undirected graph $G=(V, E)$;  }
\Output{An edge-labeled graph $G^\prime$.}

\BlankLine

Introduce four nodes $v_0$, $v_1$, $v_2$, and $v_3$ to $G^\prime$ for all $v \in
V$ \nllabel{algo:ext1}\\

Introduce two unique source and sink nodes $s$ and $t$ to $G^\prime$ \nllabel{algo:ext2}\\
$i \gets 0$ \\

$u \gets \textsc{Select-Node} (V)$ and $ V \leftarrow V
\setminus \{u\}$ \nllabel{algo:ext3}\\
Insert edges $s\xrightarrow{[_1}u_0$ and $u_3\xrightarrow{]_1}t$ to $G^\prime$ \nllabel{algo:line:addeb}\\

\While{$V \neq \emptyset$}{
$\mathit{last}_0 \leftarrow u_0$ and $\mathit{last}_3 \leftarrow u_3$ \\
$u \gets \textsc{Select-Node}(V)$ and $ V \leftarrow V
\setminus \{u\}$ \\
Insert edges $\mathit{last}_0\xrightarrow{[_1}u_0$ and
  $u_3\xrightarrow{]_1}\mathit{last}_3$ to $G^\prime$  \nllabel{algo:line:addee}\\
}

\ForEach{edge $ (u, v)\in E$\nllabel{algo:ext22}}{

  \For{$j \gets 0$ to 2} {
    Introduce two auxiliary nodes $t_i, t^\prime_i$ to $G^\prime$\\
    Insert edges $u_j\xrightarrow{[_1}t_i$ and $t_i\xrightarrow{]_1}v_{j+1}$ to
    $G^\prime$ \nllabel{algo:line:adde}\\
    Insert edges $v_j\xrightarrow{[_1}t^\prime_i$ and
      $t^\prime_i\xrightarrow{]_1}u_{j+1}$ to $G^\prime$ \tcp{{\scriptsize Omit this line if
      the input graph is directed.}} \nllabel{algo:line:adde1}
 $i \gets i+1$\nllabel{algo:ext33}\\
  }

}

\caption{Reduction from Triangle detection to $s$-$t$ $D_1$-reachability.}\label{algo:std1}

\end{algorithm}

\paragraph{Reduction.} Algorithm~\ref{algo:std1} gives the reduction which takes
as input  a
graph $G$ with $n$ nodes and $m$ edges. For each node in $G$, we introduce four nodes (lines~\ref{algo:ext1}-\ref{algo:ext2}) and two edge in
the tripartite
graph $G^\prime$ (lines~\ref{algo:ext3}-\ref{algo:line:addee}). For each edge in
$G$, we introduce six nodes and 12 edges (line~\ref{algo:ext22}-\ref{algo:ext33}).
Therefore, algorithm~\ref{algo:std1} outputs a digraph $G^\prime$ with $4n+6m+2$ nodes
and $2n+12m$ edges.

\paragraph{Correctness.}
It is clear that Algorithm~\ref{algo:std1} is a linear-time
reduction in terms of the input graph size. Specifically, the output tripartite
graph contains $O(m+n)$ nodes and $O(m+n)$ edges. Therefore, it is a subcubic reduction.
To show $s$-$t$ $D_1$-reachability is BMM-hard, it suffices to prove the following
lemma on reduction correctness.

\begin{lemma}\label{lem:bmmstd1}
Algorithm~\ref{algo:std1}  is a linear-time reduction which takes as
input an undirected graph $G$ and outputs a digraph $G^\prime$ with two unique
nodes $s$ and $t$. Graph $G$ has a triangle iff node $t$ is $D_1$-reachable from
$s$ in $G^\prime$.
\end{lemma}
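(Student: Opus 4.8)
The plan is to prove both directions of the iff by exhibiting the rigid ``staircase'' shape that every $s$-$t$ $D_1$-path is forced to take in the output graph $G^\prime$. For the ($\Rightarrow$) direction, suppose $G$ contains a triangle on vertices $u,v,w$ (so $(u,v),(v,w),(w,u)\in E$). Let $u$ occupy position $k$ in the ordering fixed by \textsc{Select-Node}, so that $u=x^{(k)}$ and the first selected node is $x^{(1)}$. I would concatenate three pieces. First, the layer-$0$ chain $s\xrightarrow{[_1}x^{(1)}_0\xrightarrow{[_1}\cdots\xrightarrow{[_1}u_0$, realizing $[_1^{\,k}$. Second, the triangle-finding path $u_0\dashrightarrow v_1\dashrightarrow w_2\dashrightarrow u_3$, where each $\dashrightarrow$ is the two-edge gadget $\cdot\xrightarrow{[_1}t_i\xrightarrow{]_1}\cdot$ inserted on lines~\ref{algo:line:adde}--\ref{algo:line:adde1}; these gadgets exist precisely because the three triangle edges are present, and the piece realizes the balanced string $([_1]_1)^3$. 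Third, the layer-$3$ chain $u_3\xrightarrow{]_1}\cdots\xrightarrow{]_1}x^{(1)}_3\xrightarrow{]_1}t$, realizing $]_1^{\,k}$. The full realized string is $[_1^{\,k}([_1]_1)^3]_1^{\,k}\in D_1$, so $t$ is $D_1$-reachable from $s$.

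The ($\Leftarrow$) direction is the substance of the lemma. First I would pin down the endpoints: $s$ has out-degree $1$ and $t$ has in-degree $1$, so every $s$-$t$ path must begin $s\xrightarrow{[_1}x^{(1)}_0$ and end $x^{(1)}_3\xrightarrow{]_1}t$. Next I would establish a layer-monotonicity invariant. Assigning each ``real'' node $v_\ell$ the layer $\ell\in\{0,1,2,3\}$, and treating every auxiliary node $t_i$ as a pass-through (it has a single in-edge and a single out-edge), I observe that each edge either stays inside layer $0$ (the forward layer-$0$ chain), stays inside layer $3$ (the backward layer-$3$ chain), or increases the real-node layer by exactly one (a $\dashrightarrow$ gadget). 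Crucially, layers $1$ and $2$ carry no intra-layer edges, and no edge ever decreases the layer or re-enters layer $0$. Hence any $s$-$t$ path is forced into the shape: a forward run along the layer-$0$ chain up to some $u_0=x^{(k)}_0$, then exactly one gadget per transition $0\!\to\!1\!\to\!2\!\to\!3$ yielding $u_0\dashrightarrow v_1\dashrightarrow w_2\dashrightarrow a_3$ with $a=x^{(p)}$, then a backward run along the layer-$3$ chain to $x^{(1)}_3$, and finally the edge to $t$.

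It then remains to run the parenthesis count. The forced path realizes $[_1^{\,k}([_1]_1)^3]_1^{\,p}$, and membership in $D_1$ forces the opening and closing counts to agree, i.e.\ $k=p$, whence $a=u$ and the gadget path is $u_0\dashrightarrow v_1\dashrightarrow w_2\dashrightarrow u_3$. By construction these three gadgets exist only if $(u,v),(v,w),(w,u)\in E$; since a simple graph has no self-loops the three vertices are pairwise distinct, yielding a triangle in $G$. The main obstacle I expect is making the ``staircase'' claim fully rigorous rather than picture-driven: I must argue that the non-decreasing-layer invariant leaves a $D_1$-path no room to wander (for instance, to open extra parentheses inside a gadget and close them elsewhere, or to oscillate among chain nodes), so that the realized string is \emph{exactly} $[_1^{\,k}([_1]_1)^3]_1^{\,p}$ with no alternative. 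Once that rigidity is secured, the counting step forcing $k=p$ and the extraction of the triangle are immediate.
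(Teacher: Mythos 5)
Your proposal is correct and follows essentially the same route as the paper: the forward direction concatenates the layer-$0$ chain, the three triangle gadgets, and the layer-$3$ chain to realize $[_1^{\,k}([_1]_1)^3]_1^{\,k}$, and the reverse direction exploits the $4$-layered acyclic structure to force every $s$-$t$ $D_1$-path into the decomposition $p_1$ (unmatched opens), three gadgets, $p_3$ (unmatched closes), with the bracket count forcing the entry node $u_0$ and exit node $u_3$ to coincide. Your rendering is somewhat more rigorous than the paper's---you make the layer-monotonicity invariant explicit rather than appealing informally to acyclicity, and you note that simplicity of $G$ (no self-loops) guarantees the three extracted vertices are pairwise distinct, a point the paper leaves implicit---but the underlying argument is identical.
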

\begin{proof}
A triangle in graph $G$ corresponds to a $D_1$-path from $s$ to $t$ in $G^\prime$ and vice versa.
\begin{itemize}
\item \emph{The ``$\Rightarrow$'' direction.} Without out loss of generality, we
  assume a triangle $x\rightarrow y \rightarrow z$ in $G$. 
Algorithm~\ref{algo:std1} (line~\ref{algo:line:adde}) introduces a  corresponding
path $x_0\xrightarrow{D_1}y_1\xrightarrow{D_1}z_2\xrightarrow{D_1}x_3$ in
$G^\prime$. Therefore, $x_3$ is $D_1$-reachable from $x_0$ in $G^\prime$. From lines
\ref{algo:line:addeb}-\ref{algo:line:addee}, we can see that
Algorithm~\ref{algo:std1} introduces a path $p_1$ from $s$ to \emph{any} node $u_0$ using
``$[_1$'' labels and a path  $p_2$ from the corresponding node $u_3$ to $t$
  using ``$]_1$'' labels. The number of open brackets in $p_1$ equals to the number of
close brackets in $p_2$. Since $x_3$ is $D_1$-reachable from $x_0$ and the brackets in
$p_1$ and $p_2$ are properly matched, node $t$ is $D_1$-reachable from $s$ in $G^\prime$.

\item \emph{The ``$\Leftarrow$'' direction.} 
Similar to the construction in Section~\ref{subsec:bmm}, our constructed graph $G^\prime$ is a
  $4$-layered graph, \myie, it contains four node sets $V_0$, $V_1$,
$V_2$, and $V_3$. 
Algorithm~\ref{algo:std1} (lines \ref{algo:line:adde} and
\ref{algo:line:adde1}) always introduces paths with properly-matched brackets
from nodes in $V_0$ to nodes in $V_3$. 
Suppose there exists a $D_1$-path from  $s$ to $t$ in $G^\prime$. The path
contains three sub-paths: (1) sub-path $p_1$ from $s$ to some node $u_0$; (2)
sub-path $p_2$ from $u_0$ to $v_3$;
and (3) sub-path $p_3$ from $v_3$ to $t$. Moreover, $p_1$ contains
unmatched open brackets and $p_3$ contains unmatched close brackets. 
Due to Algorithm~\ref{algo:std1} (lines
\ref{algo:line:addeb}-\ref{algo:line:addee}), the brackets in $p_1$ and $p_3$
can match iff $u_0 = v_0$ or $v_3 = u_3$. As a result, the $D_1$-path is of the
form $s\xrightarrow{[_1}\ldots\xrightarrow{[_1}u_0\ldots u_3\xrightarrow{]_1}\ldots\xrightarrow{]_1}t$.
Because  graph $G^\prime$ does not contain any cycle, the path joining $u_0$ and
$u_3$ must be of the form $u_0\xrightarrow{D_1}x_1\xrightarrow{D_1}y_2\xrightarrow{D_1}u_3$.
It corresponds to a triangle $u\rightarrow x\rightarrow y\rightarrow u$ in $G$.
\end{itemize}\end{proof}

Section~\ref{sec1:pt2} gives a reduction from \emph{all-pairs}
$D_1$-reachability to \emph{all-pairs} $\mathit{Pt}$-reachability (Theorem~\ref{thm:corr3}).
The \emph{all-pairs} $\mathit{Pt}$-reachability is equivalent to
\emph{exhaustive} points-to analysis (Lemma~\ref{lem:peg2pt}). This section establishes
the BMM-hardness of $s$-$t$ $D_1$-reachability
(Lemma~\ref{lem:bmmstd1}). Putting everything together, we prove that
demand-driven points-to analysis is BMM-hard (Corollary~\ref{cor:dd}). $D_1$-reachability essentially captures the balanced-parenthesis property of
pointer references and dereferences. It is worth noting that Corollary~\ref{cor:dd}
only holds for non-trivial programs with pointer dereferences. For programs
without dereferences, the demand-driven points-to analysis can be
solved via a linear-time depth-first search based on the Sridharan-Fink reduction~\cite{SridharanF09the}.

\paragraph{Demand-Driven Interprocedural Program Analysis.}
Lemma~\ref{lem:bmmstd1} establishes the BMM-hardness of $s$-$t$
$D_1$-reachability. 
It is immediate that Dyck-reachability with $k$ kinds of parenthesis
($s$-$t$ $D_k$-reachability) is BMM-hard. $D_k$-reachability is a fundamental
framework to describe interprocedural program analysis
problems~\cite{Reps97program,RepsHS95precise}. In particular, procedure calls
and returns can be depicted as  ``$[_k$''- and ``$]_k$''-labeled edges in a graph.
Interprocedural static analyses need to ensure that the procedure calls
and returns are properly matched. Based on Lemma~\ref{lem:bmmstd1}, we have:

\begin{theorem}
Demand-driven interprocedural program analysis is BMM-hard.
\end{theorem}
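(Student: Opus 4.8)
The plan is to assemble the theorem from the reductions already in place rather than to build anything new. By Lemma~\ref{lem:bmmstd1}, Triangle Detection reduces in linear time to $s$-$t$ $D_1$-reachability, and since the excerpt records that BMM and Triangle Detection are subcubically equivalent, this already shows that $s$-$t$ $D_1$-reachability is BMM-hard. First I would lift this to $k$ kinds of parentheses: the grammar for $D_1$ is obtained from the grammar for $D_k$ by restricting to a single bracket type, so any $D_1$ instance is literally a $D_k$ instance on the same graph with the same source and target. Hence $s$-$t$ $D_k$-reachability inherits the BMM-hardness for every $k \geq 1$, and in particular the output $G^\prime$ of Algorithm~\ref{algo:std1} serves unchanged as a hard $s$-$t$ $D_k$-reachability instance.

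The substantive step is to exhibit $s$-$t$ $D_k$-reachability as a special case of demand-driven interprocedural analysis. I would invoke the standard CFL-reachability formulation of interprocedural (context-sensitive) dataflow analysis, in which the analysis is posed as $D_k$-reachability on an exploded supergraph: edges labeled $[_k$ and $]_k$ encode a matched procedure call and its corresponding return at call-site $k$, and a dataflow fact is valid precisely when it is reachable along a path whose call/return labels are properly balanced. Given the edge-labeled digraph $G^\prime$ produced by Algorithm~\ref{algo:std1}, I would realize it as such a supergraph---treating each $[_k$/$]_k$ pair as a call/return edge and the interior edges as intraprocedural transfer---so that the $s$-$t$ $D_k$-reachability query ``is $t$ reachable from $s$ along a balanced path?'' becomes exactly the demand query ``does the fact at $s$ reach the program point $t$?'' A positive answer to this single query therefore decides Triangle Detection, and any truly subcubic combinatorial algorithm answering the demand query would yield one for BMM.

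The main obstacle is that ``demand-driven interprocedural program analysis'' is stated at a level of generality that is not pinned to one concrete analysis; the crux is to make the embedding of the previous paragraph precise for the analysis in question and to check that the interprocedural matching discipline it enforces coincides with $D_k$-balancing. Because $D_k$-reachability is the \emph{defining} abstraction for this family of analyses---procedure summaries, context-sensitive slicing, and type-based flow analysis---the embedding is faithful by construction, and the only care required is to confirm that a \emph{single} source/target demand suffices, which is exactly what the $s$-$t$ form of Lemma~\ref{lem:bmmstd1} provides. This is precisely the point to stress: unlike the transitive-closure route, the hardness survives restriction to one demand because it stems from matching balanced delimiters, not from computing all-pairs reachability.
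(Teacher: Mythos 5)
Your proposal is correct and takes essentially the same route as the paper: it uses Lemma~\ref{lem:bmmstd1} (via the subcubic equivalence of BMM and Triangle Detection) to get BMM-hardness of $s$-$t$ $D_1$-reachability, lifts this to $s$-$t$ $D_k$-reachability because $D_1$ is the one-parenthesis special case of $D_k$, and then identifies a single demand query of interprocedural analysis with an $s$-$t$ $D_k$-reachability query under the standard calls/returns-as-parentheses formulation. Your explicit supergraph embedding merely elaborates the step the paper treats as immediate, so the two arguments coincide.
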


\section{Related Work}\label{sec1:rw}
Despite extensive
work~\cite{FahndrichFSA98partial,HardekopfL07the,HeintzeT01aultra}, the
worst-case  complexity of inclusion-based pointer analysis remains
cubic~\cite{HardekopfL07the,SridharanF09the}. In the literature, many pointer
analyses have been formulated as a CFL-reachability
problem~\cite{Reps97program,SridharanGSB05demand,qirun13fast,ZhangXZYS14eff,ZhengR08demand}. Traditional
CFL-reachability algorithm also exhibits a cubic time
complexity~\cite{Reps97program}. The subcubic CFL-reachability algorithm was
proposed by~\citet{Chaudhuri08subcubic},  improving the cubic complexity by a factor of $\text{log
}n$. Asymptotically better algorithms exist for
special cases. For instance, Sridharan and Fink proposed a quadratic algorithm
when the input graph is restricted to be $k$-sparse~\cite{SridharanF09the}.  
\citet{Chaudhuri08subcubic} gave  
 $O(n^3 / \log^2 n)$-time and $O(n^\omega)$-time algorithms for  bounded-stack recursive state machines and hierarchical state
machines, respectively.
For
CFL-reachability-based approach, \citet{qirun13fast} proposed an $O(n+m\text{ log }m)$
algorithm for alias analysis if the underlying CFL is restricted to be a Dyck
language.  \citet{ZhangS17-context} also gave an $O(mn)$ time algorithm for
computing sound solutions for a class of interleaved Dyck-reachability.
The fastest algorithm for solving Dyck-Reachability is due to
\citet{ChatterjeeCP18opt} which runs in time $O(m + n\cdot \alpha(n))$ where
$\alpha(n)$ is the inverse Ackermann function.
When restricted to graphs with bounded treewidth, \citet{ChatterjeeGGIP19faster} gave faster
algorithms for solving demand-driven queries in the presence of graph
changes. When restricted to directed acyclic graphs, \citet{Yannakakis90graph} noted that
CFL-reachability could be solved in $O(n^\omega)$ time.
The work of \citet{McAllester02on} established a framework for determining the time
complexity of static analysis.

The work of \citet{ChatterjeeCP18opt} established a conditional cubic lower bound of
Dyck-Reachability. Their work gave a reduction from CFL parsing
~\cite{Lee02fast} which required a Dyck language of $k$ kinds of
parentheses.   
The class of \npda represents the languages (or problems) definable by a two way
nondeterministic pushdown automaton. \citet{AhoHU68time} showed that any problem
in the class \npda can be solved in cubic time. \citet{Rytter85fast} improved the cubic bound by a logarithmic factor leveraging
the well-known Four Russians' Trick~\cite{arlazarov1970on} to speed up set operations under the random
access machine (RAM) model.
In the work of \citet{HeintzeM97on}, it was shown that the
$D_2$-reachability problem  is \npda-complete --- it is both in \npda and \npda-hard.
Based on the Four Russians' Trick, many
subcubic algorithms have been proposed for solving program analysis problems
such as CFL-reachability~\cite{Chaudhuri08subcubic}, control flow analysis~\cite{Midtgaard-VanHorn:RR09} and pointer analysis~\cite{ZhangXZYS14eff}.
In the literature, the best known algorithm for solving the \npda-complete problems is  due to
\citet{Rytter85fast} which exhibits an  $O(n^3 /
\text{log }n)$ time complexity. 
Recently, \citet{2006.01491anders} gave an independent result on the
BMM-hardness of inclusion-based points-to analysis via different proof
techniques. \citet{2006.01491anders}'s reduction enables improved
algorithms on restricted cases. Our result based on $D_1$-reachability sheds
light on the hardness of analyzing unrestricted non-trivial C-style programs as
well as general
demand-driven interprocedural program-analysis problems. The two results offer complementary insights.

%\todo{cite icfp parsing paper}
%\todo{def 2.3 should use s and t, are we doing s-t or all-pairs?}

%% \footnote{
%% For instance, the work by
%%   \citet{SridharanF09the} mentions that ``Andersen's analysis can be formulated as a dynamic transitive closure problem
%% over a flow graph.'' 
%% The work by \citet{HardekopfL07the} summarizes that ``Faehndrich \emph{et al.} represent the constraints using a graph and
%% formulate the problem as computing the dynamic transitive closure
%% of that graph.''}

\section{Conclusion}\label{sec1:conc}
This paper has presented a formal proof to establish the hardness of
inclusion-based points-to analysis. Our result shows that it is unlikely to have
a truly subcubic time algorithm for inclusion-based points-to analysis in
practice. We have also discussed two interesting implications based on our reduction.

%% Bibliography
%\newpage
\bibliography{ref}

\begin{thebibliography}{48}
\providecommand{\natexlab}[1]{#1}
\providecommand{\url}[1]{\texttt{#1}}
\expandafter\ifx\csname urlstyle\endcsname\relax
  \providecommand{\doi}[1]{doi: #1}\else
  \providecommand{\doi}{doi: \begingroup \urlstyle{rm}\Url}\fi

\bibitem[Abboud and Williams(2014)]{AbboudW14popular}
Amir Abboud and Virginia~Vassilevska Williams.
\newblock Popular conjectures imply strong lower bounds for dynamic problems.
\newblock In \emph{FOCS}, pages 434--443, 2014.

\bibitem[Abboud et~al.(2018)Abboud, Backurs, and Williams]{AbboudBW18if}
Amir Abboud, Arturs Backurs, and Virginia~Vassilevska Williams.
\newblock If the current clique algorithms are optimal, so is valiant's parser.
\newblock \emph{{SIAM} J. Comput.}, 47\penalty0 (6):\penalty0 2527--2555, 2018.

\bibitem[Aho et~al.(1986)Aho, Sethi, and Ullman]{Aho:86compiler}
A.~V. Aho, R.~Sethi, and J.~D. Ullman.
\newblock \emph{Compilers principles, techniques, and tools}.
\newblock Addison-Wesley, Reading, MA, 1986.

\bibitem[Aho et~al.(1968)Aho, Hopcroft, and Ullman]{AhoHU68time}
Alfred~V. Aho, John~E. Hopcroft, and Jeffrey~D. Ullman.
\newblock Time and tape complexity of pushdown automaton languages.
\newblock \emph{Information and Control}, 13\penalty0 (3):\penalty0 186--206,
  1968.

\bibitem[Andersen(1994)]{andersen1994program}
L.O. Andersen.
\newblock \emph{Program analysis and specialization for the C programming
  language}.
\newblock PhD thesis, University of Cophenhagen, 1994.

\bibitem[Arlazarov et~al.(1970)Arlazarov, Dinic, Kronrod, and
  Faradzev]{arlazarov1970on}
V.L. Arlazarov, E.A. Dinic, M.A. Kronrod, and I.A. Faradzev.
\newblock On economic construction of the transitive closure of a directed
  graph.
\newblock \emph{Soviet Mathematics Doklady}, 11:\penalty0 1209--1210, 1970.

\bibitem[Ballard et~al.(2012)Ballard, Demmel, Holtz, and
  Schwartz]{BallardDHS12graph}
Grey Ballard, James Demmel, Olga Holtz, and Oded Schwartz.
\newblock Graph expansion and communication costs of fast matrix
  multiplication.
\newblock \emph{J. {ACM}}, 59\penalty0 (6):\penalty0 32:1--32:23, 2012.

\bibitem[Bansal and Williams(2009)]{BansalW09regular}
Nikhil Bansal and Ryan Williams.
\newblock Regularity lemmas and combinatorial algorithms.
\newblock In \emph{Annual {IEEE} Symposium on Foundations of Computer Science
  (FOCS)}, pages 745--754, 2009.

\bibitem[Blackshear et~al.(2011)Blackshear, Chang, Sankaranarayanan, and
  Sridharan]{BlackshearCSS11the}
Sam Blackshear, Bor-Yuh~Evan Chang, Sriram Sankaranarayanan, and Manu
  Sridharan.
\newblock The flow-insensitive precision of {Andersen}'s analysis in practice.
\newblock In \emph{SAS}, pages 60--76, 2011.

\bibitem[Bodden(2012)]{Bodden12ainter}
Eric Bodden.
\newblock Inter-procedural data-flow analysis with {IFDS/IDE} and soot.
\newblock In \emph{Proceedings of the {ACM} {SIGPLAN} International Workshop on
  State of the Art in Java Program analysis (SOAP)}, pages 3--8, 2012.

\bibitem[Chatterjee et~al.(2016)Chatterjee, Dvor{\'{a}}k, Henzinger, and
  Loitzenbauer]{ChatterjeeDHL16model}
Krishnendu Chatterjee, Wolfgang Dvor{\'{a}}k, Monika Henzinger, and Veronika
  Loitzenbauer.
\newblock Model and objective separation with conditional lower bounds:
  Disjunction is harder than conjunction.
\newblock In \emph{Proceedings of the 31st Annual {ACM/IEEE} Symposium on Logic
  in Computer Science (LICS)}, pages 197--206, 2016.

\bibitem[Chatterjee et~al.(2018)Chatterjee, Choudhary, and
  Pavlogiannis]{ChatterjeeCP18opt}
Krishnendu Chatterjee, Bhavya Choudhary, and Andreas Pavlogiannis.
\newblock Optimal dyck reachability for data-dependence and alias analysis.
\newblock \emph{{PACMPL}}, 2\penalty0 ({POPL}):\penalty0 30:1--30:30, 2018.

\bibitem[Chatterjee et~al.(2019)Chatterjee, Goharshady, Goyal, Ibsen{-}Jensen,
  and Pavlogiannis]{ChatterjeeGGIP19faster}
Krishnendu Chatterjee, Amir~Kafshdar Goharshady, Prateesh Goyal, Rasmus
  Ibsen{-}Jensen, and Andreas Pavlogiannis.
\newblock Faster algorithms for dynamic algebraic queries in basic rsms with
  constant treewidth.
\newblock \emph{{ACM} Trans. Program. Lang. Syst.}, 41\penalty0 (4):\penalty0
  23:1--23:46, 2019.

\bibitem[Chaudhuri(2008)]{Chaudhuri08subcubic}
Swarat Chaudhuri.
\newblock Subcubic algorithms for recursive state machines.
\newblock In \emph{POPL}, pages 159--169, 2008.

\bibitem[F{\"a}hndrich et~al.(1998)F{\"a}hndrich, Foster, Su, and
  Aiken]{FahndrichFSA98partial}
Manuel F{\"a}hndrich, Jeffrey~S. Foster, Zhendong Su, and Alexander Aiken.
\newblock Partial online cycle elimination in inclusion constraint graphs.
\newblock In \emph{PLDI}, pages 85--96, 1998.

\bibitem[Fischer and Meyer(1971)]{FischerM71boolean}
Michael~J. Fischer and Albert~R. Meyer.
\newblock Boolean matrix multiplication and transitive closure.
\newblock In \emph{Proceedings of the Symposium on Foundations of Computer
  Science (FOCS 1971)}, pages 129--131, 1971.

\bibitem[Gall(2014)]{Gall14apowers}
Fran{\c{c}}ois~Le Gall.
\newblock Powers of tensors and fast matrix multiplication.
\newblock In \emph{International Symposium on Symbolic and Algebraic
  Computation (ISSAC)}, pages 296--303, 2014.

\bibitem[Hardekopf and Lin(2007)]{HardekopfL07the}
Ben Hardekopf and Calvin Lin.
\newblock The ant and the grasshopper: fast and accurate pointer analysis for
  millions of lines of code.
\newblock In \emph{PLDI}, pages 290--299, 2007.

\bibitem[Heintze and McAllester(1997)]{HeintzeM97on}
Nevin Heintze and David~A. McAllester.
\newblock On the cubic bottleneck in subtyping and flow analysis.
\newblock In \emph{LICS}, pages 342--351, 1997.

\bibitem[Heintze and Tardieu(2001)]{HeintzeT01aultra}
Nevin Heintze and Olivier Tardieu.
\newblock Ultra-fast aliasing analysis using {CLA}: A million lines of {C} code
  in a second.
\newblock In \emph{PLDI}, pages 254--263, 2001.

\bibitem[Henzinger et~al.(2015)Henzinger, Krinninger, Nanongkai, and
  Saranurak]{HenzingerKNS15unifying}
Monika Henzinger, Sebastian Krinninger, Danupon Nanongkai, and Thatchaphol
  Saranurak.
\newblock Unifying and strengthening hardness for dynamic problems via the
  online matrix-vector multiplication conjecture.
\newblock In \emph{Proceedings of the Forty-Seventh Annual {ACM} on Symposium
  on Theory of Computing (STOC)}, pages 21--30, 2015.

\bibitem[Horwitz(1997)]{Horwitz97precise}
Susan Horwitz.
\newblock Precise flow-insensitive may-alias analysis is {NP}-hard.
\newblock \emph{ACM Trans. Program. Lang. Syst.}, 19\penalty0 (1):\penalty0
  1--6, 1997.

\bibitem[Lee(2002)]{Lee02fast}
Lillian Lee.
\newblock Fast context-free grammar parsing requires fast boolean matrix
  multiplication.
\newblock \emph{J. ACM}, 49\penalty0 (1):\penalty0 1--15, 2002.

\bibitem[McAllester(2002)]{McAllester02on}
David~A. McAllester.
\newblock On the complexity analysis of static analyses.
\newblock \emph{J. {ACM}}, 49\penalty0 (4):\penalty0 512--537, 2002.

\bibitem[Midtgaard and {Van Horn}(2009)]{Midtgaard-VanHorn:RR09}
Jan Midtgaard and David {Van Horn}.
\newblock Subcubic control flow analysis algorithms.
\newblock Computer Science Research Report 125, Roskilde University, Roskilde,
  Denmark, May 2009.

\bibitem[Patrascu(2010)]{Patrascu10towards}
Mihai Patrascu.
\newblock Towards polynomial lower bounds for dynamic problems.
\newblock In \emph{STOC}, pages 603--610, 2010.

\bibitem[Pavlogiannis(2020)]{2006.01491anders}
Andreas Pavlogiannis.
\newblock The fine-grained complexity of andersen's pointer analysis, 2020.

\bibitem[Pratikakis et~al.(2006)Pratikakis, Foster, and
  Hicks]{PratikakisFH06exist}
Polyvios Pratikakis, Jeffrey~S. Foster, and Michael Hicks.
\newblock Existential label flow inference via cfl reachability.
\newblock In \emph{SAS}, pages 88--106, 2006.

\bibitem[Ramalingam(1994)]{Ramalingam94the}
G.~Ramalingam.
\newblock The undecidability of aliasing.
\newblock \emph{ACM Trans. Program. Lang. Syst.}, 16\penalty0 (5):\penalty0
  1467--1471, 1994.

\bibitem[Rehof and F{\"a}hndrich(2001)]{RehofF01type}
Jakob Rehof and Manuel F{\"a}hndrich.
\newblock Type-base flow analysis: from polymorphic subtyping to
  {CFL}-reachability.
\newblock In \emph{POPL}, pages 54--66, 2001.

\bibitem[Reps(1995)]{Reps95shape}
Thomas~W. Reps.
\newblock Shape analysis as a generalized path problem.
\newblock In \emph{PEPM}, pages 1--11, 1995.

\bibitem[Reps(1998)]{Reps97program}
Thomas~W. Reps.
\newblock Program analysis via graph reachability.
\newblock \emph{Information {\&} Software Technology}, 40\penalty0
  (11-12):\penalty0 701--726, 1998.

\bibitem[Reps et~al.(1994)Reps, Horwitz, Sagiv, and Rosay]{RepsHSR94speeding}
Thomas~W. Reps, Susan Horwitz, Shmuel Sagiv, and Genevieve Rosay.
\newblock Speeding up slicing.
\newblock In \emph{SIGSOFT FSE}, pages 11--20, 1994.

\bibitem[Reps et~al.(1995)Reps, Horwitz, and Sagiv]{RepsHS95precise}
Thomas~W. Reps, Susan Horwitz, and Shmuel Sagiv.
\newblock Precise interprocedural dataflow analysis via graph reachability.
\newblock In \emph{POPL}, pages 49--61, 1995.

\bibitem[Rytter(1985)]{Rytter85fast}
Wojciech Rytter.
\newblock Fast recognition of pushdown automaton and context-free languages.
\newblock \emph{Information and Control}, 67\penalty0 (1-3):\penalty0 12--22,
  1985.

\bibitem[Sridharan and Fink(2009)]{SridharanF09the}
Manu Sridharan and Stephen~J. Fink.
\newblock The complexity of andersen's analysis in practice.
\newblock In \emph{SAS}, pages 205--221, 2009.

\bibitem[Sridharan et~al.(2005)Sridharan, Gopan, Shan, and
  Bod\'{\i}k]{SridharanGSB05demand}
Manu Sridharan, Denis Gopan, Lexin Shan, and Rastislav Bod\'{\i}k.
\newblock Demand-driven points-to analysis for {Java}.
\newblock In \emph{OOPSLA}, pages 59--76, 2005.

\bibitem[Steensgaard(1996)]{Steensgaard96points}
Bjarne Steensgaard.
\newblock Points-to analysis in almost linear time.
\newblock In \emph{POPL}, pages 32--41, 1996.

\bibitem[Strassen(1969)]{strassen1969gaussian}
Volker Strassen.
\newblock Gaussian elimination is not optimal.
\newblock \emph{Numerische mathematik}, 13\penalty0 (4):\penalty0 354--356,
  1969.

\bibitem[Williams(2012)]{Williams12mult}
Virginia~Vassilevska Williams.
\newblock Multiplying matrices faster than coppersmith-winograd.
\newblock In \emph{Proceedings of the 44th Symposium on Theory of Computing
  Conference (STOC)}, pages 887--898, 2012.

\bibitem[Williams(2018)]{ww18onsome}
Virginia~Vassilevska Williams.
\newblock On some fine-grained questions in algorithms and complexity.
\newblock In \emph{Proceedings of the International Congress of
  Mathematicians}, pages 3431--3475, 2018.

\bibitem[Williams and Williams(2018)]{WilliamsW18subcubic}
Virginia~Vassilevska Williams and R.~Ryan Williams.
\newblock Subcubic equivalences between path, matrix, and triangle problems.
\newblock \emph{J. {ACM}}, 65\penalty0 (5):\penalty0 27:1--27:38, 2018.

\bibitem[Yannakakis(1990)]{Yannakakis90graph}
Mihalis Yannakakis.
\newblock Graph-theoretic methods in database theory.
\newblock In \emph{Proceedings of the Ninth {ACM} {SIGACT-SIGMOD-SIGART}
  Symposium on Principles of Database Systems (PODS 1990)}, pages 230--242,
  1990.

\bibitem[Yu(2015)]{Yu15an}
Huacheng Yu.
\newblock An improved combinatorial algorithm for boolean matrix
  multiplication.
\newblock In \emph{42nd International Colloquium on Automata, Languages, and
  Programming (ICALP)}, pages 1094--1105, 2015.

\bibitem[Zhang and Su(2017)]{ZhangS17-context}
Qirun Zhang and Zhendong Su.
\newblock Context-sensitive data-dependence analysis via linear conjunctive
  language reachability.
\newblock In \emph{POPL}, pages 344--358, 2017.

\bibitem[Zhang et~al.(2013)Zhang, Lyu, Yuan, and Su]{qirun13fast}
Qirun Zhang, Michael~R. Lyu, Hao Yuan, and Zhendong Su.
\newblock Fast algorithms for {Dyck-CFL}-reachability with applications to
  alias analysis.
\newblock In \emph{PLDI}, pages 435--446, 2013.

\bibitem[Zhang et~al.(2014)Zhang, Xiao, Zhang, Yuan, and Su]{ZhangXZYS14eff}
Qirun Zhang, Xiao Xiao, Charles Zhang, Hao Yuan, and Zhendong Su.
\newblock Efficient subcubic alias analysis for {C}.
\newblock In \emph{OOPSLA}, pages 829--845, 2014.

\bibitem[Zheng and Rugina(2008)]{ZhengR08demand}
Xin Zheng and Radu Rugina.
\newblock Demand-driven alias analysis for {C}.
\newblock In \emph{POPL}, pages 197--208, 2008.

\end{thebibliography}

\end{document}